\colorlet{shadecolor}{gray!25}
\newcommand{\newoperator}[3]{\newcommand*{#1}{\mathop{#2}#3}}
\newcommand{\renewoperator}[3]{\renewcommand*{#1}{\mathop{#2}#3}}
\newcommand{\mA}{A}
\newcommand{\mB}{B}
\newcommand{\mC}{C}
\newcommand{\mD}{D}
\newcommand{\mE}{E}
\newcommand{\ve}{e}
\newcommand{\mF}{F}
\newcommand{\mG}{G}
\newcommand{\mH}{H}
\newcommand{\mI}{I}
\newcommand{\mJ}{J}
\newcommand{\mK}{K}
\newcommand{\mM}{M}
\newcommand{\mN}{N}
\newcommand{\mP}{P}
\newcommand{\mQ}{Q}
\newcommand{\mR}{R}
\newcommand{\mS}{S}
\newcommand{\vs}{s}
\newcommand{\mT}{T}
\newcommand{\mU}{U}
\newcommand{\vu}{u}
\newcommand{\mV}{V}
\newcommand{\vv}{v}
\newcommand{\mW}{W}
\newcommand{\vw}{w}
\newcommand{\mX}{X}
\newcommand{\vx}{x}
\newcommand{\mY}{Y}
\newcommand{\vy}{y}
\newcommand{\mZ}{Z}
\newcommand{\valpha}{\alpha}
\newcommand{\vbeta}{\beta}
\newcommand{\vdelta}{\delta}
\newcommand{\vzeta}{\zeta}
\newcommand{\veta}{\eta}
\newcommand{\vtheta}{\theta}
\newcommand{\vmu}{\mu}
\newcommand{\vnu}{\nu}
\newcommand{\mGamma}{\varGamma}
\newcommand{\mTheta}{\varTheta}
\newcommand{\mSigma}{\varSigma}
\renewoperator{\Re}{\mathrm{Re}}{\nolimits}
\renewoperator{\Im}{\mathrm{Im}}{\nolimits}
\newcommand{\rd}{\@ifnextchar^{\DIfF}{\DIfF^{}}}
\def\DIfF^#1{%
   \mathop{\mathrm{\mathstrut d}}%
   \nolimits^{#1}\gobblespace}
\def\gobblespace{\futurelet\diffarg\opspace}
\def\opspace{%
   \let\DiffSpace\!%
   \ifx\diffarg(%
   \let\DiffSpace\relax
   \else
   \ifx\diffarg[%
   \let\DiffSpace\relax
   \else
   \ifx\diffarg\{%
   \let\DiffSpace\relax
   \fi\fi\fi\DiffSpace}
\newcommand{\E}{\operatorname{E}}
\newcommand{\Var}{\operatorname{Var}}
\newcommand{\Cov}{\operatorname{Cov}}
\newoperator{\ip}{\mathrm{int}}{\nolimits}
\newcommand{\plim}{\operatorname{plim}}
\newcommand{\tr}{\operatorname{tr}}
\renewcommand{\vec}{\operatorname{vec}}
\newcommand{\vech}{\operatorname{vech}}
\newcommand{\pto}{\stackrel{p}{\longrightarrow}}
\newcommand{\dto}{\stackrel{d}{\longrightarrow}}
\newcommand{\indic}[0]{\mathbbm{1}(b \geq 1)}
\newcommand{\indico}[0]{\mathbbm{1}(b_0 \geq 1)}
\newtheorem{theorem}{Theorem}[section]
\newtheorem{lemma}[theorem]{Lemma}
\newtheorem{corollary}[theorem]{Corollary}
\newcommand{\beq}{\begin{equation}}
\newcommand{\eeq}{\end{equation}}
\newcommand{\bal}{\begin{align*}}
\newcommand{\eal}{\end{align*}}
\newcommand{\bvec}{\begin{pmatrix}}
\newcommand{\evec}{\end{pmatrix}}
\newcommand{\bmat}{\begin{bmatrix}}
\newcommand{\emat}{\end{bmatrix}}
\newcommand{\bsmat}{\begin{smallmatrix}}
\newcommand{\esmat}{\end{smallmatrix}}
\title{Fractional trends in unobserved components models}
\author[1,2]{Tobias Hartl\footnote{Corresponding author. E-Mail: tobias1.hartl@ur.de\\
		The authors thank Uwe Hassler, Morten {\O}. Nielsen, Christoph Rust, the participants of the econometric seminar in Nuremberg, the department seminar at the Christian Albrechts University Kiel, the DAGStat conference 2019 in Munich, the workshop on high-dimensional time series in economics and finance 2019 in Vienna, the Annual Meeting of the German Statistical Society 2019 in Trier, the Annual Meeting of the German Economic Society 2019 in Leipzig, the Seminar on International Economic Policy at the University of Zurich, the International Conference on Computational and Financial Econometrics 2019 in London, the Symposium in Honor of Michael Hauser at WU Vienna, and the Standing Field Committee in Econometrics of the German Economic Society for many valuable comments. Support through the projects TS283/1-1 and WE4847/4-1 financed by the German Research Foundation (DFG) is gratefully acknowledged.}}
\author[1]{Rolf Tschernig}
\author[1,2]{Enzo Weber}
\affil[1]{University of Regensburg, 93053 Regensburg, Germany}
\affil[2]{Institute for Employment Research (IAB), 90478 Nuremberg, Germany}
\date{May 2020}
\begin{document}
\maketitle

\thispagestyle{empty}
\setcounter{page}{0}

\paragraph{\bf Abstract.}
We develop a generalization of unobserved components models that allows for a wide range of long-run dynamics by modelling the permanent component as a fractionally integrated process. 
The model allows for cointegration, does not require stationarity, and can be cast in state space form. 
We derive the Kalman filter estimator for the common fractionally integrated component and establish consistency and asymptotic (mixed) normality of the maximum likelihood estimator.  
We apply the model to extract a common long-run component of three US inflation measures, where we show that the $I(1)$ assumption is likely to be violated for the common trend. 

\paragraph{\bf Keywords.}
long memory, unobserved components, fractional cointegration, Kalman filter, state space models

\paragraph{\bf JEL-Classification.}

C32, C51, E31

\newpage

\section{Introduction}
Unobserved components (UC) models are widely used to decompose time series into latent components of different persistence. Applications in economics include, among others, trend-cycle decompositions, the analysis of long-run equilibrium relations, testing for mean reversion e.g.\ in asset returns, and forecasting \citep[see][for an overview]{KimNel1999, KooShe2015}.

\noindent
Despite their wide spread, current UC models exhibit two major limitations. 
First, they require a priori assumptions about the integration order of a series and, therefore, an endogenous treatment of the long-run dynamic characteristics is infeasible. 
And second, they restrict the long-run component to be $I(0)$, $I(1)$, or $I(2)$. Statistical inference about the degree of persistence of a long-run component is then limited to prior unit root testing, ignoring the non-standard behavior of economic series that exhibit long memory and hindering the estimation of the integration order on a continuous support jointly with the other parameters of the model. Furthermore, model selection uncertainty from prior unit root testing is not taken into account. Finally, misspecification of the integration order may pollute the estimates of permanent and transitory components and bias the variance estimates for the permanent and transitory shocks. 

\noindent
While for the Beveridge-Nelson decomposition a generalization to ARFIMA processes was derived by \cite{AriMar2004} and \cite{Pro2016}, and low-frequency transformations that allow for fractional integration have been proposed by \cite{MueWat2018}, UC models lack a generalization to fractionally integrated processes. 
Deriving such a generalization is particularly challenging: 
It requires to study the convergence properties of the Kalman filter through which the unobserved components are estimated when fractional integration is allowed. 
In addition, to enable feasible estimation for time series of length $n$ with $n$ large,
 a modification of the Kalman filter is necessary, as the state vector of fractionally integrated processes is of dimension $n+1$, thus making the standard Kalman filter inapplicable from a computational perspective. 
Moreover, the asymptotic theory of the maximum likelihood estimator, that is utilized to estimate the model parameters, has to be derived.
 So far, asymptotic results are only available for the $I(1)$ case considered in \cite{ChaMilPa2009}, where in contrast to our model the integration order is assumed to be known.
Providing the theoretical analysis required for fractionally integrated UC models together with a computationally feasible estimator for the latent components is the core of this paper. 

\noindent
We contribute to the literature by deriving a fractionally integrated unobserved components model that allows for a flexible treatment of the long-run dynamic characteristics of  multivariate stochastic processes by letting the common integration order to take values on a set of positive real numbers including zero. Since we model a $p$-dimensional vector of observable random variables $\{\vy_t\}_{t=1}^n$ as a linear function of a scalar latent variable $x_t$ that is fractionally integrated of order $b$, our model exhibits $p-1$ fractional cointegration relations. Furthermore, our model can be used to decompose a set of variables into long- and short-run components, where the latter components are $I(0)$. 

\noindent
The model is cast in state space form and  allows for 
asymptotically stationary and nonstationary data. 
Although an exact state space representation of our model exists, estimating a latent fractionally integrated component via the Kalman filter is computationally infeasible for time series with sample size $n$ large. Therefore, we derive a modified version of the Kalman filter that is based on a truncated state space representation of our fractionally integrated unobserved components model while correcting the observable variables for the approximation error that results from the truncation. Our modified Kalman filter yields the same prediction error and likelihood function as the standard Kalman filter that is based on the full state space representation of a fractionally integrated process but greatly reduces the computing time by keeping the state dimension manageable. E.g.\ for our application in section \ref{Ch:4}, the modified Kalman filter is found to be about $150$ times faster than the standard Kalman filter.

\noindent
The second main technical contribution of our paper is to establish the asymptotic theory for the maximum likelihood estimator of our fractionally integrated unobserved components model. 
Since the asymptotic properties of the objective function depend on the fractional integration order $b_0$ of the data-generating process and differ for $b_0 < 1/2$ and $b_0 > 1/2$, we consider the asymptotically stationary case and the nonstationary case separately, where in each case the objective function of the maximum likelihood estimator uniformly converges. 
While a central limit theorem for martingale difference sequences holds for $b_0 < 1/2$ and yields  asymptotic normality of the maximum likelihood estimator, the nonstationary case is more involved.
Here, we first show that the prediction error variance of the Kalman filter converges. Next, we derive a  functional central limit theorem for the relevant partial sums of the score function that include fractional processes. From the functional central limit theorem the convergence rates of the estimates follow directly. Finally, we prove that the maximum likelihood estimator is asymptotically normally distributed, while a rotation of the parameter estimators that corresponds to the cointegrating matrix converges at rate $n^{b_0}$ to a mixed normal distribution, thus reflecting the behavior of cointegration models. From these results, it follows for the model parameters that standard inference results remain valid when a fractionally integrated component is introduced.

\noindent
As an empirical application, we consider the estimation of unobserved long-run inflation by extracting a common fractional component from a set of price measures for the US. For inflation, there exists substantial evidence suggesting that the series are fractionally integrated \citep[cf.\ eg.][]{HasWol1995, TscWebWe2013}.
We confirm such findings and estimate the integration order of unobserved long-run inflation to be $0.476$. We also show that misspecifying the integration order to be one yields estimated fundamental shocks that are antipersistent, which violates one important assumption of unobserved components models.

\noindent
The structure of the paper is as follows. Section \ref{Ch:2} details the fractionally integrated unobserved components model and discusses the estimation of the conditional expected value of the scalar latent variable that is allowed to be fractionally integrated. 
Section \ref{Ch:3} considers the maximum likelihood estimator for our model. By generalizing the proofs of \cite{ChaMilPa2009} for a common $I(1)$ component to the fractional case, we are able to show consistency, to derive the convergence rates for different parameters and to establish a central limit theorem for the maximum likelihood estimator. 
In section \ref{Ch:4} the model is applied to extract a common long-run component from different US inflation measures. Section \ref{Ch:5} concludes.
All proofs are collected in the appendix.

\section{A setup for common fractional components} \label{Ch:2}
In this section we first derive the fractionally integrated unobserved components model and state the necessary assumptions for identification. Next, we cast the model in state space form, from which we derive the Kalman filter estimator for the latent common long-run component, thereby generalizing the permanent-transitory decomposition of \citet{ChaMilPa2009}. Furthermore, since the Kalman filter estimator based on the exact state space representation is computationally infeasible for long time series, we propose a modified Kalman filter estimator that is based on a finite ARMA approximation of the fractionally integrated process but directly corrects for the resulting approximation error. In corollary \ref{th:3} we show that the modified estimator yields the same prediction error as the estimator that is based on the exact state space representation and, therefore, has the same likelihood but keeps the state dimension manageable.

\noindent
To begin with, consider the unobserved components model
\begin{align}
	\vy_t &= \vbeta x_t + \vu_t,  \label{dgp:1} \qquad
	\Delta^b_+ x_t = \eta_t, \qquad t=1,...,n,
\end{align}
where $\vy_t$ is a $p$-dimensional observable time series, $x_t$ is a scalar latent variable that is fractionally integrated of order $b$, $x_t \sim I(b)$, $b \in D$, $D = \{d \in \mathbb{R}\ | \ 0 \leq d < 3/2,\ d \neq 1/2\}$, $\vbeta$ is a $p\times1$ vector of factor loadings that are unobserved, $\vu_t \sim \mathrm{NID}(0, \Sigma)$ and $\eta_t\sim \mathrm{NID}(0, 1)$ are iid errors of dimension $p$ and $1$ that are independent and $\mSigma$ is diagonal and has full rank. We collect the parameters in $\vtheta = (\beta', (\vech \Sigma)', b)' \in \mTheta$.  
The model may be interpreted as a system where $p$ observable variables $\vy_t$ are driven by one common, fractionally integrated  stochastic trend $x_t$, such that the whole system is $I(b)$ and $p-1$ cointegration relations exist. The true parameters of the data-generating process are denoted as $\vbeta_0$, $\Sigma_0$, and $b_0$. They are collected in $\vtheta_0=(\vbeta_0', (\vech \mSigma_0)', b_0)' \in \mTheta$. We exclude the singular point $b_0=1/2$ since inference is different for $b_0 < 1/2$, where the maximum likelihood estimator is asymptotically Gaussian, and $b_0>1/2$, where a rotation of the parameter estimator for $\vbeta$ is asymptotically mixed normal, as will be shown in section \ref{Ch:3}. The same restriction applies to other cointegrated models \citep[cf.\ e.g.][]{JohNie2012}. Since we impose $\Var(\eta_t)=1$, $\mSigma$ diagonal and of full rank, the model is identified up to a sign for $\vbeta$. Therefore we restrict the first entry to be positive for unique identification.

\noindent
The fractional difference operator $\Delta^b$ is defined as 
\begin{align*}
	\Delta^{b} &= (1-L)^{b} = \sum_{j = 0}^{\infty}\pi_{j}(b)L^{j},  \qquad
	\pi_{j}(b) = 
	\begin{cases}
		\frac{j-b-1}{j}\pi_{j-1}(b) &  j = 1, 2, ..., \\ 
		1										&   j = 0,
	\end{cases} 
\end{align*}
and a $+\,$--subscript amounts to a truncation of an operator at $t \leq 0$, i.e.\ for an arbitrary process $z_t$, $\Delta^b_+ z_t = \sum_{j=0}^{t-1}\pi_j(b) L^j z_t$ \citep[see e.g.][]{Joh2008}. 
For $b \in \mathbb{N}_0$ the fractional long-run component nests the standard integer integrated specifications, whereas $b \in D$ adds flexibility to the weighting of past shocks. Throughout the paper, we adopt the type II definition of fractional integration \citep{MarRob1999} that assumes zero starting values for all fractional processes, and, as a consequence, allows for a smooth treatment of the asymptotically stationary ($b < 1/2$) and the nonstationary ($b > 1/2$) case. Due to the type II definition the inverse fractional difference  $\Delta_+^{-b}$ exists and is given by $\Delta^{-b}_+z_t = (1-L)^{-b}_+z_t = \sum_{j = 0}^{t-1}\varphi_{j}(b)z_{t-j}$, where $\varphi_j(b)=\pi_j(-b)$ for all $j$.
Finally, we make use of the fractional lag operator introduced in \cite{Joh2008} that is defined as $L_b = 1-\Delta_+^b$ and nests the standard lag operator $L_1 = L$ for $b = 1$. {Note that $L_bz_t$ preserves the integration order of a random variable $z_t$ since $b \in D$ is restricted to be non-negative.

\noindent
Let $\indic$ be an indicator function that becomes one if $b \geq 1$ and zero otherwise 
and let $d = b - \indic$ denote the mean-reverting fraction of a long memory process. Define $\Delta^{-d}_+ =\sum_{j=0}^{t-1} \varphi_j(d) L^j$ and $\Delta^{d}_+ = \sum_{j=0}^{t-1} \pi_j(d) L^j$ as a function of $d$, such that $\Delta^{-b}_+=(1-L)_+^{-\indic}\sum_{j=0}^{t-1} \varphi_j(d)L^j$ distinguishes between an integer integration order and the fractionally integrated polynomial with $d \in [0, 1)$. For notational convenience we omit $d$ in the binomial expansion of the fractional difference operators $\Delta_+^d$, $\Delta_+^{-d}$ and denote $\pi_j$, $\varphi_j$ as the $j$-th coefficient of $\Delta_+^d$, $\Delta_+^{-d}$ if not stated different explicitly. 
Then $x_t$ in \eqref{dgp:1} is represented as 
\begin{align} \label{dgp:1b}
x_t   = \indic {x}_{t-1} + \sum_{j=0}^{t-1}\varphi_j \eta_{t-j}.
\end{align}
\noindent 
Given the parameters $b$, $\vbeta$, and $\mSigma$, the exact state space representation of our model \eqref{dgp:1} is given by 
\begin{align*}
{\valpha}_{t+1} &= {\mT} {\valpha}_t +{\mR} \veta_{t+1}, &&
{\vy}_t = {\mZ} {\valpha}_t + \vu_t,
\end{align*}
where
\begin{align*}
{\mT}=\bmat \indic & 1 & 0 & \cdots & 0 \\ 
0 & 0 & 1 & \cdots & 0 \\
\vdots & \vdots & \vdots & \ddots & \vdots \\
0 & 0 & 0 & \cdots & 1 \\
0 & 0 & 0 & \cdots & 0
\emat, &&
{\mR} = \bvec 1 \\ \varphi_1 \\ \vdots \\ \varphi_{n-1} \\ \varphi_{n} \evec ,  &&&
{\valpha}_t = \bvec {x}_t \\ \varphi_1 \eta_t + \cdots + \varphi_n \eta_{t-n+1} \\
\vdots \\ \varphi_{n-1} \eta_t + \varphi_n \eta_{t-1} \\ \varphi_{n} \eta_t \evec,
\end{align*}
${\mZ}= \bmat \vbeta & 0 &\cdots & 0 \emat$ and where $\eta_t=0$ for all $t\leq 0$ due to \eqref{dgp:1b}.

\noindent
Let $\mathcal{F}_t$ be the $\sigma$-field generated by the observable variables $y_1$, ..., $y_t$. Furthermore, let $z_{t|s}=\mathrm{E}_\theta(z_t |\mathcal{F}_s)$ for $z=x, \alpha$, and ${\mP}_{t|s} = \mathrm{Var}_\theta({\valpha}_t|\mathcal{F}_s)$ with ${\omega}^{(i,j)}_{t}$ as its $(i, j)$-th entry for $s=t-1$. The $\vtheta$-subscript denotes that expectations are taken given a parameter vector $\vtheta$, and $\E_{\vtheta_0}(\vy_t | \mathcal{F}_{t-1}) = \E(\vy_t | \mathcal{F}_{t-1})$. 
Additionally, let ${\alpha}_{t|t-1}^{(j)}$ denote the $j$-th entry of ${\valpha}_{t|t-1}$.  
The prediction and updating steps of the Kalman filter for model \eqref{dgp:1} given the observable data and the parameter vector $\vtheta$ are 
\begin{align}
{\vv}_t(\theta) &= \vy_t - \mathrm{E}_\theta({\vy}_t | \mathcal{F}_{t-1}) = \vy_t - \vbeta \mathrm{E}_\theta({x}_t | \mathcal{F}_{t-1}) = \vy_t - \vbeta {x}_{t|t-1}, \label{tilde_v_t} \\
{\mF}_t &= \mathrm{Var}_\theta ({\vv}_t(\vtheta ) | \mathcal{F}_{t-1}) = \vbeta \mathrm{Var}_\theta(x_t|\mathcal{F}_{t-1})\vbeta' + \mSigma = \vbeta w_t^{(1,1)}  \vbeta' + \mSigma \label{eq:F_t}, \\
{\valpha}_{t+1|t}&={\mT} {\valpha}_{t|t-1} + {\mT} {\mP}_{t|t-1} {\mZ}'\mF_t^{-1} {\vv}_t(\theta), \label{ap:up:x}\\
{\mP}_{t+1|t} &= {\mT} {\mP}_{t|t-1}{\mT}' - {\mT} {\mP}_{t|t-1}{\mZ}' {\mF}_t^{-1}{\mZ} {\mP}_{t|t-1} {\mT}' + {R}{R}'. \label{tilde_P_t+1_t}
\end{align}
The following theorem states the conditional expectation of the latent variable $x_t$ given $\mathcal{F}_{t-1}$ 
and generalizes the results of \cite{ChaMilPa2009} for $I(1)$ stochastic trends to the fractional domain.
\begin{theorem}\label{th:1}
	For the exact state space representation of the unobserved components model \eqref{dgp:1} the conditional expectation of the latent variable $x_{t+1}$ is 
	 given by
	\begin{align*}
		{x}_{t+1|t} {=} \frac{\vbeta'\mSigma^{-1}}{\vbeta' \mSigma^{-1}\vbeta}y_{t+1} - z_{t+1}(\vtheta),
	\end{align*}
	where
		\begin{align}
			z_{t+1}(\vtheta) &=  \frac{\vbeta' \mSigma^{-1}  }{\vbeta' \mSigma^{-1}\vbeta }\left( \Delta_+^b \vy_{t+1}  -  \E_\vtheta(\Delta_+^b \vy_{t+1} | \mathcal{F}_t)  \right), \nonumber \\
			\vv_{t+1}(\vtheta) &=\left( I -  \frac{\vbeta \vbeta' \mSigma^{-1}  }{\vbeta' \mSigma^{-1}\vbeta } \right)\vy_{t+1} + \vbeta z_{t+1}(\vtheta). \label{kf:can2}
	\end{align}
\end{theorem}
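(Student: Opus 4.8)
The plan is to argue directly with conditional expectations. By the notation fixed above, $x_{t+1|t}=\E_\vtheta(x_{t+1}|\mathcal{F}_t)$, and by \eqref{tilde_v_t}, $\vv_{t+1}(\vtheta)=\vy_{t+1}-\E_\vtheta(\vy_{t+1}|\mathcal{F}_t)$, so nothing beyond the structure of \eqref{dgp:1} and the Gaussianity (which ensures the Kalman recursion \eqref{ap:up:x} returns exactly these conditional moments) is needed. First I would invert the measurement equation along the GLS direction: since $\mSigma$ is diagonal with full rank and $\vbeta\neq0$, we have $\vbeta'\mSigma^{-1}\vbeta>0$; left-multiplying $\vy_{t+1}=\vbeta x_{t+1}+\vu_{t+1}$ by $\vbeta'\mSigma^{-1}/(\vbeta'\mSigma^{-1}\vbeta)$ and solving for the scalar $x_{t+1}$ gives $x_{t+1}=\frac{\vbeta'\mSigma^{-1}}{\vbeta'\mSigma^{-1}\vbeta}\vy_{t+1}-\frac{\vbeta'\mSigma^{-1}}{\vbeta'\mSigma^{-1}\vbeta}\vu_{t+1}$. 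Taking $\E_\vtheta(\cdot|\mathcal{F}_t)$ and using that $\vu_{t+1}$ is independent of $\mathcal{F}_t=\sigma(\vy_1,\dots,\vy_t)$ with mean zero — because $\{\eta_t\}$ and $\{\vu_t\}$ are independent iid sequences and $\vy_1,\dots,\vy_t$ are functions of $\eta_1,\dots,\eta_t,\vu_1,\dots,\vu_t$ under the type II (zero starting value) convention — yields $x_{t+1|t}=\frac{\vbeta'\mSigma^{-1}}{\vbeta'\mSigma^{-1}\vbeta}\E_\vtheta(\vy_{t+1}|\mathcal{F}_t)$.

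The second step re-expresses $\E_\vtheta(\vy_{t+1}|\mathcal{F}_t)$ through the innovation of the fractionally differenced series. Since $\pi_0(b)=1$, the truncated difference splits as $\Delta_+^b\vy_{t+1}=\vy_{t+1}+\sum_{j=1}^{t}\pi_j(b)\vy_{t+1-j}$, and the finite sum is $\mathcal{F}_t$-measurable; subtracting conditional expectations on both sides gives $\Delta_+^b\vy_{t+1}-\E_\vtheta(\Delta_+^b\vy_{t+1}|\mathcal{F}_t)=\vy_{t+1}-\E_\vtheta(\vy_{t+1}|\mathcal{F}_t)$, i.e.\ $\E_\vtheta(\vy_{t+1}|\mathcal{F}_t)=\vy_{t+1}-\big(\Delta_+^b\vy_{t+1}-\E_\vtheta(\Delta_+^b\vy_{t+1}|\mathcal{F}_t)\big)$. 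Substituting this into the identity from the first step and distributing $\vbeta'\mSigma^{-1}/(\vbeta'\mSigma^{-1}\vbeta)$ produces exactly $x_{t+1|t}=\frac{\vbeta'\mSigma^{-1}}{\vbeta'\mSigma^{-1}\vbeta}\vy_{t+1}-z_{t+1}(\vtheta)$ with $z_{t+1}(\vtheta)$ as defined in the statement.

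For the prediction error, I would start from $\vv_{t+1}(\vtheta)=\vy_{t+1}-\E_\vtheta(\vy_{t+1}|\mathcal{F}_t)=\vy_{t+1}-\vbeta\,x_{t+1|t}$ — again the measurement equation together with $\E_\vtheta(\vu_{t+1}|\mathcal{F}_t)=0$ — and then plug in the expression for $x_{t+1|t}$ just obtained; collecting the $\vy_{t+1}$ terms gives $\vv_{t+1}(\vtheta)=\big(I-\vbeta\vbeta'\mSigma^{-1}/(\vbeta'\mSigma^{-1}\vbeta)\big)\vy_{t+1}+\vbeta z_{t+1}(\vtheta)$, which is \eqref{kf:can2}. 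A useful internal check is that pre-multiplying \eqref{kf:can2} by $\vbeta'\mSigma^{-1}/(\vbeta'\mSigma^{-1}\vbeta)$ returns $z_{t+1}(\vtheta)$, consistent with the two displays and with $\vv_{t+1}(\vtheta)=\vy_{t+1}-\vbeta x_{t+1|t}$.

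The computations above are routine algebra; the only point requiring care is the probabilistic bookkeeping, namely that the Kalman one-step predictor coincides with the conditional mean (Gaussianity) and that $\vu_{t+1}$ is independent of $\mathcal{F}_t$. I expect the cleanest exposition to isolate these two facts at the outset, after which the theorem follows purely by manipulating \eqref{dgp:1} and the binomial expansion of $\Delta_+^b$; no convergence or asymptotic argument enters here.
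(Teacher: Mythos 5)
Your proof is correct, and it reaches the theorem by a genuinely more elementary route than the paper. The paper works constructively through the Kalman recursions: it iterates the state update \eqref{ap:up:x}, rewrites the accumulated sums with the fractional lag operator $L_{b_0}$, inverts the resulting lag polynomial by a geometric series (absorbing leftover $I(0)$ terms into a process $w_{t+1}$), and only at the very end uses the add-and-subtract identities \eqref{vvt:ident}--\eqref{vvt:ident2} to pin down $z_{t+1}(\vtheta)$ as the prediction error of $\Delta_+^b\vy_{t+1}$. You skip the recursions entirely and reduce the theorem to two observations: (i) since $\E_\vtheta(\vu_{t+1}\mid\mathcal{F}_t)=0$, the GLS inversion of the measurement equation gives $x_{t+1|t}=\frac{\vbeta'\mSigma^{-1}}{\vbeta'\mSigma^{-1}\vbeta}\E_\vtheta(\vy_{t+1}\mid\mathcal{F}_t)$, and (ii) because $\pi_0(b)=1$ and $\sum_{j=1}^{t}\pi_j(b)\vy_{t+1-j}$ is $\mathcal{F}_t$-measurable, the innovation of $\Delta_+^b\vy_{t+1}$ coincides with that of $\vy_{t+1}$. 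These two facts, both already implicit in the paper's own conventions (it uses $\E_\vtheta(\vy_t\mid\mathcal{F}_{t-1})=\vbeta x_{t|t-1}$ in \eqref{tilde_v_t}, and the theorem is stated for the conditional expectation, which Gaussianity identifies with the filter output), immediately yield the stated decomposition and \eqref{kf:can2}; your closing steps in fact coincide with the paper's \eqref{vvt:ident}--\eqref{vvt:ident2}. What the paper's longer derivation buys is a set of explicit intermediate representations in terms of the filter quantities $\omega_t^{(i,j)}$, $N_t$, and $z_{1,t}(\vtheta)$, which connect the conditional-expectation characterization to the actual recursions and foreshadow the steady-state analysis used later; none of that is needed for the theorem as stated. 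The one point worth making explicit in your write-up is that the identity in your first step is an identity under the working measure indexed by $\vtheta$ (under which $\vu_{t+1}$ is independent of $\mathcal{F}_t$), not under the true data-generating measure --- but this is exactly the convention the paper adopts for $\E_\vtheta$, so no gap results.
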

\noindent
The proof of theorem \ref{th:1} is contained in appendix \ref{AS2}. There, and in the proofs that follow, we denote $w_t$ as any $I(0)$ process that is a function of the underlying NID distributed shocks $u_{1},...,u_t$, and $\eta_1,...,\eta_t$. Since $\E_{\vtheta_0}(\vy_t |\mathcal{F}_{t-1}) = \E(\vy_t |\mathcal{F}_{t-1})$, and thus $\vv_t(\vtheta_0) = \vy_t - \E(\vy_t | \mathcal{F}_{t-1})$, it follows that $(\vv_t(\vtheta_0), \mathcal{F}_t)$ is a martingale difference sequence (MDS).

\noindent
Theorem \ref{th:1} illustrates that the Kalman filter estimator $x_{t+1|t}$ can be decomposed into a linear combination of $\vy_{t+1}$ that is $I(b_0)$ and an additive component $z_{t+1}(\vtheta)$ where the latter is the prediction error for the fractionally differenced univariate  process $\Delta_+^b \frac{\vbeta' \mSigma^{-1}  }{\vbeta' \mSigma^{-1}\vbeta } y_{t+1} $ given the filtration $\mathcal{F}_t$. The  integration order of this prediction error is given by the following lemma.
\begin{lemma} \label{lemma:z_t_order}
The univariate prediction error $z_{t}(\vtheta)$ is $I(b_0 - b)$ for all $t=1,...,n$.
\end{lemma}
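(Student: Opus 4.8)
The plan is to exploit the explicit formula for $z_{t+1}(\vtheta)$ given in Theorem~\ref{th:1}, namely
\[
z_{t+1}(\vtheta) = \frac{\vbeta' \mSigma^{-1}}{\vbeta' \mSigma^{-1}\vbeta}\left( \Delta_+^b \vy_{t+1} - \E_\vtheta(\Delta_+^b \vy_{t+1} \mid \mathcal{F}_t) \right),
\]
and to track the integration order of each piece. Writing $\vy_t = \vbeta_0 x_t + \vu_t$ with $\Delta_+^{b_0} x_t = \eta_t$, we have $\Delta_+^b \vy_{t+1} = \vbeta_0 \Delta_+^b x_{t+1} + \Delta_+^b \vu_{t+1}$. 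Since $\vu_{t+1}$ is $I(0)$, applying $\Delta_+^b$ with $b \geq 0$ leaves it (asymptotically) $I(-b)$, hence it is at worst $I(0)$ and contributes nothing of higher order; moreover its one-step prediction error is again of the same order. The essential term is $\Delta_+^b x_{t+1}$. Using $\Delta_+^{b_0} x_{t+1} = \eta_{t+1}$, one has $\Delta_+^b x_{t+1} = \Delta_+^{b-b_0} \Delta_+^{b_0} x_{t+1} = \Delta_+^{b-b_0}\eta_{t+1}$ up to the usual truncation remainder, which is an $I(b_0 - b)$ process driven by $\{\eta_s\}_{s \le t+1}$.

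The next step is to argue that taking the one-step-ahead prediction error $\zeta_{t+1} - \E_\vtheta(\zeta_{t+1}\mid\mathcal{F}_t)$ of an $I(b_0-b)$ process $\zeta_{t+1}$ does not change its integration order. This is where I expect the only real subtlety. If $b_0 - b < 1/2$ the process is asymptotically stationary with a Wold/MA$(\infty)$ representation in the $\eta$'s (plus the asymptotically negligible $\vu$-part), and removing the $\mathcal{F}_t$-conditional mean removes only the innovation at time $t+1$ together with a linear combination of past innovations that is itself $I(0)$-summable; the leading MA coefficients governing the memory parameter are untouched, so the prediction error remains $I(b_0-b)$. If $1/2 < b_0 - b < 3/2$ the process is nonstationary of type~II; here one writes $\zeta_{t+1} = \Delta_+^{-(b_0-b)}\eta_{t+1} = \sum_{j=0}^{t}\varphi_j(b_0-b)\eta_{t+1-j}$ and notes that the prediction error equals $\zeta_{t+1} - \zeta_{t+1|t}$, which again is $\eta_{t+1}$ plus a term that is one order less persistent, so the dominant partial-sum behaviour — and thus the integration order $b_0-b$ — is preserved. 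In both regimes the key fact is that the filtration $\mathcal{F}_t$ is, asymptotically, informationally equivalent to $\sigma(\eta_1,\dots,\eta_t,\vu_1,\dots,\vu_t)$ up to the $I(0)$ approximation errors collected in the generic $w_t$ notation, so the conditional expectation operator acts on the $\eta$-MA representation in the expected way.

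Concretely, the steps I would carry out are: (i) substitute the data-generating process into $\Delta_+^b\vy_{t+1}$ and separate the $\vbeta_0 x$-part from the $\vu$-part, absorbing the latter (and all truncation remainders) into a generic $I(0)$ process $w_{t+1}$; (ii) rewrite $\Delta_+^b x_{t+1}$ as $\Delta_+^{-(b_0-b)}\eta_{t+1}$, the canonical type~II $I(b_0-b)$ process; (iii) compute the $\mathcal{F}_t$-conditional expectation, using that $\mathcal{F}_t$ carries the same information as the past shocks up to $I(0)$ error, so that $\E_\vtheta(\Delta_+^{-(b_0-b)}\eta_{t+1}\mid\mathcal{F}_t) = \sum_{j=1}^{t}\varphi_j(b_0-b)\eta_{t+1-j} + w_t$; (iv) subtract, obtaining $z_{t+1}(\vtheta) = c\,\eta_{t+1} + w_{t+1}$ when $b_0 - b \le 1/2$ and, more generally, $z_{t+1}(\vtheta)$ with the same leading MA$(\infty)$ coefficients $\varphi_j(b_0-b)$ as a genuine $I(b_0-b)$ process, hence $z_{t}(\vtheta) \sim I(b_0-b)$ for all $t=1,\dots,n$. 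The main obstacle, as noted, is step~(iii): making precise that conditioning on $\mathcal{F}_t$ rather than on the shocks directly only perturbs things by an $I(0)$ amount — this relies on Theorem~\ref{th:1} together with the observation that $\vv_t(\vtheta_0)$ is an MDS, which pins down the innovation structure of the filter.
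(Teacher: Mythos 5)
Your high-level strategy---apply $\Delta_+^b$ to the data to obtain an $I(b_0-b)$ object and then argue that forming the one-step prediction error preserves that order---points in the right direction, but the step you yourself flag as the crux, step (iii), is wrong as stated, and the conclusion in step (iv) does not follow from it. The operator $\E_\vtheta(\,\cdot\mid\mathcal{F}_t)$ is the linear projection computed under the \emph{misspecified} model indexed by $\vtheta$, evaluated at data generated under $\vtheta_0$. The formula you propose, $\E_\vtheta(\Delta_+^{-(b_0-b)}\eta_{t+1}\mid\mathcal{F}_t)=\sum_{j=1}^{t}\varphi_j(b_0-b)\eta_{t+1-j}+w_t$, is the projection under the \emph{true} parameter (where $\mathcal{F}_t$ is indeed equivalent to the shock filtration and the conditional mean strips off all past innovations); if it held, subtracting in step (iv) would leave $z_{t+1}(\vtheta)=\eta_{t+1}+w_{t+1}\sim I(0)$ for every $b$, contradicting the lemma and contradicting your own closing claim that the coefficients $\varphi_j(b_0-b)$ ``survive'' in the residual---they cannot both be removed by the conditional mean and remain in the prediction error. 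The entire content of the lemma is that when $b\neq b_0$ the projection coefficients are those implied by the wrong integration order, so that memory of order $b_0-b$ is \emph{not} annihilated; your appeal to the informational equivalence of $\mathcal{F}_t$ and the shocks does not address this, because the issue is not what $\mathcal{F}_t$ contains but under which measure the projection is taken.

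The paper closes exactly this gap by a device your proposal is missing: under the model $\vtheta$, the univariate process $\zeta_t(\vtheta)=\Delta_+^b\frac{\vbeta'\mSigma^{-1}}{\vbeta'\mSigma^{-1}\vbeta}\vy_t=\eta_t+\Delta_+^b\frac{\vbeta'\mSigma^{-1}}{\vbeta'\mSigma^{-1}\vbeta}\vu_t$ is an invertible $I(0)$ moving average $A_+(L,\vtheta)g_t$ with $A_0=1$ and $A_i=\pi_i(b)(1+\vbeta'\mSigma^{-1}\vbeta)^{-1/2}$, so the model-implied one-step prediction error is \emph{exactly} $z_t(\vtheta)=B_+(L,\vtheta)\Delta_+^b\frac{\vbeta'\mSigma^{-1}}{\vbeta'\mSigma^{-1}\vbeta}\vy_t$ with $B_+(L,\vtheta)=A_+(L,\vtheta)^{-1}$, i.e.\ the CSS residual of \cite{Nie2015}. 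Evaluated at the actual data one has $\Delta_+^b\vy_t\sim I(b_0-b)$, and since $B_+(L,\vtheta)$ is an $I(0)$ filter that remains invertible as $n\to\infty$, it preserves the integration order, giving $z_t(\vtheta)\sim I(b_0-b)$. To repair your argument you would need to derive this explicit filter representation of the $\E_\vtheta$-projection (or an equivalent characterization of the projection under misspecification) rather than reasoning through the shock filtration.
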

\noindent The proof is included in appendix \ref{AS2}.
Thus, the Kalman filter estimator $x_{t+1|t}$ is always $I(b_0)$. 
The prediction error $\vv_{t+1}(\vtheta)$ combines  errors from $\vbeta \neq \vbeta_0$ and errors from $b \neq b_0$. It is
$I(b_0)$ for $\beta \neq \beta_0$, since $\left(\mI - \frac{\beta \beta' \mSigma^{-1}}{\vbeta' \mSigma^{-1}\vbeta} \right)\vbeta_0 x_{t+1} \neq 0$, whereas $\vbeta = \vbeta_0$ yields $\vv_{t+1}(\vtheta)=\left( \mI - \frac{\vbeta_0 \vbeta_0' \mSigma^{-1}}{\vbeta_0'\mSigma^{-1}\vbeta_0} \right)\vu_{t+1} + 
\vbeta_0 z_{t+1}(\vtheta)\sim I(b_0 - b)$ by lemma \ref{lemma:z_t_order}. Finally, $\vv_{t+1}(\vtheta_0) \sim I(0)$.


\noindent
Although a finite-order state space representation of the system in \eqref{dgp:1} exists since a fractionally integrated process of type II exhibits a finite-order autoregressive representation of length $n-1$, estimating such a system is only computationally feasible when $n$ is small. To estimate $\valpha_t$ the Kalman filter computes the inverse of the $(n+1)\times(n+1)$ covariance matrix $\mP_{t|t-1}$ for $t=1,...,n$ sequentially, which makes the filter inapplicable for large $n$. As a solution, \cite{ChaPal1998} suggest to truncate the Wold representation of a fractionally integrated process after $m$ lags before the model is cast in state space form, and provide consistency results for $b_0 < 1/2$. \cite{HarWei2018} find that a purely fractionally integrated trend is well approximated by finite ARMA processes in several simulation studies. For optimization purposes their approach is particularly convenient since it maps from the fractional integration order $b$ to its related ARMA coefficients and, therefore, optimization is conducted over $b$. 

\noindent
Nonetheless, the literature lacks consistency results for finite approximations of fractionally integrated processes in state space form when $b_0 > 1/2$, and we expect any estimator that truncates the fractionally integrated process at lag $m$, $m<n$, to become inconsistent as soon as $b_0 > 1/2$, $b_0 \neq 1$, since the variance of the truncated sum $(1-L)^{-\indico} \sum_{j=m+1}^{n-1}\varphi_j(d_0) \eta_{n-j}$ diverges as $n \to \infty$.

\noindent
As a solution, we include a correction for the resulting approximation error that allows us to contribute to the literature on fractionally integrated processes in state space form by deriving consistency results for the maximum likelihood estimator when $b_0 \in D$. To obtain a computationally feasible representation, we approximate the fractionally integrated process by a finite-order ARMA process, but directly correct for the resulting approximation error. We base our theoretical analysis on ARMA($1, m$) approximations of $x_t$, where the moving average polynomial truncates the stable part of the Wold representation of a fractionally integrated process, whereas the AR polynomial controls for integration orders greater or equal to one.
As will be shown in this section, the modified Kalman filter yields the same likelihood function as the one that is based on the exact state space representation of a fractionally integrated process.

\noindent
Let $\tilde{\vy}_t$ denote an approximate version of \eqref{dgp:1} and \eqref{dgp:1b} that is obtained by truncating the fractional polynomial $ \sum_{i=0}^{t-1}\varphi_i \eta_{t-i}$ after lag $m$,
\begin{align}\label{tru:1}
\tilde{\vy}_t = \vbeta \tilde{x}_t + \vu_t, && \tilde{x}_t = \indic \tilde{x}_{t-1} + \sum_{i=0}^{m}\varphi_i \eta_{t-i},
\end{align}
such that $(1-L)^{\indic}(\tilde{x}_t - x_t) = -\sum_{i=m+1}^{t-1}\varphi_i \eta_{t-i}$. \\
The system matrices and variables of the approximate state space representation are denoted with tilde, i.e.\ $\tilde{\mT}$, $\tilde{\mZ}$, $\tilde{\mR}$, $\tilde{\valpha}_{t}$,  $\tilde{v}_t(\vtheta)$, $\tilde{\mP}_{t|s}$, and $\tilde{\omega}^{(i,j)}_{t}$. Hence, $\tilde{\mT} = \mT^{({1:(m+1), 1:(m+1)})}$ consists of the upper $m+1$ columns and rows of $\mT$, $\tilde{\mZ}= \mZ^{(\cdot, 1:(m+1))}$ holds the first $m+1$ columns of $\mZ$, $\tilde{R} = \mR^{(1:(m+1), \cdot)}$ consists of the first $m+1$ rows of $\mR$ and the $(m+1)$ vector $\tilde{\alpha}_t$ is given by
$\tilde{\valpha}_t = \bvec \tilde{x}_t & \varphi_1 \eta_t + ... + \varphi_m \eta_{t+1-m} &
\cdots & 
\varphi_{m} \eta_t \evec'$.
$\tilde{\mP}_{t|s}$, $\tilde{\vv}_t(\vtheta)$ are defined accordingly. The Kalman filter equations \eqref{tilde_v_t} to \eqref{tilde_P_t+1_t} hold equivalently if denoted with tilde. 

\noindent
In the following theorem we state the conditional expectation $\tilde{x}_{t+1|t}$ of the truncated model as a function of $x_{t+1|t}$ and an approximation error.
\begin{theorem}\label{th:2}
	Let $\ve_i$ be a $(1 \times t)$ unit vector with a one at column $i$ and zeros elsewhere. Define $\mY_t = (y_1', ..., y_t')'$ and $\veta_{1:t} = (\eta_1, ..., \eta_t)'$. For the truncated model \eqref{tru:1} the conditional expectation of the latent variable can be written as
	\begin{align*}
		\tilde{x}_{t+1|t} &= x_{t+1|t} - \epsilon_{t+1}(\vtheta), \\
		\epsilon_{t+1}(\vtheta) &=
		\begin{cases}
			\sum_{i=m+1}^{t} \varphi_i \ve_{t+1-i} \mSigma_{\eta_{1:t}\mY_t} \mSigma_{\mY_t}^{-1} \mY_t & \text{if } b < 1,\\
		\sum_{s=m+1}^{t}\sum_{i=m+1}^{s} \varphi_i \ve_{s+1-i} \mSigma_{\eta_{1:t}\mY_t} \mSigma_{\mY_t}^{-1} \mY_t & \text{if } b \geq 1,
		\end{cases}
	\end{align*}
where $\epsilon_{t+1}(\vtheta)$ denotes the approximation error, and $\mSigma_{\veta_{1:t}\mY_t} = \Cov_\vtheta ( \veta_{1:t}, \mY_t)$, $\mSigma_{\mY_t}=\Var_\vtheta(\mY_t)$. Furthermore $\mE_\vtheta(\epsilon_{t+1}(\vtheta)) = 0$. Details on these matrices are presented in the proof, which is contained in appendix \ref{AS2}.
\end{theorem}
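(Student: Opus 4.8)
The plan is to use linearity throughout. I would write the truncated latent variable as the exact one minus an explicit approximation error that is a \emph{deterministic} linear functional of the shock vector $\veta_{1:t}$, take conditional expectations, and evaluate the resulting Gaussian projection in closed form. First I would isolate $\delta_{t+1}:=x_{t+1}-\tilde x_{t+1}$ from the identity $(1-L)^{\indic}_+(\tilde x_{t}-x_{t})=-\sum_{i=m+1}^{t-1}\varphi_i\eta_{t-i}$ recorded below \eqref{tru:1}, evaluated at time $t+1$. For $b<1$, so $\indic=0$, this is immediate: $\delta_{t+1}=\sum_{i=m+1}^{t}\varphi_i\eta_{t+1-i}$. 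For $b\geq 1$, so $\indic=1$, I would unroll the first-difference relation from the zero starting value $\tilde x_0-x_0=0$, which turns the single sum into the nested sum $\delta_{t+1}=\sum_{s=m+1}^{t}\sum_{i=m+1}^{s}\varphi_i\eta_{s+1-i}$. In both cases every shock occurring in $\delta_{t+1}$ carries a time index at most $t$, and expressing its coefficients by means of the unit vectors $\ve_i$ produces exactly the row vectors $\sum_{i=m+1}^{t}\varphi_i\ve_{t+1-i}$ and $\sum_{s=m+1}^{t}\sum_{i=m+1}^{s}\varphi_i\ve_{s+1-i}$ of the statement.

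Next, since $\tilde x_{t+1|t}=\E_\vtheta(\tilde x_{t+1}\mid\mathcal{F}_t)$ is the same type of conditional expectation the Kalman filter delivers in Theorem \ref{th:1}, linearity of conditional expectation gives $\tilde x_{t+1|t}=x_{t+1|t}-\E_\vtheta(\delta_{t+1}\mid\mathcal{F}_t)$ with $x_{t+1|t}$ as in Theorem \ref{th:1}; it then remains to match $\E_\vtheta(\delta_{t+1}\mid\mathcal{F}_t)$ with $\epsilon_{t+1}(\vtheta)$. Because $\mY_t$ is a linear image of $\veta_{1:t}$ plus the independent Gaussian measurement errors, the stacked vector $(\veta_{1:t}',\mY_t')'$ is zero-mean jointly Gaussian, so that $\E_\vtheta(\veta_{1:t}\mid\mathcal{F}_t)=\E_\vtheta(\veta_{1:t}\mid\mY_t)=\mSigma_{\eta_{1:t}\mY_t}\mSigma_{\mY_t}^{-1}\mY_t$. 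Applying to this identity the coefficient row vector of $\delta_{t+1}$ obtained in the first step — legitimate precisely because all $\eta$-indices there are at most $t$ — yields the two displayed formulas for $\epsilon_{t+1}(\vtheta)$, and the tower property gives $\E_\vtheta(\epsilon_{t+1}(\vtheta))=\E_\vtheta(\delta_{t+1})=0$ since $\delta_{t+1}$ is a linear combination of zero-mean shocks.

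Finally I would make the two covariance matrices explicit. Writing $\mLambda$ for the lower-triangular $t\times t$ matrix that represents the operator $\Delta^{-b}_+=(1-L)_+^{-\indic}\Delta^{-d}_+$, we have $(x_1,\dots,x_t)'=\mLambda\veta_{1:t}$ and hence $\mY_t=(\mI_t\otimes\vbeta)\mLambda\veta_{1:t}+\mU_t$ with $\mU_t\sim\mathrm{N}(0,\mI_t\otimes\mSigma)$ independent of $\veta_{1:t}\sim\mathrm{N}(0,\mI_t)$; consequently $\mSigma_{\eta_{1:t}\mY_t}=\mLambda'(\mI_t\otimes\vbeta')$ and $\mSigma_{\mY_t}=(\mI_t\otimes\vbeta)\mLambda\mLambda'(\mI_t\otimes\vbeta')+\mI_t\otimes\mSigma$.

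The conceptual content is light: the whole argument is a decomposition followed by the Gaussian conditioning lemma. I expect the main obstacle to be purely combinatorial — unrolling $(1-L)_+(\tilde x_t-x_t)=-\sum_{i=m+1}^{t-1}\varphi_i\eta_{t-i}$ for $b\geq 1$ into the double sum with exactly the summation limits that appear in the statement, and simultaneously keeping the Kronecker structure of $\mSigma_{\mY_t}$ and $\mSigma_{\eta_{1:t}\mY_t}$ straight — together with verifying in each case that no $\eta$-index exceeding $t$ enters $\delta_{t+1}$, so that the projection formula for $\E_\vtheta(\veta_{1:t}\mid\mY_t)$ literally reproduces each summand of $\epsilon_{t+1}(\vtheta)$.
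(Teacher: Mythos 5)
Your proposal is correct and follows essentially the same route as the paper's proof: decompose $\tilde{x}_{t+1|t}=x_{t+1|t}+\E_\vtheta(\tilde{x}_{t+1}-x_{t+1}\mid\mathcal{F}_t)$ by linearity and evaluate the remainder with the Gaussian projection $\E_\vtheta(\veta_{1:t}\mid\mathcal{F}_t)=\mSigma_{\veta_{1:t}\mY_t}\mSigma_{\mY_t}^{-1}\mY_t$. The only cosmetic differences are that the paper keeps the $b\geq 1$ case wrapped in the operator $\Delta_+^{-\indic}$ where you unroll it into the double sum explicitly, and that the paper records the covariance matrices entrywise (lemmas \ref{L2:a}, \ref{L2:b}, corollary \ref{L2:c}) where you give the equivalent Kronecker-form expressions.
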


\noindent
The prediction error $\vv_{t+1}(\vtheta)$ can be decomposed into the prediction error of the truncated model plus the approximation error
\begin{align*}
	\vv_{t+1}(\vtheta) = \vy_{t+1} - \mE_\vtheta(y_{t+1} | \mathcal{F}_t) &= \vy_{t+1} -  \E_\vtheta(\tilde{y}_{t+1} | \mathcal{F}_t) - \E_\vtheta(\vy_{t+1} - \tilde{y}_{t+1} | \mathcal{F}_{t}) = \\
	&= \vy_{t+1} - \vbeta \tilde{x}_{t+1|t} - \vbeta \epsilon_{t+1}(\vtheta) = \tilde\vv_{t+1}(\vtheta) - \vbeta \epsilon_{t+1}(\vtheta).
\end{align*}
Note that the approximation error $\epsilon_t(\vtheta)$ is the Kalman filter estimate for $x_{t} - \tilde{x}_{t} = (1-L)^{-\indic} \sum_{i=m+1}^{t-1}\varphi_i \eta_{t-i}$ given $\mathcal{F}_{t-1}$ and, therefore, it is $\mathcal{F}_{t-1}$-measurable and can be calculated given the formula in theorem \ref{th:2}. Consequently, the results from theorem \ref{th:1} for the exact representation carry over if $\vy_{t}$ is corrected for the approximation error, as the following corollary states.
\begin{corollary}\label{th:3}
	Define  $\ddot{\vy}_t = \vy_t - \vbeta \epsilon_t(\vtheta)$.
	Using the results in theorem \ref{th:1} and \ref{th:2} yields 
	\begin{align*}
		\E_\vtheta(\ddot{\vy}_{t+1} | \mathcal{F}_t)&=\E_\vtheta(\tilde{\vy}_{t+1} | \mathcal{F}_t) = \vbeta \tilde{x}_{t+1|t}, \\
		\tilde{x}_{t+1|t} &=  \frac{\vbeta' \mSigma^{-1}}{\vbeta' \mSigma^{-1} \vbeta} \vy_{t+1} - \epsilon_{t+1}(\vtheta) - z_{t+1}(\vtheta),
	\end{align*}
and $\vv_{t+1}(\vtheta)=\ddot{\vy}_{t+1}  - \E_\vtheta(\tilde{\vy}_{t+1} | \mathcal{F}_t) = \ddot{\vy}_{t+1}  - \E_\vtheta(\ddot{\vy}_{t+1} | \mathcal{F}_t)$. 
\end{corollary}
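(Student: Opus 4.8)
The plan is to obtain all three displayed identities by direct substitution from Theorems~\ref{th:1} and~\ref{th:2}, using two facts that are already recorded in the paragraph preceding the statement: (i) the correction term $\epsilon_{t+1}(\vtheta)$ is $\mathcal{F}_t$-measurable and can be computed from the formula in Theorem~\ref{th:2}; and (ii) $\epsilon_t(\vtheta)$ is precisely the Kalman filter estimate of the truncation remainder, i.e.\ $\epsilon_t(\vtheta)=\E_\vtheta\big(x_t-\tilde{x}_t\mid\mathcal{F}_{t-1}\big)$ with $x_t-\tilde{x}_t=(1-L)^{-\indic}\sum_{i=m+1}^{t-1}\varphi_i\eta_{t-i}$, which follows from the state space structure of \eqref{tru:1}.

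First I would establish the first display. Since $\ddot{\vy}_{t+1}=\vy_{t+1}-\vbeta\epsilon_{t+1}(\vtheta)$ and $\epsilon_{t+1}(\vtheta)$ is $\mathcal{F}_t$-measurable, $\E_\vtheta(\ddot{\vy}_{t+1}\mid\mathcal{F}_t)=\E_\vtheta(\vy_{t+1}\mid\mathcal{F}_t)-\vbeta\epsilon_{t+1}(\vtheta)$, and by Theorem~\ref{th:1} the first term equals $\vbeta x_{t+1|t}$, so the right-hand side is $\vbeta(x_{t+1|t}-\epsilon_{t+1}(\vtheta))=\vbeta\tilde{x}_{t+1|t}$ by Theorem~\ref{th:2}. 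For the middle expression, note that $\vy_{t+1}=\vbeta x_{t+1}+\vu_{t+1}$ and $\tilde{\vy}_{t+1}=\vbeta\tilde{x}_{t+1}+\vu_{t+1}$ share the same $\vu_{t+1}$, hence $\tilde{\vy}_{t+1}=\vy_{t+1}-\vbeta(x_{t+1}-\tilde{x}_{t+1})$; taking $\E_\vtheta(\cdot\mid\mathcal{F}_t)$ and applying fact (ii) together with Theorem~\ref{th:1} again gives $\E_\vtheta(\tilde{\vy}_{t+1}\mid\mathcal{F}_t)=\vbeta x_{t+1|t}-\vbeta\epsilon_{t+1}(\vtheta)=\vbeta\tilde{x}_{t+1|t}$, which completes the first display. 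The second display is then immediate: Theorem~\ref{th:1} gives $x_{t+1|t}=\frac{\vbeta'\mSigma^{-1}}{\vbeta'\mSigma^{-1}\vbeta}\vy_{t+1}-z_{t+1}(\vtheta)$, and subtracting $\epsilon_{t+1}(\vtheta)$ and invoking $\tilde{x}_{t+1|t}=x_{t+1|t}-\epsilon_{t+1}(\vtheta)$ from Theorem~\ref{th:2} produces the stated formula.

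For the final claim I would substitute $\ddot{\vy}_{t+1}=\vy_{t+1}-\vbeta\epsilon_{t+1}(\vtheta)$ and $\E_\vtheta(\tilde{\vy}_{t+1}\mid\mathcal{F}_t)=\vbeta(x_{t+1|t}-\epsilon_{t+1}(\vtheta))$ from the first display; the $\vbeta\epsilon_{t+1}(\vtheta)$ terms cancel, leaving $\ddot{\vy}_{t+1}-\E_\vtheta(\tilde{\vy}_{t+1}\mid\mathcal{F}_t)=\vy_{t+1}-\vbeta x_{t+1|t}=\vv_{t+1}(\vtheta)$ by definition \eqref{tilde_v_t}. Replacing $\E_\vtheta(\tilde{\vy}_{t+1}\mid\mathcal{F}_t)$ by $\E_\vtheta(\ddot{\vy}_{t+1}\mid\mathcal{F}_t)$ is legitimate because the first display shows these two conditional expectations coincide.

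There is no genuine obstacle here; the corollary is a bookkeeping consequence of Theorems~\ref{th:1} and~\ref{th:2}. The one point that requires care is handling $\epsilon_{t+1}(\vtheta)$ consistently in its two roles — as an $\mathcal{F}_t$-measurable quantity to be pulled out of the conditional expectation in the $\ddot{\vy}$ computation, and as the conditional mean $\E_\vtheta(x_{t+1}-\tilde{x}_{t+1}\mid\mathcal{F}_t)$ of the (non-$\mathcal{F}_t$-measurable) truncation remainder in the $\tilde{\vy}$ computation; both roles are already justified before the statement. Conceptually, the content is that feeding the low-dimensional truncated Kalman filter the corrected data $\ddot{\vy}_t$ reproduces exactly the prediction error $\vv_{t+1}(\vtheta)$, hence the Gaussian likelihood, of the exact $(n{+}1)$-dimensional filter.
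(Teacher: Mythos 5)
Your proposal is correct and follows essentially the same route as the paper, which treats the corollary as a direct bookkeeping consequence of Theorems \ref{th:1} and \ref{th:2}: the paper's justification is exactly the displayed decomposition $\vv_{t+1}(\vtheta)=\vy_{t+1}-\E_\vtheta(\tilde{\vy}_{t+1}\mid\mathcal{F}_t)-\E_\vtheta(\vy_{t+1}-\tilde{\vy}_{t+1}\mid\mathcal{F}_t)$ together with the observation that $\epsilon_{t+1}(\vtheta)$ is the $\mathcal{F}_t$-measurable conditional mean of the truncation remainder. Your careful separation of the two roles of $\epsilon_{t+1}(\vtheta)$ is exactly the point the paper relies on, so nothing is missing.
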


\noindent
From corollary \ref{th:3} it follows that the prediction errors of the exact representation \eqref{dgp:1} using $\{\vy_t\}_{t=1}^n$ and the truncated model \eqref{tru:1} together with the 
approximation-corrected $\{\ddot{\vy}_t\}_{t=1}^n$ are identical and have the same conditional likelihood given $\vtheta$. Hence, maximizing the likelihood of the approximation-corrected truncated model solves the same optimization problem as for the exact state space representation but requires a smaller number of state estimates from the Kalman filter if $m < n$. The modified Kalman filter outperforms the standard Kalman filter from a computational perspective whenever $p \ll n$, as it requires to invert the $np\times np$ matrix $\mSigma_{Y_n}$ once, whereas the Kalman filter based on the full representation of \eqref{dgp:1} sequentially inverts the $(n+1) \times (n+1)$ matrix $\mP_{t|t-1}$ for each $t=1,...,n$. 
E.g.\ for our application in section \ref{Ch:4}, the modified Kalman filter is about $150$ times faster than the standard Kalman filter.

\noindent
Although we base our theoretical analysis on ARMA($1, m$) approximations of $x_t$, including further lags of the autoregressive polynomial may improve the approximation quality in finite samples, as \cite{HarWei2018} show, and, therefore, speed up the parameter optimization. Nonetheless, the asymptotic results remain unaffected by an extended AR polynomial since correcting for the approximation error yields an exact representation of a fractionally integrated process anyway. For notational convenience we therefore stick to the simplest ARMA({$1, m$}) approximation in section \ref{Ch:2}, 
whereas in our empirical application in section \ref{Ch:4} we use ARMA($4$, $4$) approximations for a faster convergence of the estimator. 

\noindent
Having shown that an exact representation \eqref{dgp:1} together with $\{\vy_t\}_{t=1}^n$ yields the same conditional likelihood of the prediction error as a truncated, approximation-corrected model $\eqref{tru:1}$ together with $\{\ddot{\vy}_t\}_{t=1}^n$ for a given $\vtheta$, we turn to the estimation of the unknown parameters $\vtheta$ in the subsequent section, where we focus on the exact state space representation of \eqref{dgp:1}. For the asymptotic results to carry over to the truncated, approximation-corrected model it is required that $\epsilon_t(\vtheta) < \infty$, and therefore the truncation parameter is required to depend on the sample size $n$, $m = m(n)$, whenever $b > 1/2$.  

\section{Maximum likelihood estimation}\label{Ch:3}
In this section we derive the maximum likelihood (ML) estimator for the unknown parameters $\vtheta$ in the unobserved components model \eqref{dgp:1} with a common fractional trend and determine the asymptotic properties of the ML estimator. With respect to the latter, two major difficulties have to be tackled. 
First, as it already becomes clear from theorem \ref{th:1} and lemma \ref{lemma:z_t_order}, $z_t(\vtheta)$ depends on $b_0 - b$ and is nonstationary for $b_0 - b \geq 1/2$. We tackle this issue by first establishing consistency of the ML estimator for $b$, where we show that the estimator is nested in the ARFIMA optimization problem considered in \cite{Nie2015}. There, consistency of the estimator for $b$ is shown by splitting $D$ into different intervals and showing that the relevant parameter space reduces to $D_3(\kappa_3)=D \cap \{b: b-b_0 \geq -1/2 + \kappa_3\}$, $0<\kappa_3<1/2$, where the objective function of the estimator converges uniformly. Consequently, $z_t(\vtheta)$ and the partial derivative of $\vv_t(\vtheta)$ w.r.t.\ $b$ converge to stationary processes. 

\noindent
The second difficulty arises from the partial derivative of $\vv_t(\vtheta)$ w.r.t.\ $\vbeta$ that is $I(b_0)$, which implies that the convergence rate of the ML estimator for $\vbeta$ depends on $b_0$ for $b_0 \in (1/2, 3/2)$. Consequently, we consider the asymptotically stationary case $b_0 \in [0, 1/2)$ and the nonstationary case $b_0 \in (1/2, 3/2)$ separately. For both cases we show that the ML estimator of $\vtheta$ converges to a normal distribution, whereas in the latter case a certain rotation of the parameters is asymptotically mixed normally distributed.

\noindent
The section is organized as follows. We first state the log likelihood of the state space model \eqref{dgp:1} together with its first and second derivative and comment on the convergence of the prediction error variance $F_t$ in \eqref{eq:F_t}. Next, we show consistency of the ML estimator for $b$. Finally, we derive the asymptotic distribution for the ML estimator of $\vtheta$ for the asymptotically stationary case $b_0 \in [0, 1/2)$ and the nonstationary case $b_0 \in (1/2, 3/2)$ separately,
including a discussion  on the cointegration properties implied by the model.

\noindent
The log likelihood of our state space system is given by
\begin{align}\label{ll}
	l_n(\vtheta) = -\frac{n}{2}\log \det \mF^{[n]} - \frac{1}{2} \tr \mF^{[n]^{-1}} \sum_{t=1}^{n} \vv_t(\vtheta) \vv_t(\vtheta)',
\end{align}
where $\mF^{[n]}=\lim_{t \to \infty} \Var_\vtheta(v_t(\vtheta) | \mathcal{F}_{t-1})$ is the steady state variance of the prediction error that depends on the fixed system dimension $n$ due to the type II definition of long memory. 
The existence of a steady state $\mF^{[n]}$ is shown in lemma \ref{L:1b} in appendix \ref{AS3}. 
The derivation of the asymptotic properties of the ML estimator requires convergence of the steady state variance $\mF^{[n]}$ as $n \to \infty$. This is shown in lemma \ref{L:1c} in appendix \ref{AS3}, where special care is taken w.r.t.\ the state dimension increasing with $n$. 

\noindent
An analytical solution for the score and Hessian matrix was derived in \cite{ChaMilPa2009} and is given by
\begin{align}
	\vs_n(\vtheta)  &= -\frac{n	}{2}\frac{\partial (\vec \mF^{[n]})'}{\partial \theta} \vec \mF^{[n]^{-1}} + \frac{1}{2} \frac{\partial (\vec \mF^{[n]})'}{\partial \theta}\vec \left(\mF^{[n]^{-1}}\sum_{t=1}^{n}\vv_t(\vtheta)\vv_t(\vtheta)' \mF^{[n]^{-1}}\right) \nonumber \\
	&- \sum_{t=1}^{n}\frac{\partial \vv_t(\vtheta)'}{\partial \theta} \mF^{[n]^{-1}}\vv_t(\vtheta), \label{eq:Gradient}
\end{align}
\begingroup
\allowdisplaybreaks
and
\begin{align} \label{eq:Hessian}
	&\mH_n(\vtheta) = \sum_{h=1}^8 \mH_{n,h}(\vtheta),  \\
	&\mH_{n,1}(\vtheta) = -\frac{n}{2} \left[\mI \otimes (\vec  \mF^{[n]^{-1}})'\right] \left(\frac{\partial^2 }{\partial \theta\partial \theta'} \otimes \vec \mF^{[n]}\right),  \nonumber \\
	&\mH_{n,2}(\vtheta)  = \frac{1}{2}\left\{\mI \otimes \left\{\vec { \left[\mF^{[n]^{-1}} \left(\sum_{t=1}^{n}\vv_t(\vtheta) \vv_t(\vtheta)'\right)\mF^{[n]^{-1}} \right] } \right\}' \right\}\left(\frac{\partial^2}{\partial \theta\partial \theta'} \otimes \vec \mF^{[n]}\right), \nonumber \\
	&\mH_{n,3}(\vtheta) = \frac{n}{2}\frac{\partial (\vec  \mF^{[n]})'}{\partial \theta} \left(\mF^{[n]^{-1}} \otimes \mF^{[n]^{-1}}\right)\frac{\partial (\vec \mF^{[n]})}{\partial \theta'}, \nonumber   \\
	&\mH_{n,4}(\vtheta)  = -\frac{1}{2}\frac{\partial (\vec  \mF^{[n]})'}{\partial \theta} \Big[\mF^{[n]^{-1}} \otimes \mF^{[n]^{-1}}\left(\sum_{t=1}^{n}\vv_t(\vtheta)\vv_t(\vtheta)'\right)\mF^{[n]^{-1}} \nonumber \\
	&\qquad\qquad + \mF^{[n]^{-1}}\left(\sum_{t=1}^{n}\vv_t(\vtheta)\vv_t(\vtheta)'\right)\mF^{[n]^{-1}} \otimes \mF^{[n]^{-1}}\Big] \frac{\partial \vec \mF^{[n]}}{\partial \vtheta'}, \nonumber  \\
	&\mH_{n,5}(\vtheta)  = - \sum_{t=1}^{n}\frac{\partial \vv_t(\vtheta)'}{\partial \vtheta} \mF^{[n]^{-1}} \frac{\partial \vv_t(\vtheta)}{\partial \vtheta'},\quad \nonumber \\
	&\mH_{n,6}(\vtheta) = -\sum_{t=1}^{n} \left( \mI \otimes \vv_t(\vtheta)' \mF^{[n]^{-1}} \right)\left(\frac{\partial^2}{\partial \vtheta \partial \vtheta'} \otimes \vv_t(\vtheta) \right), \nonumber  \\
	&\mH_{n,7}(\vtheta) = \frac{\partial (\vec \mF^{[n]})'}{\partial \vtheta} \left(\mF^{[n]^{-1}} \otimes \mF^{[n]^{-1}}\right)\sum_{t=1}^{n}\left(\frac{\partial\vv_t(\vtheta)}{\partial \vtheta'}\otimes \vv_t(\vtheta)\right), \quad \mH_{n,8}(\vtheta) = \mH_{n,7}(\vtheta)'.  \nonumber 
\end{align}

\subsection{Consistency of the ML estimator for $b$}
Having stated the log likelihood together with its derivatives, we turn to the estimation of $b$. 
By theorem \ref{th:1} the prediction error has the decomposition  $\vv_{t+1}(\vtheta)= (I - \frac{\vbeta \vbeta' \mSigma^{-1}}{\vbeta'\mSigma^{-1}\vbeta}) \vy_{t+1} + \vbeta z_{t+1}(\vtheta)$. Since  the second term of $z_{t+1}(\vtheta)$  is $I(b_0-b)$ by  lemma \ref{lemma:z_t_order}, the prediction error  is $I(b_0)$ whenever $\vbeta\neq\vbeta_0$ and $I(b_0-b)$ in case of $\vbeta=\vbeta_0$. However, since the first term in $\vv_{t+1}(\vtheta)$ is invariant with respect to $b$, only the second term $\vbeta z_{t+1}(\vtheta)$  matters w.r.t.\ estimating $b$. The latter term
 is asymptotically stationary if $b_0 - b < 1/2$, such that a law of large numbers can be applied to obtain uniform convergence of the objective function for $b$. For $b_0 - b \geq 1/2$, $z_{t+1}(\vtheta)$ is nonstationary, and the rate of convergence of the objective function \eqref{ll} depends on $b_0-b$. Thus, the objective function of the ML estimator for $b$ does not converge uniformly on $D$. 
For ARFIMA models \cite{Nie2015} shows consistency of the conditional sum-of-squares (CSS) estimator for $b$, and the CSS estimator has the same limit distribution as the maximum likelihood estimator under Gaussianity \citep{HuaRob2011}. Thus, by showing that our objective function of the ML estimator for $b$ is asymptotically nested in the ARFIMA objective function considered in \cite{Nie2015} and that our setup satisfies assumptions A to D in \cite{Nie2015}, we prove that consistency for the ML estimator of $b$ carries over from the CSS estimator. The following theorem summarizes the results. 
	\begin{theorem}\label{th:consistency}
		The ML estimator for $b$ in model \eqref{dgp:1} is consistent, i.e.\ $\hat{b} \pto b_0$
		as $n \to \infty$. 
	\end{theorem}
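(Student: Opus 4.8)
The plan is to reduce the consistency problem for $\hat b$ to the known consistency result for the conditional sum-of-squares (CSS) estimator in the univariate ARFIMA setting of \cite{Nie2015}. The key observation, already made available by Theorem \ref{th:1}, is that the prediction error decomposes as $\vv_{t+1}(\vtheta) = \big(I - \tfrac{\vbeta\vbeta'\mSigma^{-1}}{\vbeta'\mSigma^{-1}\vbeta}\big)\vy_{t+1} + \vbeta z_{t+1}(\vtheta)$, and only the second summand carries any dependence on $b$; the first summand is $b$-invariant. Plugging this into the log likelihood \eqref{ll} and concentrating out $\vbeta$ and $\mSigma$, the part of the objective function that is relevant for estimating $b$ collapses, up to terms that are asymptotically negligible uniformly in $\vtheta$, to a sum-of-squares criterion in $z_{t+1}(\vtheta)$. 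By Theorem \ref{th:1}, $z_{t+1}(\vtheta)$ is precisely the one-step prediction error of the scalar fractionally-differenced series $\Delta_+^b\big(\tfrac{\vbeta'\mSigma^{-1}}{\vbeta'\mSigma^{-1}\vbeta}\vy_{t+1}\big)$ given $\mathcal F_t$; since $\tfrac{\vbeta'\mSigma^{-1}}{\vbeta'\mSigma^{-1}\vbeta}\vy_{t+1}$ is (for $\vbeta=\vbeta_0$) just $x_{t+1}$ plus an $I(0)$ term, this is exactly an ARFIMA-type CSS objective of the form studied in \cite{Nie2015}.

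The concrete steps I would carry out are: (i) write the concentrated log likelihood and isolate the $b$-dependent component, showing that maximizing $l_n(\vtheta)$ over $\vtheta$ is asymptotically equivalent to minimizing $\sum_{t=1}^n z_t(\vtheta)^2$ (suitably normalized) over the relevant parameters, with the $b$-invariant block contributing only an additive constant in the limit; (ii) verify that this limiting criterion coincides with the ARFIMA CSS objective of \cite{Nie2015} evaluated at the implied short-memory dynamics of $x_t$ and $\vu_t$, i.e.\ identify the "ARFIMA nesting"; (iii) check that Assumptions A--D of \cite{Nie2015} are satisfied in our setup — this uses that $\eta_t$ and $\vu_t$ are NID with $\Var(\eta_t)=1$ and $\mSigma_0$ diagonal of full rank, that $b_0 \in D$ with $b_0\neq 1/2$, and that $\mTheta$ is compact with $\vtheta_0$ interior; (iv) invoke the main consistency theorem of \cite{Nie2015} to conclude $\hat b \pto b_0$, together with the fact from \cite{HuaRob2011} that under Gaussianity the CSS and exact ML estimators of $b$ share the same limit, so the nesting is asymptotically tight. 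A key intermediate point, noted in the text preceding the theorem, is that although the objective function does not converge uniformly on all of $D$ — because $z_t(\vtheta)$ is nonstationary when $b_0 - b \geq 1/2$ — the argument of \cite{Nie2015} handles this by a partition of $D$ into subintervals and an initial step showing that the effective parameter space shrinks to $D_3(\kappa_3) = D \cap \{b: b - b_0 \geq -1/2 + \kappa_3\}$, on which $z_t(\vtheta)$ is asymptotically stationary and a uniform law of large numbers applies.

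I expect the main obstacle to be step (i)--(ii): rigorously showing that the multivariate Kalman-filter log likelihood, with its steady-state prediction-error variance $\mF^{[n]}$ whose dimension grows with $n$, reduces asymptotically to the scalar ARFIMA CSS criterion. This requires (a) controlling $\mF^{[n]}$ as $n\to\infty$ — for which we can appeal to lemmas \ref{L:1b} and \ref{L:1c} establishing existence and convergence of the steady-state variance — and (b) arguing that the cross terms between the $b$-invariant block $\big(I - \tfrac{\vbeta\vbeta'\mSigma^{-1}}{\vbeta'\mSigma^{-1}\vbeta}\big)\vy_{t+1}$ and $\vbeta z_{t+1}(\vtheta)$ in $\sum_t \vv_t\vv_t'$, after weighting by $\mF^{[n]^{-1}}$, do not disturb the identification of $b$. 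The projection structure helps here: $\mF^{[n]^{-1}}\vbeta$ is proportional to $\mSigma^{-1}\vbeta$ in a way that makes the relevant quadratic form in $z_t(\vtheta)$ clean, but making this precise for $\vbeta\neq\vbeta_0$ (where $\vv_t$ is $I(b_0)$, not $I(0)$) and simultaneously over the whole parameter space is the delicate part. Once the reduction to the ARFIMA CSS objective is in place, the remainder is a matter of checking that \cite{Nie2015}'s regularity conditions hold, which is essentially bookkeeping given our Gaussian iid error structure.
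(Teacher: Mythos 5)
Your proposal follows essentially the same route as the paper's proof: it isolates $z_t(\vtheta)$ as the only $b$-dependent part of $\vv_t(\vtheta)$ via Theorem \ref{th:1}, identifies it with the residuals of \cite{Nie2015} (the content of the proof of Lemma \ref{lemma:z_t_order}), uses the convergence of $\mF^{[n]}$ from Lemma \ref{L:1c} to establish asymptotic equivalence of the ML and CSS criteria, verifies Assumptions A--D of \cite{Nie2015}, and appeals to \cite{HuaRob2011} for the CSS/ML link. The paper's actual proof is terser on the reduction step you flag as the main obstacle, but the argument and its ingredients coincide with yours.
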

\noindent
The proof is contained in appendix \ref{AS3}. 

\noindent
Theorem \ref{th:consistency} implies that the relevant parameter space for $b$ asymptotically reduces to the neighborhood of $b_0$, implying that $z_{t+1}(\hat\vtheta)$ is asymptotically stationary and the objective function for the ML estimator of $b$ converges uniformly.

\subsection{Asymptotic distribution of the maximum likelihood estimator}
\noindent
Next we turn to the asymptotic analysis of the maximum likelihood estimator for $\vtheta$.
To derive its asymptotic properties, we follow the well-established approach used for stationary models and apply a first order Taylor expansion to the score vector, which yields
\begin{align}
	{\vs}_n(\hat{\theta}_n) = \vs_n(\theta_0) + \mH_n(\theta_n)(\hat{\theta}_n - {\theta}_0),
\end{align}
where $\hat{\vtheta}_n$ is the maximum likelihood estimator for $\vtheta_0$, and $\mH_n(\vtheta_n)$ denotes the Hessian with rows evaluated at mean values between 
$\hat{\vtheta}_n$ and $\vtheta_0$. Given that $\vs_n(\hat{\vtheta}_n)=0$ if $\hat{\vtheta}_n$ is an interior solution, we write
\begin{align}\label{eq3:1}
		\vnu_n' \mA^{-1} (\hat{\vtheta}_n - \vtheta_0) = -\left[\vnu_n^{-1}\mA' \mH_n(\vtheta_n) \mA \vnu_n^{-1'}\right]^{-1}\left[\vnu_n^{-1} \mA' \vs_n(\vtheta_0)\right],
\end{align}
where $\vnu_n$ is a scaling matrix and $\mA$ is a rotation matrix that will be defined in \eqref{eq:A_N_A_S_A_D_nu_n} below.
Again following \cite{ChaMilPa2009}, the score vector \eqref{eq:Gradient} evaluated at the true parameter value $\vtheta_0$ is given by
\begin{align} \label{eq:sn0}
\vs_n(\vtheta_0) &= 
\frac{1}{2}\frac{\partial( \vec \mF^{[n]})'}{\partial \vtheta}\Bigg \rvert_{\vtheta=\vtheta_0} \left( \mF_0^{{[n]}^{-1}} \otimes \mF_0^{{[n]}^{-1}}\right) \vec  \sum_{t=1}^{n} \left(\vv_t(\vtheta_0) \vv_t(\vtheta_0)^{'}-\mF_0^{[n]}\right)\nonumber \\&- \sum_{t=1}^{n} \frac{\partial \vv_t(\vtheta)'}{\partial \vtheta}\Bigg\rvert_{\vtheta = \vtheta_0}\mF_0^{{[n]}^{-1}}\vv_t(\vtheta_0),
\end{align}
where $\mF^{{[n]}}_0$ is $\mF^{{[n]}}$ evaluated at $\vtheta = \vtheta_0$. 

\noindent
It is easy to see that the only stochastic component in $s_n(\vtheta_0)$ is $\vv_t(\vtheta_0)$ and its derivative evaluated at $\vtheta_0$. From the decomposition of $\vv_t(\vtheta_0)$ derived in theorem \ref{th:1}, one can obtain its derivatives stated in the following lemma.
\begin{lemma}\label{L3a}
	The first partial derivatives of $\vv_t(\vtheta)$, evaluated at $\vtheta_0$, are given by
	\begin{align*}
	 \frac{\partial \vv_t(\vtheta)'}{\partial \vbeta}\Bigg\rvert_{\vtheta = \vtheta_0} &=- \left( \mI - \frac{\mSigma_0^{-1} \vbeta_0 \vbeta_0'}{\vbeta_0' \mSigma_0^{-1}\vbeta_0} \right) x_t + a_\vbeta^0(\vu_t, \eta_t), \\
	 \frac{\partial \vv_t(\vtheta)'}{\partial \vec \mSigma} \Bigg\rvert_{\vtheta = \vtheta_0} &=a^0_\Sigma(\vu_t, \eta_t) , \qquad
	\frac{\partial \vv_t(\vtheta)'}{\partial b}\Bigg \rvert_{\vtheta = \vtheta_0} = a^0_b(\vu_t, \eta_t),
	\end{align*} 
where $a^0_i(\vu_t, \eta_t)$ are $I(0)$ processes that depend on $\eta_1,...\eta_t$, $\vu_1,..., \vu_t$, $i=\vbeta, \mSigma, b$.
\end{lemma}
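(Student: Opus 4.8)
The plan is to start from the closed-form decomposition of the prediction error established in Theorem~\ref{th:1}, namely
\[
\vv_{t}(\vtheta) = \left( \mI - \frac{\vbeta \vbeta' \mSigma^{-1}}{\vbeta' \mSigma^{-1}\vbeta} \right)\vy_{t} + \vbeta\, z_{t}(\vtheta), \qquad z_{t}(\vtheta) = \frac{\vbeta' \mSigma^{-1}}{\vbeta' \mSigma^{-1}\vbeta}\left(\Delta_+^b \vy_t - \E_\vtheta(\Delta_+^b \vy_t \mid \mathcal{F}_{t-1})\right),
\]
and differentiate termwise with respect to each block of $\vtheta=(\vbeta',(\vech\mSigma)',b)'$, substituting $\vy_t = \vbeta_0 x_t + \vu_t$ and evaluating at $\vtheta_0$ at the end. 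The key structural observation is that at $\vtheta=\vtheta_0$ one has $z_t(\vtheta_0) = \frac{\vbeta_0'\mSigma_0^{-1}}{\vbeta_0'\mSigma_0^{-1}\vbeta_0}\eta_t$ (since $\Delta_+^{b_0}\vy_t = \vbeta_0\eta_t + \Delta_+^{b_0}\vu_t$ and, by Lemma~\ref{lemma:z_t_order}, $z_t(\vtheta_0)$ is $I(0)$), so the only potentially nonstationary term that can survive differentiation is the one in which a derivative hits the \emph{projector} multiplying $\vy_t$, not $z_t$. That is why the $\vbeta$-derivative picks up the $x_t$ term while the $\mSigma$- and $b$-derivatives do not.

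The steps, in order, would be: (i) For $\partial/\partial\vbeta$, differentiate the matrix $\mI - \vbeta\vbeta'\mSigma^{-1}/(\vbeta'\mSigma^{-1}\vbeta)$ using the quotient and product rules; at $\vtheta_0$ the derivative of this projector applied to $\vy_t = \vbeta_0 x_t + \vu_t$ yields, after the $\vbeta_0 x_t$ part, a term proportional to $-(\mI - \mSigma_0^{-1}\vbeta_0\vbeta_0'/(\vbeta_0'\mSigma_0^{-1}\vbeta_0))x_t$ plus terms linear in $\vu_t$; the derivative of the $\vbeta z_t(\vtheta)$ block contributes only $I(0)$ pieces because $z_t(\vtheta_0)$ and $\partial z_t/\partial\vbeta|_{\vtheta_0}$ are both functions of $\eta_1,\dots,\eta_t$ and $\vu_1,\dots,\vu_t$ of order $I(0)$ (using that $\partial/\partial\vbeta$ of $\Delta_+^b\vy_t$ is zero, so only the prefactor and the conditional expectation move, both bounded). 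Collecting the $I(0)$ remainder into $a_\vbeta^0(\vu_t,\eta_t)$ gives the first display. (ii) For $\partial/\partial\vec\mSigma$, note $\Delta_+^b\vy_t$ does not depend on $\mSigma$ and $\vy_t$ itself does not depend on $\mSigma$, so every term arises from differentiating the scalar $\vbeta'\mSigma^{-1}$-weights and the conditional expectation $\E_\vtheta(\Delta_+^b\vy_t\mid\mathcal F_{t-1})$; each such factor is an $I(0)$ function of the shocks (at $\vtheta_0$, of $\eta$'s and $\vu$'s), so the whole derivative is $I(0)$ and we set it equal to $a_\mSigma^0(\vu_t,\eta_t)$. (iii) For $\partial/\partial b$, the projector prefactor is $b$-free, so only $z_t(\vtheta)$ moves; its $b$-derivative involves $\partial_b \Delta_+^b\vy_t = \Delta_+^b(\log(1-L)_+)\vy_t$ minus its conditional expectation — by Theorem~\ref{th:consistency} the relevant parameter range for $b$ is a shrinking neighborhood of $b_0$, where (as argued in the text preceding the lemma and in Nielsen, 2015, Lemma~1-type bounds) $\partial_b z_t(\vtheta)|_{\vtheta_0}$ is asymptotically stationary, $I(0)$; set it equal to $a_b^0(\vu_t,\eta_t)$.

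The main obstacle is step (iii): showing that $\partial_b z_t(\vtheta)|_{\vtheta_0}$ is genuinely $I(0)$ rather than merely $I(\epsilon)$ for some small $\epsilon$. The logarithmic derivative of the fractional difference operator, $(1-L)_+^{b}\log(1-L)_+$, has coefficients decaying only like $(\log j)/j$ relative to those of $(1-L)_+^b$, so one must verify — e.g.\ via the summability/Abel-type arguments and the uniform bounds on derivatives of $\pi_j(b)$ used in \cite{Nie2015} (assumptions A–D there) — that the truncated prediction error of this log-differenced process has bounded variance uniformly over the relevant neighborhood of $b_0$ and is a well-defined $I(0)$ linear process in the $\eta_t$ (and, through the conditional-expectation term, in the $\vu_t$). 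The remaining bookkeeping — expanding the projector derivatives and checking that all the non-$x_t$ terms in the $\vbeta$-block are $I(0)$ — is routine linear algebra combined with Lemma~\ref{lemma:z_t_order}, and I would relegate it to the appendix.
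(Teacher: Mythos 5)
Your overall strategy matches the paper's: differentiate the Theorem \ref{th:1} decomposition of $\vv_t(\vtheta)$ block by block, observe that only the projector term multiplying $\vy_t=\vbeta_0 x_t+\vu_t$ can contribute a nonstationary $x_t$ (and only when the derivative is taken with respect to $\vbeta$), and collect the rest into $I(0)$ remainders. Your identification of the $\log(1-L)$ issue in the $b$-derivative is also on target, and your worry there resolves exactly as the paper shows: after differentiating the $\pi_j$ coefficients via digamma identities the double sum telescopes to $\sum_{j=1}^{t-1} j^{-1}\Delta_+^{b_0}\vy_{t-j}$, which is stationary because the $j^{-1}$ weights are square-summable and $\Delta_+^{b_0}\vy_{t-j}\sim I(0)$.

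The genuine gap is in how you treat the conditional expectation $\E_\vtheta(\Delta_+^b\vy_t\mid\mathcal{F}_{t-1})$ inside $z_t(\vtheta)$. This operator depends on $\vtheta$ itself (the projection weights are functions of $\vbeta$, $\mSigma$, and $b$), so you cannot differentiate $z_t(\vtheta)$ by moving $\partial/\partial b$ (or $\partial/\partial\vbeta$, $\partial/\partial\vec\mSigma$) inside the expectation and acting only on $\Delta_+^b\vy_t$ and the scalar prefactor; in your steps (i)--(iii) the derivative of the conditional-expectation operator is asserted to be "bounded" or $I(0)$ but never computed. The paper's proof avoids this by first invoking the explicit filter representation from the proof of Lemma \ref{lemma:z_t_order}, namely $z_t(\vtheta)=B_+(L,\vtheta)\,\Delta_+^b\,\frac{\vbeta'\mSigma^{-1}}{\vbeta'\mSigma^{-1}\vbeta}\vy_t$ with $B_+(L,\vtheta)=\bigl[1-(1+\vbeta'\mSigma^{-1}\vbeta)^{-1/2}+(1+\vbeta'\mSigma^{-1}\vbeta)^{-1/2}\Delta_+^b\bigr]_+^{-1}$, which turns the whole prediction error into an explicit deterministic lag polynomial in $\vy_t$. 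Every derivative is then a concrete filter ($\partial B_+/\partial\vbeta$, $\partial B_+/\partial\vec\mSigma$, $\partial B_+/\partial b$ are each shown to be stationary filters), and the $I(0)$ claims become verifiable statements about coefficient summability rather than assertions about an implicitly defined projection. Without this (or an equivalent closed form for the filter weights), your steps (ii) and (iii), and the $I(0)$ bookkeeping in (i), do not go through as written.
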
 
\noindent
The proof of lemma \ref{L3a} is contained in appendix \ref{AS3}. As the lemma shows, $\partial \vv_t(\vtheta)' / \partial \vbeta$ at $\vtheta_0$ is the only source of fractional integration in the gradient $s_n(\vtheta_0)$, whereas $\partial \vv_t(\vtheta)' / \partial \vec \mSigma$, $\partial \vv_t(\vtheta)' / \partial b$ at $\vtheta_0$ are $I(0)$. 
Similar to the $I(1)$ case studied in \cite{ChaMilPa2009} the partial derivative $\partial \vv_t(\vtheta)' / \partial \vbeta$ at $\vtheta_0$ is a process of dimension $(p\times p)$ that is driven by one common fractionally integrated trend $x_t$, such that $\partial \vv_t(\vtheta)' / \partial \vbeta$ at $\vtheta_0$ is cointegrated.   Defining the $(p \times p)$-dimensional projection matrix 
\begin{align} \label{eq:Px}
\mP_x =  \mI - \frac{\mSigma_0^{-1}\vbeta_0\vbeta_0'}{\vbeta_0' \mSigma_0^{-1} \vbeta_0},
\end{align}
as  \cite{ChaPal1998} do for the $I(1)$-case, allows to write  $\partial \vv_t(\vtheta)' / \partial \vbeta\rvert_{\vtheta=\vtheta_0} = - \mP_x x_t + a_\vbeta^0(u_t,\eta_t)$. While for each column in $\partial \vv_t(\vtheta)' / \partial \vbeta\big\rvert_{\vtheta=\vtheta_0}$ the dimension of the cointegration space is $p-1$,  $c\vbeta_0$ is the only common cointegrating vector for all $p$ columns, where $c$ is any nonzero constant, eliminating the single common trend from all $p^2$ derivatives,  $\vbeta_0'\partial \vv_t(\vtheta)' / \partial \vbeta\big\rvert_{\vtheta=\vtheta_0} \sim I(0)$. 
Thus,  the projection matrix satisfies $\vbeta_0' \mP_x = 0$. Furthermore, $ \mP_x \mSigma^{-1}_0 \vbeta_0 = 0$ holds. 
From the latter equation it follows that  $\mP_x \mSigma^{-1}_0$ is relevant for determining the cointegration space for $\vy_t=\vbeta_0x_t+\vu_t$. To deal with the singularity in $\mP_x$, we follow the approach of \cite{ChaMilPa2009} and define $\mGamma_0$ as a $p \times (p-1)$ matrix for which
\begin{align}\label{gamma:prop}
\mGamma_0' \mSigma_0^{-1} \vbeta_0 = 0, && \mGamma_0'\mSigma_0^{-1}\mGamma_0 = \mI.
\end{align}
Note that $\mP_x = \mSigma_0^{-1} \mGamma_0 \mGamma_0'$. Thus, $\mGamma_0' \mSigma_0^{-1}$ determines the $p-1$-dimensional cointegration space for $\vy_t$. From the left equation in  \eqref{gamma:prop} it follows that the cointegration vectors for $\vy_t$ and for the partial derivatives $\partial \vv_t(\vtheta)' / \partial \vbeta\rvert_{\vtheta=\vtheta_0}$ are orthogonal.  For a broad discussion of the cointegrating properties we refer to \citet[ch.\ 4]{ChaMilPa2009}.
In addition, note that the derivatives $\frac{\partial \vv_t(\vtheta)'}{\partial \vtheta} = -\frac{\partial x_{t|t-1}\vbeta'}{\partial \vtheta}$ are $\mathcal{F}_{t-1}$-measurable since $x_{t|t-1}$ is $\mathcal{F}_{t-1}$-measurable.


\noindent
Next, we study the asymptotic properties of $\vv_t(\vtheta)$ and $\frac{\partial \vv_t(\vtheta)'}{\partial \vtheta}$ at $\vtheta = \vtheta_0$.
From the Kalman recursions, in particular \eqref{tilde_v_t} and \eqref{ap:up:x} which contain random components, it follows that $\vv_t(\vtheta)$ is normally distributed since the recursions are linear and the errors $\eta_t$ and $\vu_t$ are assumed to be NID. 
Furthermore, $(\vv_t(\vtheta_0), \mathcal{F}_t)$ is a martingale difference sequence (MDS) by construction.  Moreover, the MDS is asymptotically stationary since its conditional variance $\Var( \vv_t(\vtheta_0) | \mathcal{F}_{t-1})$ converges asymptotically, $\lim\limits_{n \to \infty}\lim\limits_{t \to \infty} \Var( \vv_t(\vtheta_0) | \mathcal{F}_{t-1})= \mF_0$, as shown in lemma \ref{L:1c} in the appendix, so that $\mF_0$ is the asymptotic variance for the MDS $\vv_t(\vtheta_0)$. Since $\vv_t(\vtheta_0)$ adapted to $\mathcal{F}_t$ is uncorrelated,  normally distributed due to the NID errors as argued above, and has a finite asymptotic variance,  we have  $\vv_t(\vtheta_0) \dto \mathrm{NID}(0, \mF_0^{[n]})$ 
 as $t \to \infty$ for given $n$ and given  the adaption to $\mathcal{F}_t$.
It follows  from the results  of \citet[pp. 85--91]{Mui1982}
on the asymptotic properties of the Wishart distribution that
\begin{align}\label{eq:wis}
\frac{1}{\sqrt{n}} \sum_{t=1}^{n} \vec \left( \vv_t(\vtheta_0) \vv_t(\vtheta_0)' - \mF_0^{[n]} \right) \dto N\left(0, (\mI + \mK) (\mF_0 \otimes \mF_0)\right),
\end{align} 
as $n \to \infty$ where $K$ is the commutation matrix.

\noindent
As shown in lemma \ref{L3:b} in appendix \ref{AS3}, $\left(\frac{\partial \vv_t(\vtheta)'}{\partial \vtheta}\Big\rvert_{\vtheta = \vtheta_0}  \mF_0^{{[n]}^{-1}} \vv_t(\vtheta_0), \mathcal{F}_t \right)$ is a MDS since the partial derivative is $\mathcal{F}_{t-1}$-measurable. Moreover, both terms in the score vector \eqref{eq:sn0}, $\sum_{t=1}^n \left(\vv_t(\vtheta_0) \vv_t(\vtheta_0)' -  \mF_0^{[n]}\right)$ and $\sum_{t=1}^n \frac{\partial \vv_t(\vtheta)'}{\partial \vtheta}\Big\rvert_{\vtheta = \vtheta_0}  \mF_0^{{[n]}^{-1}} \vv_t(\vtheta_0)$, become independent asymptotically. Thus, we obtain comparable results as  \citet[p.~234]{ChaMilPa2009}.

\subsection{Asymptotic distribution of the ML estimator for $b_0 < 1/2$}
\noindent
For $b_0 < 1/2$ the asymptotic properties of the ML estimator 
for ARFIMA processes in the time domain have already been established \citep[cf.\ e.g.\ ][]{Ber1995, Rob2006}. In the asymptotically stationary case, we can show that their results carry over to unobserved components models. 

\noindent 
To derive the asymptotic distribution of the ML estimator for $b_0 < 1/2$, we use a central limit theorem (CLT) for MDS that applies to $\partial \vv_t(\vtheta)'/\partial \vtheta \big\rvert_{\vtheta=\vtheta_0}\mF_0^{[n]^{-1}}\vv_t(\vtheta_0)$ since the partial derivatives at $\vtheta_0$ are asymptotically stationary. Furthermore we show convergence in distribution for the first term in \eqref{eq:sn0}. Lemma \ref{L3:c} in appendix \ref{AS3} summarizes the results for both terms. A martingale CLT for the gradient  \eqref{eq:sn0} then yields $\frac{1}{\sqrt{n}} s_n(\vtheta_0) \dto N(0, \mathcal{J}_0)$, where $\mathcal{J}_0$ is the limiting information matrix \citep[][eq. 11.3.11]{Dav2000}. Asymptotic independence of both stochastic terms in \eqref{eq:sn0} facilitates the computation of $\mathcal{J}_0$. Finally, from \citet[][eq. 11.3.15]{Dav2000} a CLT for the ML estimator $\hat{\vtheta}_n$ follows as shown in theorem \ref{th:4}. 

\begin{theorem}\label{th:4}
	For $b_0 \in [0, 1/2)$ the maximum likelihood estimator is consistent and asymptotically normally distributed 
	$
	\sqrt{n} ( \hat{\vtheta}_n- \vtheta_0) \dto \mathrm{N}(0, \mathcal{J}_0^{-1})$  $\text{as } n \to \infty,
	$
	with
	\begin{align*}
	\mathcal{J}_0&=  \plim_{n \to \infty}\frac{1}{n} \sum_{t=1}^{n} \frac{\partial \vv_t(\vtheta)'}{\partial \vtheta} \Bigg \rvert_{\vtheta = \vtheta_0} \mF_0^{-1} \frac{\partial \vv_t(\vtheta)}{\partial \vtheta'}\Bigg\rvert_{\vtheta = \vtheta_0} \\
	&+ \frac{1}{2}\left[\frac{\partial (\vec \mF)'}{\partial \theta}\Bigg \rvert_{\vtheta=\vtheta_0}(\mF_0^{-1} \otimes \mF_0^{-1})  \frac{\partial \vec \mF}{\partial \theta'}\Bigg \rvert_{\vtheta=\vtheta_0}\right].
	\end{align*}
\end{theorem}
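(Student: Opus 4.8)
The plan is to verify the hypotheses of the standard quasi-maximum-likelihood asymptotics for martingale-based estimators as laid out in \citet[ch.~11]{Dav2000}, after which the stated conclusion follows from \citet[eq.~11.3.15]{Dav2000}. The argument has three ingredients: (i) a central limit theorem for the normalized score $n^{-1/2}\vs_n(\vtheta_0)$; (ii) convergence in probability of the normalized Hessian $n^{-1}\mH_n(\vtheta_n)$ to $-\mathcal{J}_0$, uniformly in a neighborhood of $\vtheta_0$; and (iii) consistency of $\hat\vtheta_n$, so that the mean-value parameter $\vtheta_n$ appearing in the Taylor expansion lies eventually in that neighborhood. For step (iii), consistency for $b$ is already given by Theorem \ref{th:consistency}, and consistency for $(\vbeta,\vech\mSigma)$ follows from standard identification arguments together with the uniform convergence of the objective function on the reduced parameter space $D_3(\kappa_3)$ discussed after Theorem \ref{th:consistency}; once $b$ is confined to a shrinking neighborhood of $b_0$, the remaining parameters enter the likelihood exactly as in a stationary vector model and consistency is routine. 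No rotation or scaling is needed here, so in \eqref{eq3:1} one may take $\mA=\mI$ and $\vnu_n = \sqrt{n}\,\mI$; this is the simplification that distinguishes the $b_0<1/2$ case from the nonstationary case.

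For step (i), I would write the score at $\vtheta_0$ as in \eqref{eq:sn0}, so that it is the sum of two terms. The second term, $-\sum_{t=1}^n \frac{\partial \vv_t(\vtheta)'}{\partial \vtheta}\big\rvert_{\vtheta_0}\mF_0^{[n]^{-1}}\vv_t(\vtheta_0)$, is a martingale difference sequence by Lemma \ref{L3:b}; since for $b_0<1/2$ all partial derivatives at $\vtheta_0$ are asymptotically stationary $I(0)$ processes (Lemmas \ref{L3a} and \ref{lemma:z_t_order}, using $b_0-b$ small under consistency), its summands have finite, convergent conditional second moments and satisfy a Lindeberg-type condition by the Gaussianity of the underlying shocks. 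A martingale CLT then applies. The first term is governed by \eqref{eq:wis}, i.e.\ the Wishart-type CLT for $n^{-1/2}\sum_t \vec(\vv_t(\vtheta_0)\vv_t(\vtheta_0)'-\mF_0^{[n]})$, combined with the deterministic convergence $\frac{\partial(\vec\mF^{[n]})'}{\partial\vtheta}\to\frac{\partial(\vec\mF)'}{\partial\vtheta}$ and $\mF_0^{[n]}\to\mF_0$ from Lemmas \ref{L:1b} and \ref{L:1c}. Because the two stochastic blocks become asymptotically independent (as noted after \eqref{eq:wis}), the joint limit is Gaussian with covariance $\mathcal{J}_0$, where the first summand of $\mathcal{J}_0$ comes from the MDS term and the second from the Wishart term; the cross-term vanishes by asymptotic independence. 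Thus $n^{-1/2}\vs_n(\vtheta_0)\dto \mathrm{N}(0,\mathcal{J}_0)$.

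For step (ii), I would go through the eight pieces $\mH_{n,h}$ in \eqref{eq:Hessian} and show $n^{-1}\mH_{n,h}(\vtheta_n)\pto$ its limit, using: the deterministic convergence of $\mF^{[n]}$ and its first and second $\vtheta$-derivatives; laws of large numbers for the averages $n^{-1}\sum_t\vv_t\vv_t'$, $n^{-1}\sum_t\frac{\partial\vv_t'}{\partial\vtheta}\mF_0^{-1}\frac{\partial\vv_t}{\partial\vtheta'}$, and the mixed terms $n^{-1}\sum_t\frac{\partial\vv_t}{\partial\vtheta'}\otimes\vv_t$ (all of which are averages of asymptotically stationary, ergodic functions of the shocks once the parameters are near $\vtheta_0$); and the fact that the terms involving $\sum_t\vv_t\vv_t'-n\mF^{[n]}$ or $\sum_t\frac{\partial\vv_t}{\partial\vtheta'}\otimes\vv_t$ without centering are $O_p(\sqrt n)$, hence $o_p(n)$, so that $\mH_{n,2}$, $\mH_{n,4}$, $\mH_{n,6}$, $\mH_{n,7}$, $\mH_{n,8}$ either cancel against $\mH_{n,1}$, $\mH_{n,3}$ in the limit or vanish, leaving $n^{-1}\mH_n(\vtheta_0)\pto -\mathcal{J}_0$. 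A uniformity/stochastic-equicontinuity argument on a shrinking neighborhood then upgrades this to $n^{-1}\mH_n(\vtheta_n)\pto-\mathcal{J}_0$ for any $\vtheta_n\pto\vtheta_0$. Combining (i)--(iii) in \eqref{eq3:1} with $\mA=\mI$, $\vnu_n=\sqrt n\,\mI$ gives $\sqrt n(\hat\vtheta_n-\vtheta_0)\dto \mathrm{N}(0,\mathcal{J}_0^{-1})$.

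The main obstacle is step (ii), and within it the treatment of the Hessian pieces $\mH_{n,1}$ and $\mH_{n,3}$ together with the state dimension growing with $n$: one must show that $\frac{\partial(\vec\mF^{[n]})'}{\partial\vtheta}$ and $\frac{\partial^2}{\partial\vtheta\partial\vtheta'}\otimes\vec\mF^{[n]}$ converge despite $\mF^{[n]}$ being defined through an $(n{+}1)$-dimensional Riccati recursion; this is exactly the delicate point flagged in the text as requiring "special care w.r.t.\ the state dimension increasing with $n$," and it leans on Lemmas \ref{L:1b} and \ref{L:1c}. A secondary difficulty is justifying that the CSS-based consistency of \citet{Nie2015} for $b$ indeed collapses the effective parameter space enough that the derivative processes at the estimated parameter are $I(0)$, so that all the laws of large numbers above are legitimate; but this is precisely the content of Theorem \ref{th:consistency} and the discussion following it, so it may be invoked directly.
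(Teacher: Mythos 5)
Your proposal follows essentially the same route as the paper: the score is decomposed as in \eqref{eq:sn0}, the martingale-difference term is handled by Lemmas \ref{L3:b} and \ref{L3:c}, the quadratic term by the Wishart-type result \eqref{eq:wis}, asymptotic independence gives $\mathcal{J}_0$ as the sum of the two variances, and the conclusion is drawn from the ML asymptotics of Davidson (2000, ch.~11.3.3). The paper's own proof is in fact terser than yours --- it delegates your steps (ii) and (iii) entirely to that citation rather than working through the eight Hessian blocks --- so your additional detail is consistent with, and fills in, what the paper leaves implicit.
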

\noindent
The proof is contained in appendix \ref{AS3}. 
\noindent

\subsection{Asymptotic distribution of the ML estimator for $b_0>1/2$}


Having shown that the ML estimator is asymptotically normal for $b_0 < 1/2$, we turn to the nonstationary case $b_0 \in (1/2, 3/2)$. Then the usual MDS CLT does not apply since by lemma \ref{L3a} the derivative of $\vv_t(\vtheta)$ at $\vtheta_0$ is a nonstationary process. 
Inference for a broad class of (potentially) nonstationary models is considered in \citet[][sections 8 and 11]{Woo1994}, where sufficient conditions for consistency and asymptotic (mixed) normality of the ML estimator are derived. \cite{ChaMilPa2009} extend this setup by including a rotation matrix $\mA$. Their setup also nests our fractional trend model and allowed \cite{ParPhi2001} to study  the asymptotic behavior of the NLS estimator for nonlinear cointegration models. It requires to consider the following three sufficient conditions:
\begin{itemize}
	\item[ML1:] $\vnu_n^{-1} \mA' \vs_n(\vtheta_0)  \dto \mN$ as $n \to \infty$,
	\item[ML2:] $-\vnu_n^{-1} \mA' \mH_n(\vtheta_0) \mA \vnu_n^{-1'}  \dto \mM$ a.s.\ as $n \to \infty$ with $\mM$ positive definite with probability one  and
	\item[ML3:] there exists a sequence of invertible normalization matrices $\vmu_n$ such that $\vmu_n \vnu_n^{-1} \to 0$ a.s.\ and \begin{align*}
		\sup_{\vtheta \in \mTheta_n} \lvert \lvert \vmu_n^{-1} \mA'\left( \mH_n(\vtheta) - \mH_n(\vtheta_0) \right)\mA \vmu_n^{-1'}\rvert \rvert \pto 0,
	\end{align*}
	where $\mTheta_n=\left\{\vtheta \big| ||\vmu_n' \mA^{-1} (\vtheta - \vtheta_0) || \leq 1\right\}$ is a sequence of shrinking neighborhoods of $\vtheta_0$.
\end{itemize}
The random matrices $\mM$ and $\mN$ and the nonstochastic matrices $\mA$ and $\vnu_n$ will be defined below in \eqref{eq:A_N_A_S_A_D_nu_n}, and $\vmu_n$ in the proof of lemma \ref{ML:3}. As in the $I(1)$ case considered in \cite{ChaMilPa2009}, under conditions ML1 to ML3, equation \eqref{eq3:1} converges as $n \to \infty$
\begin{align}\label{normality}
	\vnu_n' \mA^{-1} (\hat{\vtheta}_n - \vtheta_0) &= - \left[\vnu_n^{-1} \mA'  \mH_n(\vtheta_0) \mA \vnu_n^{-1'}\right]^{-1} \left[ \vnu_n^{-1} \mA' \vs_n(\vtheta_0) \right] + o_p(1) \dto \mM^{-1} \mN.
\end{align}
Showing that ML1 to ML3 hold, such that \eqref{normality} follows, is the subject of the remaining section, where we proceed as follows. To distinguish between $I(b_0)$ and $I(0)$ processes we first derive an expression for the rotation matrix $\mA$. Lemma \ref{L3:d} contains a functional central limit theorem (FCLT) for the different components in $\mA' \vs_n(\vtheta_0)$, which directly yields the entries of the scaling matrix $\vnu_n$. Finally,  in lemmas \ref{ML:1} to \ref{ML:3} we prove that ML1 to ML3 hold and thus \eqref{normality}. Theorem \ref{th:cons} summarizes the results and defines $\mM$, $\mN$.

\noindent
As lemma \ref{L3a} shows, the partial derivative w.r.t.\ $\vbeta$ at $\vtheta_0$ is the only source of fractional integration in the partial derivatives of $\vv_t(\vtheta)$ at $\vtheta_0$, whereas the partial derivatives w.r.t.\ $\vec \mSigma$ and $b$ are $I(0)$. 
\noindent
Again following \cite{ChaMilPa2009}, to distinguish between $I(0)$ and $I(b_0)$ components, let the rotation matrix be defined as $\mA = \bmat \mA_N & \mA_S & \mA_D\emat$, where $\mA_N$ is $k \times (p-1)$, $\mA_S$ is $k \times (1+p(p+1)/2)$ and $\mA_D$ is $k \times 1$, where $k=p + p(p+1)/2 +1$ is the dimension of $\vtheta$. The scaling matrix $\nu_n$ adjusts for different convergence rates
\begin{align} \label{eq:A_N_A_S_A_D_nu_n}
\mA_N = \bmat \mGamma_0 \\ 0 \\ 0\emat, \ \ \mA_S = \bmat \frac{\vbeta_0}{(\vbeta_0'\mSigma_0^{-1} \vbeta_0)^{1/2}} & 0 \\ 0 & \mI \\ 0 & 0\emat, \ \ \mA_D = \bmat 0\\0  \\ 1 \emat, \ \ \vnu_n^{} = \bmat  n^{b_0}  \mI_{p-1} & 0 \\ 0 & n^{1/2}\mI_{k-p+1} \emat.
\end{align}
From lemma \ref{L3a} and the properties of $\mGamma_0$ in \eqref{gamma:prop} is easy to see that
\begin{align}\label{gammax}
\mA_N' \frac{\partial \vv_t(\vtheta)'}{\partial\vtheta}\Bigg\rvert_{\vtheta = \vtheta_0} = -\mGamma_0' x_t &+ \mGamma_0'a_\vbeta^0(\vu_t, \eta_t) \sim I(b_0)
\end{align}
whereas $\mA_S'\frac{\partial \vv_t(\vtheta)'}{\partial\vtheta}\big\rvert_{\vtheta = \vtheta_0}$, $\mA_D'\frac{\partial \vv_t(\vtheta)'}{\partial\vtheta}\big\rvert_{\vtheta = \vtheta_0}$ are $I(0)$.

\noindent
To derive the distribution properties of $\mM$, $\mN$ in \eqref{normality} we define the partial sums
\begin{align*} 
&\mU_n(r) = \frac{1}{\sqrt{n}} \sum_{t=1}^{\lfloor nr \rfloor} \mF_0^{{[n]}^{-1}}\vv_t(\vtheta_0),  
&&\mW_n(r) = \frac{1}{\sqrt{n}} \sum_{t=1}^{\lfloor nr \rfloor} \mA_S' \frac{\partial \vv_t(\vtheta)'}{\partial\vtheta}\Bigg\rvert_{\vtheta=\vtheta_0} \mF_0^{{[n]}^{-1}} \vv_t(\vtheta_0) , 	\\ 
&Y_n(r) = \frac{1}{\sqrt{n}} \sum_{t=1}^{\lfloor nr \rfloor} \mA_D' \frac{\partial \vv_t(\vtheta)'}{\partial\vtheta}\Bigg\rvert_{\vtheta=\vtheta_0} \mF_0^{{[n]}^{-1}} \vv_t(\vtheta_0), 
&&\mX_n(r) = \frac{1}{n^{b_0-1/2}} \sum_{t=1}^{\lfloor nr \rfloor} \mA_N' \Delta\frac{\partial \vv_t(\vtheta)'}{\partial\vtheta}\Bigg\rvert_{\vtheta=\vtheta_0}, 
\end{align*}
and 
\begin{align*}
\mV_n = \frac{1}{n^{b_0}} \sum_{t=1}^n \mA_N' \frac{\partial \vv_t(\vtheta)'}{\partial\vtheta}\Bigg\rvert_{\vtheta=\vtheta_0} \mF_0^{{[n]}^{-1}} \vv_t(\vtheta_0). 
\end{align*}
Since multiplication with $\mA_S'$ and $\mA_D'$ eliminates the nonstationary part of $\frac{\partial \vv_t(\vtheta)'}{\partial \vtheta}\Big\rvert_{\partial \vtheta = \vtheta_0}$ and  $\frac{\partial \vv_t(\vtheta)'}{\partial \vtheta}\Big\rvert_{\vtheta = \vtheta_0} \mF_0^{{[n]}^{-1}}  \vv_t(\vtheta_0)$ is a MDS, the FCLT of \citet[Lemma 3.3]{ChaMilPa2009} carries over directly for $\mU_n(r)$, $\mW_n(r)$ and $\mY_n(r)$. For $\mX_n(r)$  that contains nonstationary fractionally integrated common components we extend their FCLT in the following lemma where $\Rightarrow$ denotes weak convergence.
\begin{lemma}\label{L3:d}
	For $b_0 \in (1/2, 3/2)$ the following FCLT holds for the partial sums 
	\begin{align*}
		(\mU_n(r), \mW_n(r), \mY_n(r), \mX_n(r)) \Rightarrow (\mU(r), \mW(r), \mY(r), \mX(r)) 
	\end{align*}
	as $n \to \infty$ where $\mU(\cdot)$, $\mW(\cdot)$, $Y(\cdot)$ are multivariate Brownian motions, whereas $\mX(\cdot)$ is fractional Brownian motion of type II that is independent from $\mU(r)$. Furthermore, $\mV_n \dto \mV = \int_{0}^{1} \mX(r) \rd \mU(r)$ and has full rank a.s.
\end{lemma}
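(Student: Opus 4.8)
The plan is to split the four partial-sum processes into a martingale block, $\mU_n,\mW_n,\mY_n$, and a fractional block, $\mX_n$, to treat each with a matching invariance principle, and then to glue them together; throughout we replace the finite-$n$ steady-state variance $\mF_0^{[n]}$ by its limit $\mF_0$, which is legitimate up to $o_p(1)$ terms by Lemma \ref{L:1c}. For the martingale block, $\mF_0^{-1}\vv_t(\vtheta_0)$ is a MDS with respect to $\{\calF_t\}$ by construction, and the summands $\mA_S'(\partial\vv_t(\vtheta)'/\partial\vtheta)|_{\vtheta_0}\mF_0^{-1}\vv_t(\vtheta_0)$ and $\mA_D'(\partial\vv_t(\vtheta)'/\partial\vtheta)|_{\vtheta_0}\mF_0^{-1}\vv_t(\vtheta_0)$ are MDS as well, since $(\partial\vv_t(\vtheta)'/\partial\vtheta)|_{\vtheta_0}$ is $\calF_{t-1}$-measurable and becomes $I(0)$ once premultiplied by $\mA_S'$ or $\mA_D'$ (both facts are recorded in the text, the latter in the line following \eqref{gammax}). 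All three summands are linear functionals of the Gaussian innovations $\{(\eta_s,\vu_s)\}_{s\le t}$, so they possess moments of every order, and their joint conditional second moments converge as $n\to\infty$ by $\mF_0^{[n]}\to\mF_0$ and the asymptotic stationarity of the $I(0)$ factors; this yields the Lindeberg condition and the stabilisation of the conditional covariance matrix, so that a multivariate martingale FCLT \citep[see e.g.][]{Dav2000} delivers $(\mU_n,\mW_n,\mY_n)\Rightarrow(\mU,\mW,\mY)$, a multivariate Brownian motion with $\Var(\mU(1))=\mF_0^{-1}$. This reproduces the argument of \citet[Lemma 3.3]{ChaMilPa2009}.

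For $\mX_n$, Lemma \ref{L3a} together with the definition of $\mA_N$ in \eqref{eq:A_N_A_S_A_D_nu_n} gives $\mA_N'(\partial\vv_t(\vtheta)'/\partial\vtheta)|_{\vtheta_0}=-\mGamma_0'x_{t|t-1}+w_t$ with $w_t$ an $I(0)$ remainder, so that the increment entering $\mX_n$ is $\mA_N'\Delta(\partial\vv_t(\vtheta)'/\partial\vtheta)|_{\vtheta_0}=-\mGamma_0'\Delta x_{t|t-1}+\Delta w_t$. The partial sum of $\Delta w_t$ telescopes to an $O_p(1)$ term, hence is $o_p(1)$ uniformly in $r$ once divided by $n^{b_0-1/2}$ because $b_0>1/2$; the partial sum of $-\mGamma_0'\Delta x_{t|t-1}$ telescopes to $-\mGamma_0'x_{\lfloor nr\rfloor|\lfloor nr\rfloor-1}$, which is a type~II $I(b_0)$ process. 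Since $b_0-1\in(-1/2,1/2)$, the invariance principle for type~II fractionally integrated processes \citep[cf.][]{MarRob1999} gives $n^{-(b_0-1/2)}x_{\lfloor nr\rfloor|\lfloor nr\rfloor-1}\Rightarrow B_{b_0}(\cdot)$, a scalar type~II fractional Brownian motion, whence $\mX_n\Rightarrow\mX=-\mGamma_0'B_{b_0}$.

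It remains to obtain the joint convergence $(\mU_n,\mW_n,\mY_n,\mX_n)\Rightarrow(\mU,\mW,\mY,\mX)$, the independence of $\mX$ from $\mU$, and the limit of $\mV_n$. Joint convergence follows from the two invariance principles above and the Cram\'er--Wold device, because the martingale summands and the increment process of $\mX_n$ are all linear functionals of the same Gaussian innovation sequence, so tightness and convergence of finite-dimensional distributions are obtained together. The independence of $\mX$ and $\mU$ rests on the orthogonality built into the Kalman filter: $x_{t|t-1}$ is the projection of $x_t$ onto $\calF_{t-1}$ and is therefore orthogonal to the innovation $\vv_t(\vtheta_0)=\vbeta_0(x_t-x_{t|t-1})+\vu_t$, and, with the explicit representations of $\vv_t(\vtheta_0)$ and $z_t(\vtheta_0)$ from Theorem \ref{th:1}, this contemporaneous orthogonality carries over to asymptotic independence of the limits, exactly as in \citet[Lemma 3.3]{ChaMilPa2009}. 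For $\mV_n$, use the telescoped form $\mA_N'(\partial\vv_t(\vtheta)'/\partial\vtheta)|_{\vtheta_0}=n^{b_0-1/2}\mX_n(t/n)$ and $\mF_0^{-1}\vv_t(\vtheta_0)=\sqrt n\,(\mU_n(t/n)-\mU_n((t-1)/n))$ to write $\mV_n=\sum_{t=1}^n\mX_n(t/n)(\mU_n(t/n)-\mU_n((t-1)/n))+o_p(1)$. Since $\mX_n(t/n)$ is $\calF_{t-1}$-measurable while the factor to its right is an $\calF_t$-martingale increment, this is a non-anticipating discrete stochastic integral, and the joint weak convergence above, with $\mU_n$ a square-integrable martingale array, permits passing to the limit by a convergence-of-stochastic-integrals theorem of the type used by \citet{ChaMilPa2009}, giving $\mV_n\dto\int_0^1\mX(r)\,\rd\mU(r)=\mV$. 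Full rank a.s.\ is then immediate: conditionally on the path of $B_{b_0}$, $\mV$ is Gaussian with covariance proportional to $\bigl(\int_0^1 B_{b_0}(r)^2\,\rd r\bigr)\mGamma_0'\mF_0^{-1}\mGamma_0$, which is almost surely positive definite because $\mGamma_0$ has full column rank, $\mF_0^{-1}\succ 0$, and $\int_0^1 B_{b_0}(r)^2\,\rd r>0$ a.s.

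The step I expect to be hardest is this joint convergence together with the independence of the fractional limit $\mX$ from the martingale limit $\mU$, and its upgrade to the convergence of the discrete stochastic integral $\mV_n$ towards $\int_0^1\mX\,\rd\mU$. In the $I(1)$ case of \citet{ChaMilPa2009} the increments $\Delta(\partial\vv_t(\vtheta)'/\partial\vtheta)|_{\vtheta_0}$ form a MDS, whereas here they are only $I(b_0-1)$ and may be weakly long-range dependent, so the bounds controlling the cross terms between $\mX_n$ and $\mU_n$ and the verification of the hypotheses of the stochastic-integral convergence theorem have to be re-derived with care, while the state dimension growing with $n$ must be tracked throughout via Lemma \ref{L:1c}.
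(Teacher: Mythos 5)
Your overall architecture matches the paper's: a stationary-MDS FCLT \citep[thm.\ 27.14]{Dav1994} for $\mU_n,\mW_n,\mY_n$ after observing that $\mA_S'$ and $\mA_D'$ annihilate the nonstationary part, and a type~II fractional invariance principle for the telescoped $\mX_n$ (the paper works with $-\mGamma_0'\Delta_+^{-b_0}\eta_t$ from \eqref{Anv0} and cites \citet{JohNie2010} rather than \citet{MarRob1999}, but this is equivalent to your $x_{\lfloor nr\rfloor|\lfloor nr\rfloor-1}$ version up to $I(0)$ terms killed by the $n^{-(b_0-1/2)}$ scaling). For $\mV_n$ you take a genuinely different route: a discrete non-anticipating stochastic integral plus a Kurtz--Protter-type convergence theorem, whereas the paper explicitly decomposes $\partial\vv_t(\vtheta)'/\partial\vbeta\rvert_{\vtheta=\vtheta_0}=\mV_{x,t}+\mV_{\eta,t}+\mV_{u,t}+\mV_{B,t}$, uses $\mGamma_0'\mV_{B,t}=0$, $\mGamma_0'\mF_0^{[n]^{-1}}=\mGamma_0'\mSigma_0^{-1}$ and the predictability of $\mV_{u,t}$ to reduce $\mV_n$ to $-n^{-b_0}\sum_t\mGamma_0'\mSigma_0^{-1}x_t\vu_t+o_p(1)$, and then applies the CLT for fractional processes of \citet[eq.~7]{JohNie2010} to a cross-product of processes driven by \emph{independent} innovation sequences.

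The genuine gap is your independence argument. Contemporaneous orthogonality of $x_{t|t-1}$ and $\vv_t(\vtheta_0)$ only delivers the martingale-transform (zero-mean) property of $\mV_n$; it does not make the limits $\mX$ and $\mU$ independent, because the increments of $\mX_n$ at times $s>t$ load on $\eta_t$, which also enters $\vv_t(\vtheta_0)=\mP_x\vu_t+\vbeta_0 z_t(\vtheta_0)$ through $z_t(\vtheta_0)=\eta_t+\vbeta_0'\mSigma_0^{-1}(\vbeta_0'\mSigma_0^{-1}\vbeta_0)^{-1}\vu_t$. Hence $\mX$ and the full $\mU$ are correlated, and your conditional-on-$B_{b_0}$ Gaussianity computation — on which both the mixed normality and the a.s.\ full-rank claim rest — is not yet justified. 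What actually rescues the result, and what the paper's reduction makes explicit, is the annihilation property $\mGamma_0'\mSigma_0^{-1}\vbeta_0=0$: after the $\mGamma_0'$ rotation the effective integrator increment collapses to $\mGamma_0'\mSigma_0^{-1}\vu_t$, which is driven solely by $\vu_t$ and is therefore independent of the $\eta$-driven $\mX$. You need this elimination of the common $\eta$ innovations (or an equivalent argument) before conditioning on the path of $\mX$; once it is in place, your full-rank computation goes through with the covariance involving $\mGamma_0'\mSigma_0^{-1}\mGamma_0$ rather than $\mGamma_0'\mF_0^{-1}\mGamma_0$.
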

\noindent
The proof is contained in appendix \ref{AS3}.
\noindent
Denoting in the sequel $W(1)$ by $W$ and $Y(1)$ by $Y$, one has 
\begin{align}
\Var(\mW) &= \plim_{n \to \infty} \mA_S'\left(\frac{1}{n} \sum_{t=1}^{n} \frac{\partial \vv_t(\vtheta)'}{\partial \vtheta}\Bigg\rvert_{\vtheta=\vtheta_0}\mF_0^{-1}\frac{\partial \vv_t(\vtheta)}{\partial \vtheta'}\Bigg\rvert_{\vtheta=\vtheta_0}\right) \mA_S, \label{var:W}\\
\Var(Y) &= \plim_{n\to \infty} \mA_D'\left(\frac{1}{n} \sum_{t=1}^{n} \frac{\partial \vv_t(\vtheta)'}{\partial \vtheta}\Bigg\rvert_{\vtheta=\vtheta_0}\mF_0^{-1}\frac{\partial \vv_t(\vtheta)}{\partial \vtheta'}\Bigg\rvert_{\vtheta=\vtheta_0}\right) \mA_D.\label{var:Y}
\end{align}

\noindent
With the FCLT of lemma \ref{L3:d} at hand, lemmas \ref{ML:1} to \ref{ML:3} prove that the conditions ML1 to ML3 hold. They are contained in appendix \ref{AS3}. The following theorem summarizes the results by stating the asymptotic properties of the maximum likelihood estimator.
\begin{theorem}\label{th:cons}
	The ML estimator for model \eqref{dgp:1} satisfies  for $b_0 \in (1/2, 3/2)$,
	\begin{align*}
		\vnu_n' \mA^{-1} \left(\hat{\vtheta}_n - \vtheta_0\right) \dto \mM^{-1} \mN,
	\end{align*}
	where $\mN = \bvec - (\int_{0}^{1} \mX(r)\rd \mU(r))' & \mZ'-\mW' &	Q - Y\evec'$,
	\begin{align}\label{eq:M}
		\mM &= \bmat \int_{0}^{1} \mX(r) \mF_0^{-1} \mX'(r) \rd r & 0 & 0 \\
		0 & \Var(\mZ) + \Var(\mW)& 0 \\
		0 & 0 & \Var(Q) + \Var(Y) \emat,
	\end{align}
\begin{align}
	\mZ_n &= \frac{1}{2} \mA_S' \left[\frac{\partial (\vec \mF)'}{\partial \theta}\Bigg \rvert_{\vtheta=\vtheta_0}(\mF_0^{-1} \otimes \mF_0^{-1})\vec \left(\frac{1}{\sqrt{n}}\sum_{t=1}^{n} (\vv_t(\vtheta_0) \vv_t(\vtheta_0)' -\mF_0)\right)\right] \dto Z, \label{eq:Z} \\
	Q_n &= \frac{1}{2} \mA_D' \left[\frac{\partial (\vec \mF)'}{\partial \theta}\Bigg \rvert_{\vtheta=\vtheta_0}(\mF_0^{-1} \otimes \mF_0^{-1})\vec \left(\frac{1}{\sqrt{n}}\sum_{t=1}^{n}(\vv_t(\vtheta_0) \vv_t(\vtheta_0)' -\mF_0)\right)\right] \dto Q, \label{eq:Q}
\end{align}
as $n \to \infty$ with $Z \sim N(0, \Var(Z))$, $Q \sim N(0, \Var(Q))$, 
\begin{align}
	\Var(\mZ)&=\frac{1}{2} \mA_S' \left[\frac{\partial (\vec \mF)'}{\partial \theta}\Bigg \rvert_{\vtheta=\vtheta_0}(\mF_0^{-1} \otimes \mF_0^{-1})  \frac{\partial (\vec \mF)}{\partial \theta'}\Bigg \rvert_{\vtheta=\vtheta_0}\right] \mA_S, \label{VarZ}\\
	\Var(Q)&=\frac{1}{2} \mA_D'  \left[\frac{\partial (\vec \mF)'}{\partial \theta}\Bigg \rvert_{\vtheta=\vtheta_0}(\mF_0^{-1} \otimes \mF_0^{-1})  \frac{\partial (\vec \mF)}{\partial \theta'}\Bigg \rvert_{\vtheta=\vtheta_0}\right] \mA_D. \label{VarQ}
\end{align}
$\Var(W)$, $\Var(Y)$ are given in \eqref{var:W} and \eqref{var:Y}. 
\end{theorem}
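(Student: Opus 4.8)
The plan is to verify the three sufficient conditions ML1--ML3 above and then read the limit off equation~\eqref{normality}. Since $\vs_n(\hat\vtheta_n)=0$ at an interior optimum, the mean-value expansion~\eqref{eq3:1} reduces the problem to: (i)~identifying the weak limit $\mN$ of $\vnu_n^{-1}\mA'\vs_n(\vtheta_0)$; (ii)~identifying the a.s.\ limit $\mM$ of $-\vnu_n^{-1}\mA'\mH_n(\vtheta_0)\mA\vnu_n^{-1'}$ and showing it is positive definite with probability one; and (iii)~showing that over the shrinking neighborhood $\mTheta_n$ the Hessian at the mean value $\vtheta_n$ may be replaced by $\mH_n(\vtheta_0)$. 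The rotation $\mA=[\mA_N\ \mA_S\ \mA_D]$ and the scaling $\vnu_n$ of~\eqref{eq:A_N_A_S_A_D_nu_n} are tailored so that $\mA_N$ isolates the single $I(b_0)$ direction of $\partial\vv_t(\vtheta)'/\partial\vtheta$ at $\vtheta_0$ exhibited in~\eqref{gammax}, while $\mA_S$ and $\mA_D$ project onto the $I(0)$ directions of Lemma~\ref{L3a}; this is what forces the $n^{b_0}$ versus $n^{1/2}$ split of rates in $\vnu_n$.

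\textbf{Verifying ML1.} I would start from the closed form~\eqref{eq:sn0}, whose only stochastic ingredients are $\vv_t(\vtheta_0)$ and $\partial\vv_t(\vtheta)'/\partial\vtheta$ at $\vtheta_0$. Premultiplying its Wishart term by $\mA_S'$ or $\mA_D'$ and dividing by $\sqrt n$, equation~\eqref{eq:wis} and the continuous mapping theorem produce the Gaussian vectors $\mZ$ and $Q$ of~\eqref{eq:Z}--\eqref{eq:Q}; premultiplied by $n^{-b_0}\mA_N'$ this term vanishes, since the deterministic factor $\partial(\vec\mF^{[n]})'/\partial\vtheta$ is bounded and $n^{1/2-b_0}\to0$. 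The second term of~\eqref{eq:sn0}, premultiplied by $\mA_S'$ or $\mA_D'$, is a martingale difference sequence in asymptotically stationary variables by Lemma~\ref{L3a}, so the MDS central limit theorem gives $-\mW$ and $-Y$ after $n^{-1/2}$ scaling; premultiplied by $n^{-b_0}\mA_N'$ it equals $-\mV_n$ by~\eqref{gammax}, which converges to $-\int_0^1\mX(r) \rd \mU(r)$ by Lemma~\ref{L3:d}. Stacking the blocks and invoking the asymptotic independence of the Wishart term from the stationary MDS term (noted after~\eqref{eq:wis}) together with the independence of $\mX$ from $\mU$ (Lemma~\ref{L3:d}) yields $\vnu_n^{-1}\mA'\vs_n(\vtheta_0)\dto\mN$ with $\mN$ as stated (this is Lemma~\ref{ML:1}).

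\textbf{Verifying ML2 and ML3.} Expand $-\vnu_n^{-1}\mA'\mH_n(\vtheta_0)\mA\vnu_n^{-1'}$ into the eight pieces of~\eqref{eq:Hessian}. Routine bookkeeping paralleling the $I(1)$ analysis of~\cite{ChaMilPa2009} shows that, after scaling, the pieces contracting a second derivative against $\vv_t$ or against $\vv_t\vv_t'-\mF^{[n]}$ and the pieces pairing $\partial\vec\mF^{[n]}/\partial\vtheta$ with a stochastic sum either cancel in pairs or vanish, leaving a deterministic $\partial(\vec\mF)'/\partial\vtheta$-quadratic form together with $\mH_{n,5}=-\sum_t\partial\vv_t'/\partial\vtheta\,(\mF^{[n]})^{-1}\partial\vv_t/\partial\vtheta'$. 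The deterministic form, sandwiched by $\mA_S$ or $\mA_D$, yields $\Var(\mZ)$ and $\Var(Q)$, and vanishes in the $\mA_N$ direction since $n^{1/2-b_0}\to0$; for $\mH_{n,5}$, its $\mA_N\times\mA_N$ contribution scaled by $n^{-2b_0}$ converges to $\int_0^1\mX(r)\mF_0^{-1}\mX'(r) \rd r$ by Lemma~\ref{L3:d}, its $\mA_S$ and $\mA_D$ contributions scaled by $n^{-1}$ converge to $\Var(\mW)$ and $\Var(Y)$ by a law of large numbers for stationary sequences, and its cross-blocks vanish as in the cointegration case. This produces the block-diagonal $\mM$ of~\eqref{eq:M}, whose positive definiteness with probability one follows from the a.s.\ full rank of $\int_0^1\mX\mF_0^{-1}\mX'$ (implied by the full-rank statement for $\mV$ in Lemma~\ref{L3:d}) and from the positive definiteness of the stationary information blocks $\Var(\mZ)+\Var(\mW)$ and $\Var(Q)+\Var(Y)$ (as in Theorem~\ref{th:4}). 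For ML3 I would take $\vmu_n=\diag(n^{b_0-\kappa}\mI_{p-1},\,n^{1/2-\kappa}\mI_{k-p+1})$ with $\kappa\in(0,1/2)$, so that $\vmu_n\vnu_n^{-1}=n^{-\kappa}\mI\to0$, and bound $\sup_{\vtheta\in\mTheta_n}\|\vmu_n^{-1}\mA'(\mH_n(\vtheta)-\mH_n(\vtheta_0))\mA\vmu_n^{-1'}\|$ by mean-value estimates in $\vtheta$ for the increments of $\vv_t(\vtheta)$, $\partial\vv_t(\vtheta)'/\partial\vtheta$ and $\mF^{[n]}$; here Theorem~\ref{th:consistency} confines the relevant $b$-range to $\{b:b_0-b<1/2\}$, so that $z_t(\vtheta)$ and its $b$-derivative stay asymptotically stationary, all partial sums are $O_p$ of their normalizers uniformly on $\mTheta_n$, and the extra $n^{-\kappa}$ margin forces the supremum to zero in probability. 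These steps constitute Lemmas~\ref{ML:1}--\ref{ML:3}; inserting ML1--ML3 into~\eqref{normality} gives $\vnu_n'\mA^{-1}(\hat\vtheta_n-\vtheta_0)\dto\mM^{-1}\mN$ with $\mM$ and $\mN$ as identified, which is the assertion of the theorem.

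\textbf{Main obstacle.} I expect the hard part to be ML3: controlling the Hessian uniformly over a neighborhood that shrinks at the unknown rate $n^{b_0}$ while the state dimension of the steady-state variance $\mF^{[n]}$ grows with $n$. This is precisely where the convergence $\mF^{[n]}\to\mF_0$ of Lemma~\ref{L:1c} and the consistency result of Theorem~\ref{th:consistency} (which keeps $b$ in the region where the fractional prediction error is stationary) are both indispensable. A secondary subtlety is the a.s.\ positive-definiteness of $\mM$ and the asymptotic independence of its $\mX$-driven block from the Gaussian blocks, both of which rest on the independence of $\mX$ from $\mU$ established in Lemma~\ref{L3:d}.
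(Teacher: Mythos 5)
Your proposal follows essentially the same route as the paper: the theorem is obtained by verifying ML1--ML3 (the paper's Lemmas \ref{ML:1}--\ref{ML:3}) and reading the limit off \eqref{normality}, with ML1 handled by splitting the score into the Wishart term and the MDS term under the rotation $\mA$, ML2 by isolating $\mH_{n,5}$ plus the deterministic $\partial\vec\mF/\partial\vtheta$ quadratic form and applying the FCLT of Lemma \ref{L3:d} to the $\mA_N$ block, and ML3 via a normalization $\vmu_n$ slightly slower than $\vnu_n$ (the paper takes $\vmu_n=\vnu_n^{1-\gamma}$ rather than your additive-exponent variant, an immaterial difference) together with Theorem \ref{th:consistency} confining $b$ to the region where $z_t(\vtheta)$ is asymptotically stationary. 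The argument is correct and matches the paper's proof in structure and in all essential steps.
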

\noindent
Define
	$
		\bvec \mR ' & \mS' \evec' = \left[\Var (\mZ) + \Var (\mW) \right]^{-1}(\mZ - \mW).
$
	Then it follows from theorem \ref{th:cons} that
	\begin{align}
		\frac{\vbeta_0' \mSigma_0^{-1}}{(\vbeta_0' \mSigma_0^{-1} \vbeta_0)^{1/2}}	\left[\sqrt{n}(\hat{\vbeta} - \vbeta_0)\right] &\dto \mR	, \label{conv:1}\\
		\mGamma_0'\mSigma_0^{-1} \left[ n^{b_0} (\hat{\vbeta} - \vbeta_0) \right] & \dto -\left[  \int_{0}^{1} \mX(r) \mF_0^{-1} \mX'(r) \rd r \right]^{-1} \int_{0}^{1} \mX(r)\rd \mU(r),\label{conv:2}	\\
				\sqrt{n} \left(\vech \hat{\mSigma} - \vech \mSigma_0 \right) & \dto \mS,  \label{eq:hatSigma_asympDist} \\
		\sqrt{n} (\hat{b} - b_0) &\dto \left[\Var(Q) + \Var(Y)\right]^{-1}(Q - Y).\label{eq:hat_b_asympDist} 
	\end{align}

\noindent
\citet[p.~236]{ChaMilPa2009} conclude from their counterpart of theorem \ref{th:cons} that
\begin{align} \label{eq:hat_beta_asympDist}
\sqrt{n} \left(\hat{\vbeta} - \vbeta_0\right) \dto \frac{\vbeta_0}{(\vbeta_0' \mSigma_0^{-1} \vbeta_0)^{1/2}} \mR.
\end{align}

\noindent
To show this, multiply \eqref{conv:1} by $\vbeta_0/(\vbeta_0' \mSigma_0^{-1} \vbeta_0)^{1/2}$ and then insert 
\eqref{eq:Px} to obtain
\begin{align*}
		\frac{\vbeta_0\vbeta_0' \mSigma_0^{-1}}{\vbeta_0' \mSigma_0^{-1} \vbeta_0}	\left(\sqrt{n}(\hat{\vbeta} - \vbeta_0)\right) 
 = \left(\sqrt{n}(\hat{\vbeta} - \vbeta_0)\right) - \mP_x' \left(\sqrt{n}(\hat{\vbeta} - \vbeta_0)\right) & \dto \frac{\vbeta_0}{(\vbeta_0' \mSigma_0^{-1} \vbeta_0)^{1/2}} \mR.
\end{align*}
Using $\mP_x' = \mGamma_0 \mGamma_0' \mSigma_0^{-1}$, the second term converges to zero in probability for $n\to\infty$ and $b_0>1/2$ since from \eqref{conv:2} one has $ \mGamma_0' \mSigma_0^{-1} \left(n^{b_0}(\hat{\vbeta} - \vbeta_0)\right) = O_p(1)$. 

\noindent
From theorem \ref{th:cons} it follows directly that the maximum likelihood estimator for $\vtheta$ is consistent and asymptotically normal. As in the $I(1)$ model of \cite{ChaMilPa2009}, the estimator for $\vtheta$ converges at rate $\sqrt{n}$ with one particular exception. $\mGamma_0'\mSigma_0^{-1}\hat{\vbeta}$ converges at rate $n^{b_0}$ and is mixed normally distributed. Recall that the rotation $\mGamma_0'\mSigma_0^{-1}$ is the cointegrating matrix as it projects out the common fractional trend  $\mGamma_0'\mSigma_0^{-1}\vbeta_0x_t = 0$. Therefore, the faster convergence rate for the cointegrating matrix in error-correction models carries over to the fractionally integrated unobserved components model.
Additionally, theorem \ref{th:cons} shows that the standard inference results, which were shown to be valid for nonstationary $I(1)$ trends in state space models by \cite{ChaMilPa2009}, remain valid when the persistence of the common component is generalized to the nonstationary fractional domain. Due to  \eqref{eq:hatSigma_asympDist}, \eqref{eq:hat_b_asympDist}, and \eqref{eq:hat_beta_asympDist} the information matrix equality holds asymptotically. Thus, an estimate for the parameter covariance matrix can be obtained from the negative inverse of the  Hessian matrix computed in the numerical optimization.

\noindent
In a nutshell, the ML estimator is consistent for $b \in D$. It converges to the normal distribution as $n \to \infty$ whenever $b_0 < 1/2$, as shown in theorem \ref{th:4}. For $b_0 \in (1/2, 3/2)$ theorem \ref{th:cons} states that the ML estimator is asymptotically normally distributed where a particular rotation of the parameter vector exhibits an asymptotically  mixed normal distribution. Thus,  $t$-ratios for parameter significance and asymptotic tests such as the likelihood ratio test, the Wald test, and the LM test, remain valid in the fractionally integrated UC model within the two distinct intervals in $D$. Therefore, our results for the nonstationary region generalize the statement of \cite{ChaMilPa2009} for the $I(1)$ case. Based on simulation results,  \cite{HarWei2018} report good finite sample performance of the ML estimator for fractionally integrated UC models.

\section{Fractional trends in US inflation}\label{Ch:4}
We apply our fractional UC model to extract a common long-run component from three inflation measures for the US, the consumer price index (CPI), the personal consumption expenditures index (PCI), and the producer price index (PPI). The literature has so far only considered an $I(1)$ common component in US inflation \citep[cf.\ e.g.][]{DomGom2006, StoWat2016} that was interpreted as long-run or core inflation. 
We contribute to the literature by investigating whether the $I(1)$ assumption for the long-run component holds.
Furthermore, we show how estimates for the long-run component $x_t$ together with its fundamental shocks $\eta_t$ are affected if fractional integration is allowed for. If the $I(1)$ assumption for the long-run component is violated in the $I(1)$ UC model, then the asymptotic results of \cite{ChaMilPa2009} are not applicable. In that case the fractional UC model provides valid inferential results, as it covers integration orders $b \in D$.

\noindent
The data was downloaded from the Federal Reserve Bank of St.\ Louis (mnemonics: CPIAUCSL, PCEPI, WPSFD49207), is in monthly frequency and spans from 1961:1 to 2018:12. The three series were generated via log differences
\begin{align*}
	\pi_{i, t} = 100 \times \Delta \log price_{i, t},
\end{align*}
where $i \in \{CPI, PCI, PPI\}$ indexes the inflation measures. 
Since all three series intend to measure price growth for the US, we model them as a function of one common scalar long-run component $x_t$, which in our case is a fractionally integrated trend, and three uncorrelated idiosyncratic components $\vu_t$
\begin{align}
	\bvec \pi_{CPI,t} \\ \pi_{PCI,t} \\ \pi_{PPI, t} \evec = \bvec 1\\ \beta^{PCI}\\ \beta^{PPI}\evec x_t + \bvec u^{CPI}_t \\ u^{PCI}_t \\ u^{PPI}_t \evec. 
\end{align}
This implies a cointegration rank $r = p - 1 = 2$ among the inflation measures, which is confirmed by the sequential likelihood ratio test for fractional time series of \cite{JohNie2012} that clearly rejects the null hypothesis for $r=1$ (p-value $0.001$) but fails to reject for $r=2$ (p-value $0.102$). Furthermore, we allow for $\Var(\eta_t) = \sigma_\eta^2 \neq 1$ and restrict $\vbeta^{CPI}$ to one  for unique identification of $x_t$. Since the standard errors of the three inflation measures differ considerably, we allow for $\vbeta_{PCI} \neq 1$ and $\vbeta_{PPI} \neq 1$.

\noindent
We enrich our ARMA approximation of the fractionally integrated process $x_t$ by additional AR coefficients, which does not affect the asymptotic properties of the ML estimator but reduces the approximation error. Since choosing the same lag order for the AR and the MA polynomial is computationally efficient, as any AR polynomial of length less or equal to $m$ does not affect the dimension of the state vector, we use ARMA($m$, $m$) approximations in the following. As \cite{HarWei2018} demonstrate in a simulation study, setting $m \geq 3$ yields an approximation error that is hardly visible. Therefore, we consider ARMA($4, 4$) approximations in the following.
Since the Wold representation of an ARMA process $a(L)\tilde{x}_t = b(L)\eta_t$ is given by $\tilde{x}_t = a(L)^{-1}b(L) \eta_t = \psi(L)\eta_t$ the approximation error becomes
\begin{align*}
	\tilde{\epsilon}_{t+1}(\vtheta) &=
	\begin{cases}
	\sum_{i=1}^{t} (\varphi_i-\psi_i) \ve_{t+1-i} \mSigma_{\eta_{1:t}\mY_t} \mSigma_{\mY_t}^{-1} \mY_t & \text{if } b < 1,\\
	\sum_{s=1}^{t}\sum_{i=1}^{s} (\varphi_i-\psi_i) \ve_{s+1-i} \mSigma_{\eta_{1:t}\mY_t} \mSigma_{\mY_t}^{-1} \mY_t & \text{if } b \geq 1,
	\end{cases}
\end{align*}
and is again $\mathcal{F}_{t}$-measurable. 

\noindent
Technically, for a fixed $b$, the ARMA coefficients in $a(L)$, $b(L)$, and thus $\psi(L)$, are obtained beforehand by minimizing the mean squared error between the Wold representations of $\tilde{x}_t$ and $x_t$. A continuous function that maps from the integration order $b$ to the ARMA coefficients is
then constructed by first optimizing over a grid of $b$ and second smoothing the ARMA coefficients over $b$ using splines. Hence, optimization of the likelihood for the fractionally integrated UC model is conducted over the scalar fractional integration order $b$ and does not involve the estimation of any parameters in $a(L)$, $b(L)$. This procedure keeps the dimension of the parameter vector $\vtheta$ small during the optimization. Further details together with simulation results are contained in \cite{HarWei2018}.

\noindent
Starting values for the 
ML estimator of $\vtheta$ are obtained by drawing $1000$ combinations of initial values for $b$, $\vbeta$, and $\mSigma$ from uniform distributions with appropriate support and maximizing the likelihood while ignoring the approximation error. As \cite{HarWei2018} show, this procedure already yields quite precise estimates for the unknown parameters and is computationally fast. The optimized parameters corresponding to the largest likelihood are then taken as starting values for the approximation-corrected 
ML estimator. For an unconstrained optimization, we use a matrix logarithm parametrization for the covariance matrices. Standard errors are denoted in parentheses.

%
\noindent
For the loadings we estimate $\hat{\vbeta} = \bvec 1  & \underset{(0.015)}{0.816} & \underset{(0.058) }{1.245} \evec'$, which reflects the heterogeneous volatility of the three inflation measures. The integration order estimate $\hat{b}=0.476$ $(0.030)$ is in line with the literature, where e.g.\  \cite{HasWol1995} estimate an integration order of $0.41$ for US CPI inflation from 1969:1 to 1992:12, while \cite{Bai1996} estimates $\hat{b}=0.47$ for US CPI inflation from 1948:1 to 1990:7. Hence, there is substantial evidence for long-run inflation being mean-reverting and integrated of order around $1/2$. Our estimated integration order of $0.476$ implies that a unit shock still has more than $14\%$ of its initial impact on inflation after one year, and more than $4\%$ of its initial impact after ten years. The variance estimates for the fundamental shocks $\eta_t$, $\vu_t$ are $\log \hat{\sigma}_{\eta}^2 = -3.275$ $(0.065)$,   
$\log \hat{\sigma}_{\vu_{CPI}}^2 = -4.374$ $(0.098)$,
$\log \hat{\sigma}_{\vu_{PCI}}^2 = -5.839$ $(0.232)$, and
$\log \hat{\sigma}_{\vu_{PPI}}^2 = -1.782$  $(0.058)$, implying 
$\hat{\sigma}_{\eta}^2=0.0378$, $ \hat{\sigma}_{\vu_{CPI}}^2 = 0.013$, $\hat{\sigma}_{\vu_{PCI}}^2=0.003$, and $\hat{\sigma}_{\vu_{PPI}}^2 =0.168$. These estimates reflect the relatively high idiosyncratic volatility of the producer price index series, compared to CPI and PCI. 
The log likelihood is $311.278$.
Our results furthermore indicate that the $I(1)$ assumption for the long-run component is likely to be violated.

\noindent
As a benchmark we also report results based on the fractionally cointegrated VAR (FCVAR) model of \cite{JohNie2012}. Note that the two models are not nested, since they specify the fundamental shocks differently. For the FCVAR model, we estimate an integration order $\hat{b}_{FCVAR}=0.394$ $(0.025)$ that is somewhat smaller than the one obtained from our fractionally integrated unobserved components model but provides additional evidence against the $I(1)$ assumption for inflation. 
The smaller estimated integration order for the FCVAR model may be explained by the findings of \citet{SunPhi2004} who show that an additive $I(0)$ term can downward-bias the estimated integration order when the $I(0)$ term is not properly included in the model. Furthermore, we can calculate an estimate for $\vbeta$ from the orthogonal complement of the cointegrating vector of the FCVAR model and obtain $\hat{\vbeta}_{FCVAR} = \bvec 1 & 0.904 & 1.052 \evec' $. Again, the results obtained from the FCVAR model slightly differ from the fractionally integrated unobserved components model but point to a similar direction. 

\begin{figure}[h!]
	\begin{center}\vspace{-0.5 cm}
		\includegraphics[width=\textwidth, trim={0cm 0cm 0cm 0cm}]{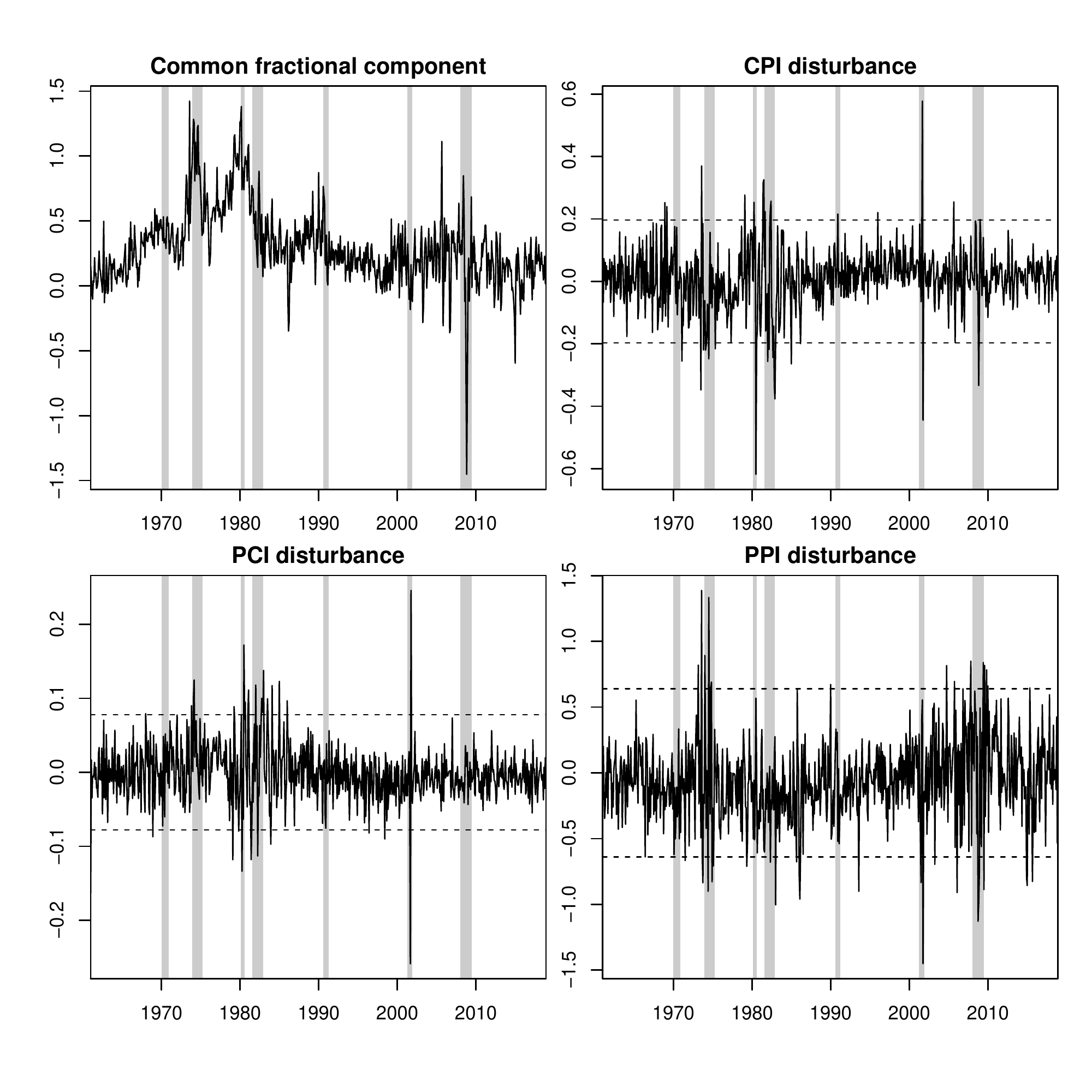}
	\end{center}\vspace{-1cm} \caption{Common fractional component and $I(0)$ idiosyncratic disturbances of US consumer price index, personal consumption expenditures: chain index, and producer price index. Shaded areas correspond to NBER recession periods.} \label{fig:irarma}
\end{figure}

\noindent
Figure \ref{fig:irarma} sketches the dynamics of the estimated common fractionally integrated component $\hat{x}_t$ and the idiosyncratic disturbances $\hat{\vu}_t$ together with two standard deviations (dashed). 
As one can see, the common component captures the dynamics of the three inflation measures well. Due to the long memory property, mean-reversion can take quite a long time, as the 1970s and the second half of the 1980s show. The disturbance terms seem to be $I(0)$, such that the long-run dynamics of the three inflation measures are well described by one common fractionally integrated trend component and, therefore, two fractional cointegration relations exist. 
As the figure shows, $\vu_t$ may be heteroskedastic and even autocorrelated. These features could be included into the model and
 we leave this challenge open for future research.

\begin{figure}[h!]
	\begin{center}\vspace{-0.5 cm}
		\includegraphics[width=\textwidth, trim={0cm 0cm 0cm 0cm}]{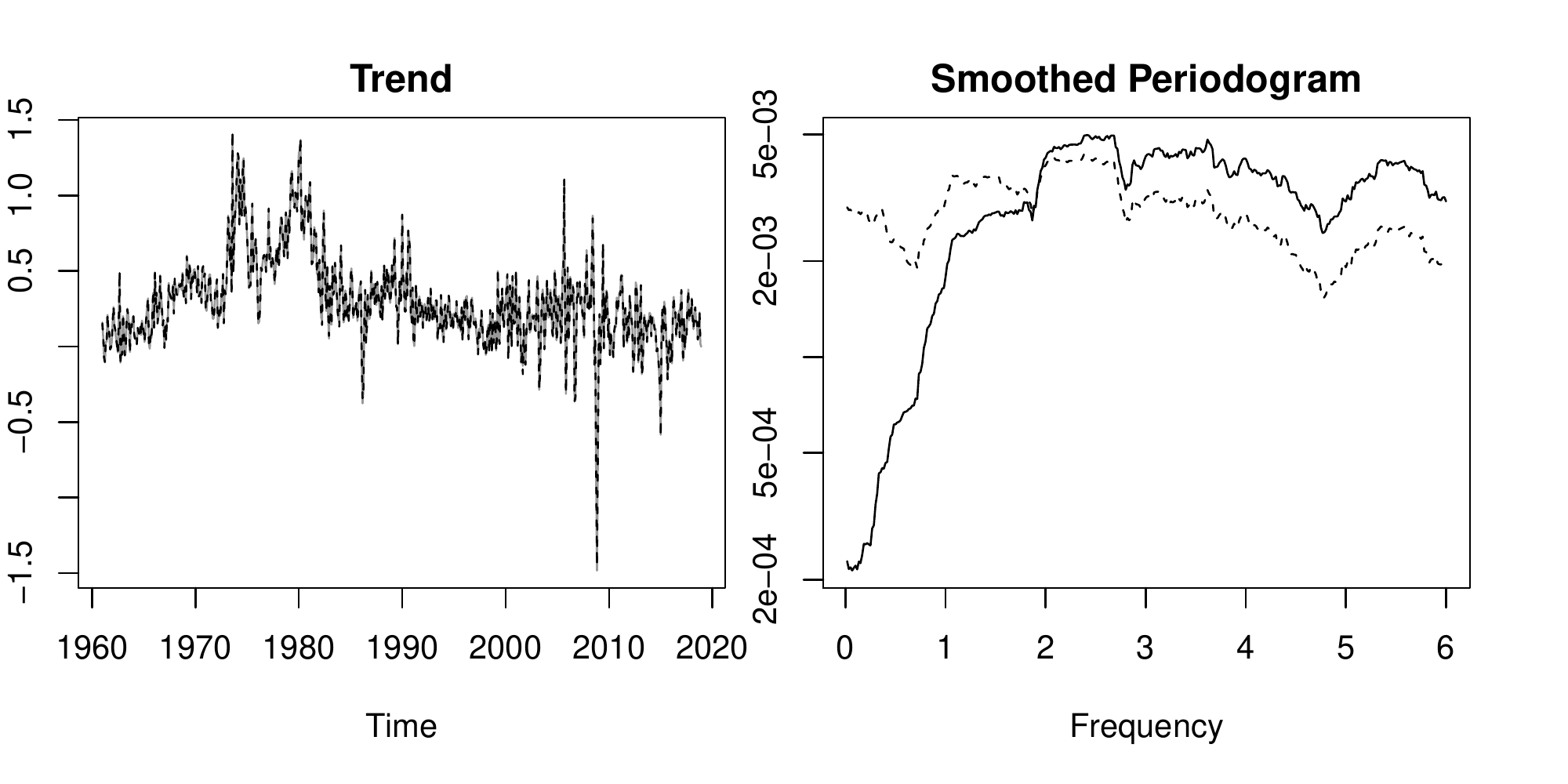}
	\end{center}\vspace{-1cm} \caption{Common trend and smoothed periodogram of the fundamental shock series for the $I(1)$ common trend model (solid) and the fractional trend model (dashed).} \label{fig:i1}
\end{figure}

\noindent
We compare our results with the $I(1)$ UC model that was studied in \cite{ChaMilPa2009} by estimating the latter as a benchmark. 
While we obtain similar estimates for the loadings in $\vbeta$, the log likelihood of the $I(1)$ UC model is $234.322$ and hence clearly smaller than in the fractionally integrated setup. Figure \ref{fig:i1} plots the long-run component estimate from the $I(1)$ UC model for US inflation together with the fractional trend estimate on the left-hand side. The other graph shows the periodogram for the two fundamental shock series that drive the long-run components and are assumed to follow Gaussian white noise processes in both models.

\noindent
As the graphs show, the two trend estimates are very similar, although the solid line was generated by an $I(1)$ filter, that is an unweighted sum of past shocks, whereas the dashed line was generated by a fractional filter with $b=0.476$ that assigns decreasing weights to $\hat{\eta}_{t-h}$ as $h$ increases. The similarity of the two processes results from a violation of the white noise assumption for the fundamental shocks of the $I(1)$ UC model: As the periodogram shows, these shocks exhibit a zero at the origin, which indicates anti-persistence, whereas the periodogram of the fundamental shocks for the fractional unobserved components model does not show such violations of the white noise assumption. In addition, the exact local Whittle estimator (with $m=n^{0.65}$ as in \cite{ShiPhi2005}) suggests an integration order of $-0.486$ for the fundamental shocks of the $I(1)$ trend (and $0.00$ for those of the $I(d)$ trend). Applying an $I(1)$ filter to an anti-persistent shock series with integration order $-0.486$ produces a series that is integrated of order $0.514$, instead of an $I(1)$ trend.

\noindent
Estimating a misspecified $I(1)$ common trend model for US inflation therefore pollutes the fundamental shock estimates and leads to wrong conclusions about their persistence. Since inflation shocks are misleadingly assumed to exhibit a permanent impact, the $I(1)$ model produces incorrect impulse responses, whereas the $I(d)$ model captures the mean-reverting nature of inflation via the impulse response function correctly. 

\noindent
Since the Gaussian white noise assumption for the fundamental shocks is crucial for consistency and asymptotic normality of the ML estimator of \cite{ChaMilPa2009}, a violation may yield inconsistent parameter estimates and incorrect inference. Thus, for US inflation we find that a fractional common component should be considered instead of an $I(1)$ trend component. In general, the fundamental shocks of the permanent component should be checked for (anti-)persistence. 

\noindent
We expect further consequences in the general multivariate $I(d)$ case that carry over from $I(1)$ UC models: If additional unobserved components are added to the model that correlate with the fundamental shocks, as e.g.\ in the correlated $I(1)$ UC model of \cite{MorNelZi2003} or the simultaneous UC model of \cite{Web2011}, a violation of the $I(1)$ assumption may produce spurious cycles and bias the estimates for the latent components.

\section{Conclusion}\label{Ch:5}

We propose a multivariate fractionally integrated unobserved components model and derive a computationally efficient modification of the Kalman filter to estimate a single, fractionally integrated common component. Furthermore, we show consistency and assess the asymptotic distribution of the maximum likelihood estimator  for integration orders $b \in D = \{d \in \mathbb{R}\ | \ 0 \leq d < 3/2,\ d \neq 1/2\}$, thereby generalizing the asymptotic results of \cite{ChaMilPa2009} for a common $I(1)$ component. As we show, the maximum likelihood estimator is asymptotically normally distributed whenever $b_0 < 1/2$. For $b_0 \in (1/2, 3/2)$  the maximum likelihood estimator is also asymptotically normal, however a particular rotation of the parameter vector, corresponding to the cointegrating matrix, exhibits an asymptotically mixed normal distribution with rate $n^{b_0}$. 
We apply our fractionally integrated unobserved components model to extract a long-run component from three US inflation series and obtain an estimated integration order of $0.476$ for the long-run component. Due to a violation of the $I(1)$ assumption  the widely applied $I(1)$ unobserved components model yields anti-persistent long-run shocks, while those from our fractionally integrated 
model appear to be in line with the model assumptions. 

\noindent
Future research could generalize our results to multiple common long-run components, potentially exhibiting different integration orders. Furthermore, a trend-cycle decomposition that allows for autocorrelated idiosyncratic shocks
may yield new insights with regard to common trends and cycles for macroeconomic time series. Finally, settings with dependent shocks, such as the correlated unobserved components model of \cite{MorNelZi2003} and the simultaneous unobserved components model of \cite{Web2011}, could be considered.

\section*{Acknowledgments}
The authors thank Uwe Hassler, Ulrich M\"uller, Morten {\O}.\ Nielsen, Christoph Rust, the participants of the econometric seminar in Nuremberg, the department seminar at the Christian Albrechts University Kiel, the DAGStat conference 2019 in Munich, the workshop on high-dimensional time series in economics and finance 2019 in Vienna, the Annual Meeting of the German Statistical Society 2019 in Trier, the Annual Meeting of the German Economic Society 2019 in Leipzig, the Seminar on International Economic Policy at the University of Zurich, the International Conference on Computational and Financial Econometrics 2019 in London, the Symposium in Honor of Michael Hauser at WU Vienna, and the Standing Field Committee in Econometrics of the German Economic Society for many valuable comments. 
This work was supported by the German Research Foundation (DFG) via the projects TS283/1-1 and WE4847/4-1.
\clearpage
\appendix
\clearpage
\section{Mathematical appendix}
\subsection{Proofs for section 2}\label{AS2}
\noindent
The following lemma is required for theorem \ref{th:1}.
\begin{lemma}\label{L:1}
	For the prediction error variance $\mF_t$ in \eqref{eq:F_t} it holds that
	\begin{align*}
		\vbeta' \mF_t^{-1}\vbeta &= \frac{\vbeta' \mSigma^{-1} \vbeta}{1+\vbeta'\mSigma^{-1}\vbeta \omega^{(1, 1)}_{t}}.
	\end{align*}
\end{lemma}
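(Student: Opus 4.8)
The plan is to recognise $\mF_t$ as a rank-one perturbation of the invertible matrix $\mSigma$ and apply the Sherman--Morrison formula. By \eqref{eq:F_t} one has $\mF_t = \mSigma + \omega^{(1,1)}_{t}\,\vbeta\vbeta'$, where $\omega^{(1,1)}_{t} = \Var_\vtheta(x_t\mid\mathcal{F}_{t-1}) \ge 0$ is the $(1,1)$ entry of $\mP_{t|t-1}$, and $\mSigma$ is diagonal of full rank, hence positive definite and invertible. Since $\mSigma^{-1}$ is positive definite and $\omega^{(1,1)}_{t}\ge 0$, the scalar $1+\omega^{(1,1)}_{t}\vbeta'\mSigma^{-1}\vbeta$ is strictly positive, so $\mF_t$ is invertible and Sherman--Morrison yields
\[
\mF_t^{-1} = \mSigma^{-1} - \frac{\omega^{(1,1)}_{t}\,\mSigma^{-1}\vbeta\vbeta'\mSigma^{-1}}{1+\omega^{(1,1)}_{t}\,\vbeta'\mSigma^{-1}\vbeta}.
\]

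Next I would pre- and post-multiply by $\vbeta'$ and $\vbeta$ respectively. Abbreviating $a:=\vbeta'\mSigma^{-1}\vbeta$ and $\omega:=\omega^{(1,1)}_{t}$, this gives
\[
\vbeta'\mF_t^{-1}\vbeta = a - \frac{\omega a^2}{1+\omega a} = \frac{a(1+\omega a)-\omega a^2}{1+\omega a} = \frac{a}{1+\omega a},
\]
which is exactly the claimed identity after substituting $a$ and $\omega$ back.

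There is essentially no serious obstacle; the only points requiring care are (i) justifying the invertibility of $\mF_t$ and the non-vanishing of the denominator, both of which follow from the positive-definiteness of $\mSigma^{-1}$ together with the non-negativity of the conditional variance $\omega^{(1,1)}_{t}$, and (ii) reconciling the notation $w_t^{(1,1)}$ appearing in \eqref{eq:F_t} with $\omega^{(1,1)}_{t}=\Var_\vtheta(x_t\mid\mathcal{F}_{t-1})$, the $(1,1)$ entry of $\mP_{t|t-1}$ --- these denote the same quantity. Alternatively one could sidestep quoting Sherman--Morrison by directly checking that $\mF_t\bigl(\mSigma^{-1}-\tfrac{\omega\,\mSigma^{-1}\vbeta\vbeta'\mSigma^{-1}}{1+\omega a}\bigr)=\mI$ via a one-line multiplication, but invoking the formula is cleaner.
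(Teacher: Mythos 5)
Your proposal is correct and follows essentially the same route as the paper: the paper's proof likewise starts from the Sherman--Morrison inverse $\mF_t^{-1} = \mSigma^{-1} - \mSigma^{-1}\vbeta\vbeta'\mSigma^{-1}\omega^{(1,1)}_{t}\left(1+\vbeta'\mSigma^{-1}\vbeta\,\omega^{(1,1)}_{t}\right)^{-1}$ and then contracts with $\vbeta$. The only cosmetic difference is that the paper first records $\mF_t^{-1}\vbeta = \mSigma^{-1}\vbeta/(1+\vbeta'\mSigma^{-1}\vbeta\,\omega^{(1,1)}_{t})$ before multiplying by $\vbeta'$, whereas you simplify $\vbeta'\mF_t^{-1}\vbeta$ directly.
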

\begin{proof}[Proof of Lemma \ref{L:1}]
	From the inverse of the prediction error variance
	$\mF_t^{-1} = \mSigma^{-1} - \mSigma^{-1} \vbeta \vbeta' \mSigma^{-1} \omega^{(1,1)}_{t} \left( 1 + \vbeta'\mSigma^{-1}\vbeta \omega^{(1,1)}_{t} \right)^{-1}$,
	it follows that
$
	 \mF_t^{-1}\vbeta = \frac{\mSigma^{-1} \vbeta}{1+\vbeta'\mSigma^{-1}\vbeta \omega_{t}^{(1,1)}}.
	$
\end{proof}
\noindent
\begin{proof}[Proof of Theorem \ref{th:1}]
	Using \eqref{ap:up:x} of the exact state space representation
	$
		\valpha_{t+1|t} = T \valpha_{t|t-1} + T \mP_{t|t-1} \mZ' \mF_t^{-1} v_t(\vtheta),
	$
	and using $\vbeta'\mF^{-1}_t = \vbeta'\mSigma^{-1} \left(1+\vbeta'\mSigma^{-1}\vbeta \omega^{(1,1)}_{t}\right)^{-1}$ analogously to the result of lemma \ref{L:1}, the conditional expectation of ${x}_{t+1}$ is given by
	\begin{align} \label{eq:x_t+1_t}
		{x}_{t+1|t} &= \indic {x}_{t|t-1} + \alpha_{t|t-1}^{(2)} +\left(\indic\omega^{(1,1)}_{t} +  \omega^{(1,2)}_{t}\right)   \frac{\vbeta'\mSigma^{-1}}{1+\vbeta'\mSigma^{-1}\vbeta \omega^{(1,1)}_{t}}\vv_t(\vtheta). 
	\end{align}
	Next, we iterate $\alpha_{t|t-1}^{(2)}$ using \eqref{ap:up:x} and define $N_t = 1+\vbeta'\mSigma^{-1}\vbeta \omega^{(1,1)}_{t}$
	to obtain
	\begin{align}
		\alpha_{t|t-1}^{(2)} 
		& = \alpha_{1|0}^{(t+1)} + \sum_{j=0}^{t-2} \frac{\vbeta'\mSigma^{-1}\omega^{(1,3+j)}_{t-1-j}}{N_{t-1-j}}  (\vy_{t-1-j} - \vbeta x_{t-1-j|t-2-j} ). \label{eq:alpha_t_t-1}
	\end{align}
	After inserting \eqref{eq:alpha_t_t-1} into \eqref{eq:x_t+1_t} one has
	\begin{align}\label{eq:xnew}
		x_{t+1|t} 	&=\indic \left( {x}_{t|t-1} + \frac{\vbeta'\mSigma^{-1}\omega^{(1,1)}_{t}}{N_{t}}
		(\vy_t - \vbeta x_{t|t-1})\right) \notag \\
		&+ \alpha_{1|0}^{(t+1)} + \sum_{j=0}^{t-1} 
		\frac{\vbeta'\mSigma^{-1}\omega^{(1,2+j)}_{t-j}}{N_{t-j}}
		(\vy_{t-j} - \vbeta {x}_{t-j|t-j-1}),
	\end{align}
	where $\alpha_{1|0}^{(t+1)}=0$. 
	To unify the denominators we add and subtract $\indico [ x_{t|t-1} +   \frac{ \vbeta'\mSigma^{-1}}{N_n}  (\vy_t - \vbeta x_{t|t-1})] $ together with $\frac{\varphi_{j+1}(d_0)}{N_n} $ inside the sum of \eqref{eq:xnew}
		\begin{align*}
			x_{t+1|t} &= 
			\indico \left[ x_{t|t-1} +   \frac{ \vbeta'\mSigma^{-1}}{N_n}(\vy_t - \vbeta x_{t|t-1})\right]\\ &+\sum_{j=0}^{t-1} \frac{\varphi_{j+1}(d_0)\vbeta' \mSigma^{-1} }{N_n} (\vy_{t-j} - \vbeta x_{t-j|t-j-1}) + z_{1, t}(\vtheta),
		\end{align*}
		where
		\begin{align*}
			z_{1, t}(\vtheta) &=  \indic \left[x_{t|t-1}+ \frac{\omega_t^{(1,1)} \vbeta'\mSigma^{-1}}{N_t}(\vy_t - \vbeta x_{t|t-1})\right]  - \indico \Big[ x_{t|t-1} \\&+   \frac{ \vbeta'\mSigma^{-1}}{N_n}  (\vy_t - \vbeta x_{t|t-1})\Big]
			+\sum_{j=0}^{t-1} \left[ \frac{\omega_{t-j}^{(1, 2+j)}}{N_{t-j}} -  \frac{\varphi_{j+1}(d_0)}{N_n} \right] \vbeta' \mSigma^{-1} (\vy_{t-j} - \vbeta x_{t-j|t-j-1}).
		\end{align*}
	
	\noindent
	Subtracting $\indico x_{t|t-1}$ and using the fractional difference operator $\sum_{j=0}^{t-1}\varphi_{j+1}(d_0) \vy_{t-j} = (\Delta_+^{-d_0}-1) \vy_{t+1}$ gives
		$
			\Delta^{\indico}{x}_{t+1|t}
			= { \indico\vbeta' \mSigma^{-1}}{N_{n}}^{-1} (\vy_t - \vbeta x_{t|t-1}) +{\vbeta' \mSigma^{-1}}{N_{n}}^{-1}(\Delta^{-{d_0}}_+-1)(\vy_{t+1} - \vbeta \vx_{t+1|t}) + z_{1,t}(\vtheta).
		$
		By taking fractional differences $\Delta_+^{d_0}$ one has
		\begin{align*}
			\Delta^{b_0}_+{x}_{t+1|t}  &= \frac{ \indico\vbeta' \mSigma^{-1}}{N_{n}}\Delta_+^{d_0} (\vy_t - \vbeta x_{t|t-1}) +\frac{\vbeta' \mSigma^{-1}}{N_{n}}(1-\Delta^{d_0}_+)(\vy_{t+1} - \vbeta \vx_{t+1|t})\\
			&+\Delta^{d_0}_+ z_{1, t}(\vtheta)= \frac{\vbeta' \mSigma^{-1}}{N_{n}}(1-\Delta^{b_0}_+)(\vy_{t+1} - \vbeta \vx_{t+1|t}) +\Delta^{d_0}_+ z_{1, t}(\vtheta),
		\end{align*}
		where the last step follows from $\Delta_+^{d_0} L + (1-\Delta_+^{d_0}) = (1-L)^{d_0}_+L + 1 - (1-L)^{d_0}_+=1 - (1-L)^{d_0}_+(1-L) = 1 - (1-L)^{d_0 + 1}_+ = L_{d_0+1}$.
		Bringing all $x_{t+1|t}$ to the left-hand side 
		and solving for ${x}_{t+1|t}$ yields
		\begin{align*}
			{x}_{t+1|t} &= \left( 1 - L_{b_0} \frac{1 + \vbeta' \mSigma^{-1}\vbeta (\omega^{(1,1)}_{n}-1)}{N_{n}} \right)^{-1} \left\{  \frac{\vbeta' \mSigma^{-1}}{N_{n}}L_{b_0}\vy_{t+1} +\Delta^{d_0}_+ z_{1, t}(\vtheta) 	\right\} = \\
			&= \sum_{j=0}^{} \left( \frac{1 + \vbeta' \mSigma^{-1}\vbeta(\omega^{(1,1)}_{n}-1) }{N_{n}}L_{b_0}\right)^j \left\{  \frac{\vbeta' \mSigma^{-1}}{N_{n}}L_{b_0}\vy_{t+1} +\Delta^{d_0}_+ z_{1, t}(\vtheta) \right\} = \\
			&=\sum_{j=0}^{} \left( \frac{1 + \vbeta' \mSigma^{-1}\vbeta(\omega^{(1,1)}_{n}-1)}{N_{n}}\right)^j \left\{  \frac{\vbeta' \mSigma^{-1}}{N_{n}}\vy_{t+1}  +\Delta^{d_0}_+ z_{1, t}(\vtheta) \right\} + w_{t+1},
		\end{align*}
		where $w_{t+1}$ is an $I(0)$ process that accounts for the impact of the fractional differences in $L_{b_0}$. Finally, using a geometric series and plugging in $N_n$ gives
		\begin{align}
			{x}_{t+1|t} &= \left( 1 -\frac{1 + \vbeta' \mSigma^{-1}\vbeta(\omega^{(1,1)}_{n}-1)}{N_{n}} \right)^{-1} \left\{  \frac{\vbeta' \mSigma^{-1}}{N_{n}}\vy_{t+1} +\Delta^{d_0}_+ z_{1, t}(\vtheta) \right\} + w_{t+1}= \notag \\
&=
			\frac{\vbeta'\mSigma^{-1}}{\vbeta' \mSigma^{-1}\vbeta}y_{t+1} + \frac{1 + \vbeta' \mSigma^{-1}\vbeta \omega^{(1,1)}_{n}}{\vbeta'\mSigma^{-1}\vbeta} \Delta^{d_0}_+ z_{1, t}(\vtheta) + w_{t+1} =  \frac{\vbeta'\mSigma^{-1}}{\vbeta' \mSigma^{-1}\vbeta}y_{t+1} - z_{t+1}(\vtheta), \label{eq:kfred}
		\end{align}
where $z_{t+1}(\vtheta) =  -  \frac{1 + \vbeta' \mSigma^{-1}\vbeta \omega^{(1,1)}_{n}}{\vbeta'\mSigma^{-1}\vbeta} \Delta^{d_0}_+ z_{1, t}(\vtheta)  - w_{t+1} $ and the minus sign is included to facilitate its interpretation.
	
	\noindent
	By multiplication of \eqref{eq:kfred} with $\vbeta$ one obtains the conditional expectation
	\begin{align}\label{kf:can}
		\E_\vtheta(\vy_{t+1} | \mathcal{F}_t)&= \beta x_{t+1|t} = \frac{\beta \beta' \mSigma^{-1}}{\vbeta' \mSigma^{-1}\vbeta} y_{t+1} - \vbeta z_{t+1}(\vtheta), 
	\end{align}
and the prediction error in \eqref{kf:can2}.

\noindent
To derive an expression for $z_{t+1}(\vtheta)$, we add and subtract $\frac{\vbeta \vbeta' \mSigma^{-1}  }{\vbeta' \mSigma^{-1}\vbeta }\vv_{t+1}(\vtheta)$ to $\vv_{t+1}(\vtheta) = y_{t+1} - \E_\vtheta (\vy_{t+1}| \mathcal{F}_t)$
\begin{align}\label{vvt:ident}
	\vv_{t+1}(\vtheta) 
&= \left( I -  \frac{\vbeta \vbeta' \mSigma^{-1}  }{\vbeta' \mSigma^{-1}\vbeta } \right)\vy_{t+1} + \frac{\vbeta \vbeta' \mSigma^{-1}  }{\vbeta' \mSigma^{-1}\vbeta }\left( \vy_{t+1} - \E_\vtheta(\vy_{t+1} | \mathcal{F}_t) \right),
\end{align}
since $\left( I -  \frac{\vbeta \vbeta' \mSigma^{-1}  }{\vbeta' \mSigma^{-1}\vbeta } \right)\E_\vtheta(\vy_{t+1} | \mathcal{F}_t)= \left( I -  \frac{\vbeta \vbeta' \mSigma^{-1}  }{\vbeta' \mSigma^{-1}\vbeta } \right)\beta\E_\vtheta(\vx_{t+1} | \mathcal{F}_t)=0$. By adding and subtracting $\sum_{i=1}^t \pi_i(b) \vy_{t+1-i}$ inside the last parentheses equation \eqref{vvt:ident} becomes 
\begin{align}\label{vvt:ident2}
	\vv_{t+1}(\vtheta) &=\left( I -  \frac{\vbeta \vbeta' \mSigma^{-1}  }{\vbeta' \mSigma^{-1}\vbeta } \right)\vy_{t+1} + \frac{\vbeta \vbeta' \mSigma^{-1}  }{\vbeta' \mSigma^{-1}\vbeta }\left( \Delta_+^b \vy_{t+1}  -  \E_\vtheta(\Delta_+^b \vy_{t+1} | \mathcal{F}_t)  \right).
\end{align}

\noindent
We can plug \eqref{vvt:ident2} into \eqref{kf:can2} and solve for $\beta z_{t+1}$ which yields
$			\vbeta z_{t+1}(\vtheta) = \frac{\vbeta \vbeta' \mSigma^{-1}  }{\vbeta' \mSigma^{-1}\vbeta }( \Delta_+^b \vy_{t+1}  -  \E_\vtheta(\Delta_+^b \vy_{t+1} | \mathcal{F}_t) ).$
		This completes the proof.
\end{proof}
\begin{proof}[Proof of Lemma \ref{lemma:z_t_order}]
To derive the integration order of  $z_t(\vtheta)$ given in theorem \ref{th:1}, which is the prediction error of the univariate process  $\Delta_+^b \frac{\vbeta' \mSigma^{-1}  }{\vbeta' \mSigma^{-1}\vbeta }y_{t} $,  we show that $z_t(\vtheta)$ is identical to the residuals in  \cite{Nie2015} for which he determined the integration order. First, consider $z_t(\vtheta)$, for which one has from model \eqref{dgp:1}  $\Delta_+^b \frac{\vbeta' \mSigma^{-1}}{\vbeta' \mSigma^{-1}\vbeta} \vy_{t} = \vzeta_{t}(\vtheta) = \eta_{t} + \Delta_+^b \frac{\vbeta' \mSigma^{-1}}{\vbeta' \mSigma^{-1}\vbeta} \vu_{t}$. Since $\eta_t\sim I(0)$ and $\Delta_+^b \vu_t \sim I(-b)$ their sum  $\vzeta_{t}(\vtheta)$ is $I(0)$ due to the aggregation properties of fractional processes. Furtheremore, since $\eta_t$, $\Delta_+^b \vu_t$ are independent, it follows from \citet[][p. 29]{GraNew1986} that 
$
\zeta_t(\vtheta) = A_+(L, \vtheta) g_t = \sum_{i=0}^{t-1}A_i(\vtheta) g_{t-i}
$
follows a moving average process of order $n-1$, where $g_t$ is Gaussian white noise and zero for all $t \leq 0$.
The coefficients $A_i$ are obtained by matching the partial autocovariance functions of $A_+(L, \vtheta) g_t$ and $\eta_{t} + \Delta_+^b \frac{\vbeta' \mSigma^{-1}}{\vbeta' \mSigma^{-1}\vbeta} \vu_{t}$. They are $A_0=1$, $A_i = \pi_i(b) (1+\vbeta'\mSigma^{-1}\vbeta)^{-1/2}$, and $\Var(g_t)=1+(\vbeta'\mSigma^{-1}\vbeta)^{-1}$.
Due to the $I(0)$ property, $A_+(L,\vtheta)$ remains invertible for $n\to \infty$. Additionally, $ \frac{\vbeta' \mSigma^{-1}}{\vbeta' \mSigma^{-1}\vbeta} \vy_{t} $ has an ARMA($n-1$, $n-1$) state space representation \citep[cf.][ch.\ 3.4]{DurKoo2012}.
	Rearranging with $B_+(L, \vtheta)$ as the truncated inverse of $A_+(L, \vtheta)$ gives $\frac{\vbeta' \mSigma^{-1}}{\vbeta' \mSigma^{-1}\vbeta} \vy_{t} = -(B_+(L, \vtheta) \Delta_+^b - 1) \frac{\vbeta' \mSigma^{-1}}{\vbeta' \mSigma^{-1}\vbeta} \vy_t + g_t$, from which it becomes clear that for a given $\vtheta$ the prediction error $z_t(\vtheta)$ and the residuals $g_t(\vtheta)$ as considered in \cite{Nie2015} are identical since using \eqref{vvt:ident}
	\begin{align}\label{vbar}
		z_t(\vtheta) &=\frac{\vbeta' \mSigma^{-1}}{\vbeta' \mSigma^{-1}\vbeta}\left[ \vy_t - \E_\vtheta(\vy_t | \mathcal{F}_{t-1}) \right]
		= \frac{\vbeta' \mSigma^{-1}}{\vbeta' \mSigma^{-1}\vbeta}  \left[ \vy_t +  (B_+(L, \vtheta) \Delta_+^b - 1) \vy_t   \right]= \notag\\
		&=  B_+(L, \vtheta) \Delta_+^b  \frac{\vbeta' \mSigma^{-1}}{\vbeta' \mSigma^{-1}\vbeta}  \vy_t=  g_t(\vtheta).
	\end{align} 
From $\Delta_+^b \vy_t \sim I(b_0-b)$ it follows that $z_t(\vtheta)\sim I(b_0-b)$. 
\end{proof}
\noindent
The following lemmas are required for theorem \ref{th:2}.
\begin{lemma}\label{L2:a}
	The covariance of $\vy_t$, $\veta_{t-j}$ is given by
	\begin{align}\label{eq:Cov_YEta}
		\Cov_\vtheta\left(y_t,\eta_{t-j}\right)  & =\begin{cases}  \beta \sum_{i=0}^j \varphi_i & \text{ if } b\geq 1, \\
			\beta \varphi_j, & \text{ if } b <1,
		\end{cases}
		\quad j=0,\ldots,t-1.  
	\end{align}
\end{lemma}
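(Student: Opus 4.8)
The plan is to reduce the claim to evaluating $\E_\vtheta(x_t\,\eta_{t-j})$ by using the linear structure of \eqref{dgp:1} together with the orthogonality of the two error sequences. Since $\vy_t = \vbeta x_t + \vu_t$ with $\vu_t\sim\mathrm{NID}(0,\Sigma)$ independent of $\{\eta_s\}$ and all errors mean zero, we get $\Cov_\vtheta(\vy_t,\eta_{t-j}) = \vbeta\,\Cov_\vtheta(x_t,\eta_{t-j}) = \vbeta\,\E_\vtheta(x_t\,\eta_{t-j})$, so the idiosyncratic term drops out and only the latent trend contributes. Because the $\eta_t$ are iid with $\Var(\eta_t)=1$ and $\eta_t=0$ for $t\le 0$, they form an orthonormal system in $L^2$, so the remaining task is to read off a coefficient from the moving-average representation of $x_t$ supplied by \eqref{dgp:1b}.

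For the asymptotically stationary case $b<1$ the indicator $\indic$ equals zero and $d=b$, so \eqref{dgp:1b} reads $x_t=\sum_{i=0}^{t-1}\varphi_i\eta_{t-i}$ with $\varphi_i=\varphi_i(b)$. Then, for $0\le j\le t-1$, $\E_\vtheta(x_t\,\eta_{t-j})=\sum_{i=0}^{t-1}\varphi_i\,\E(\eta_{t-i}\eta_{t-j})=\varphi_j$, which multiplied by $\vbeta$ gives the second line of \eqref{eq:Cov_YEta}.

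For $b\ge 1$ the indicator equals one, $d=b-1$, and \eqref{dgp:1b} becomes the AR(1) recursion $x_t=x_{t-1}+\sum_{i=0}^{t-1}\varphi_i\eta_{t-i}$ with zero starting value. Iterating this recursion down to $x_0=0$ yields the double-sum representation $x_t=\sum_{s=1}^{t}\sum_{i=0}^{s-1}\varphi_i\eta_{s-i}$. Taking the expectation against $\eta_{t-j}$ and using orthonormality of the $\eta$'s, only the terms with $s-i=t-j$ survive; solving $i=s-(t-j)$ subject to the constraints $0\le i\le s-1$ and $1\le s\le t$ shows that for $0\le j\le t-1$ the admissible indices are exactly $s=t-j,\dots,t$, equivalently $i=0,\dots,j$, so $\E_\vtheta(x_t\,\eta_{t-j})=\sum_{i=0}^{j}\varphi_i$. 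Multiplying by $\vbeta$ gives the first line of \eqref{eq:Cov_YEta}.

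The argument is essentially bookkeeping. The only point that needs a little care is verifying, in the $b\ge 1$ case, that the constraint $0\le i\le s-1$ on the iterated sum forces $s$ to range precisely over $t-j,\dots,t$ (and no further), together with keeping straight that $\varphi_i$ here denotes $\varphi_i(d)$ with $d=b-\indic$. I do not anticipate any genuine obstacle beyond this.
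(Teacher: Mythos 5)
Your proposal is correct and follows essentially the same route as the paper's proof: drop the idiosyncratic term by independence, write $x_t$ for $b\geq 1$ as the iterated double sum $\sum_{s=1}^{t}\sum_{i=0}^{s-1}\varphi_i\eta_{s-i}$, and match indices $s-i=t-j$ to collect $\sum_{i=0}^{j}\varphi_i$ (respectively read off $\varphi_j$ directly when $b<1$). The index bookkeeping you flag is exactly the step the paper carries out, with the same conclusion.
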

\begin{proof}[Proof of Lemma \ref{L2:a}]
		Let $b \geq 1$. From $x_t = (1-L)_+^{-\indic}\sum_{i=0}^{t-1}\varphi_i \veta_{t-i}$ it follows that 
	\begin{align*}
		\Cov_\vtheta\left(\vy_t,\eta_{t-j}\right) & =  \Cov_\vtheta ( \vbeta x_t + \vu_t, \eta_{t-j}) =\Cov_\vtheta\left( \vbeta \sum_{s=1}^t\sum_{i=0}^{s-1} \varphi_i \eta_{s-i} + \vu_t , \eta_{t-j} \right) \nonumber \\
		& = \vbeta  \sum_{s=1}^t\sum_{i=0}^{s-1} \varphi_i \Cov_\vtheta(\eta_{s-i}, \eta_{t-j}) =  \vbeta  \sum_{s=t-j}^t \varphi_{s-(t-j)} = \vbeta \sum_{i=0}^j \varphi_i.
	\end{align*}
	For $b<1$ one has
	$
		\Cov_\vtheta\left(\vy_t,\eta_{t-j}\right)  = \Cov_\vtheta(\vbeta \sum_{i=0}^{t-1}\varphi_i \eta_{t-i} + \vu_t, \eta_{t-j} ) = \vbeta\varphi_j,$ $j=0,...,t-1$.
\end{proof}
\begin{lemma}\label{L2:b}
	The autocovariance function of $\vy_t$ satisfies
	\begin{align} \label{eq:Cov_YY}
		\Cov_\vtheta\left( \vy_t, \vy_{t-k} \right)  & = \begin{cases}
			\vbeta \vbeta^\prime \sum_{s=1}^{t-k} \sum_{u=1}^s \sum_{s^\prime=u}^{t-k} \varphi_{s-u} \varphi_{s^\prime -u} \\ + \beta \beta^\prime \sum_{s=t-k+1}^{t} \sum_{u=1}^{t-k} \sum_{s^\prime=u}^{t-k} \varphi_{s-u} \varphi_{s^\prime -u}& \text{ if } b\geq 1, \\ 
			\vbeta \vbeta^\prime \sum_{l=0}^{t-1-k} \varphi_{k+l} \varphi_l,& \text{ if } b<1, \end{cases} k=1,\ldots,t-1.
	\end{align}
\end{lemma}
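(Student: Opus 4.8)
The plan is to pass from the autocovariance of $\vy_t$ to that of the scalar common trend $x_t$ and then exploit the explicit (truncated) moving-average form of $x_t$ in \eqref{dgp:1b}. First I would observe that for $k\ge 1$ the vector $\vy_{t-k}$ depends only on $\eta_1,\dots,\eta_{t-k}$ and $\vu_{t-k}$, whereas $\vu_t$ is independent of all of these and $x_t$ is a function of $\eta_1,\dots,\eta_t$ alone (recall that $\{\vu_s\}$ and $\{\eta_s\}$ are independent NID sequences and all fractional processes have zero starting values under the type~II convention). Hence the three cross terms in $\Cov_\vtheta(\vbeta x_t+\vu_t,\,\vbeta x_{t-k}+\vu_{t-k})$ vanish and $\Cov_\vtheta(\vy_t,\vy_{t-k}) = \vbeta\vbeta'\,\E_\vtheta(x_t x_{t-k})$, using $\E_\vtheta(\eta_t)=0$ and $\Var(\eta_t)=1$. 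It therefore remains to compute $\E_\vtheta(x_t x_{t-k})$ in the two cases.

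For the asymptotically stationary branch $b<1$ the representation \eqref{dgp:1b} is simply $x_t=\sum_{i=0}^{t-1}\varphi_i\eta_{t-i}$. Multiplying the two expansions and retaining only the matched indices (those with $t-i=t-k-j$) gives at once $\E_\vtheta(x_t x_{t-k})=\sum_{l=0}^{t-1-k}\varphi_{k+l}\varphi_l$, which is the second case of \eqref{eq:Cov_YY}; this parallels the computation behind Lemma \ref{L2:a}.

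For $b\ge 1$ the representation reads $x_t=\sum_{s=1}^{t}\sum_{i=0}^{s-1}\varphi_i\eta_{s-i}$. I would reindex the inner sum by $u=s-i$ and interchange the order of summation to write $x_t=\sum_{u=1}^{t}\eta_u\,\Phi_{t,u}$ with $\Phi_{t,u}=\sum_{s=u}^{t}\varphi_{s-u}$, and similarly $x_{t-k}=\sum_{u=1}^{t-k}\eta_u\,\Phi_{t-k,u}$. Taking expectations and using orthonormality of the $\eta_u$ collapses the double index to $u=1,\dots,t-k$, so that $\E_\vtheta(x_t x_{t-k})=\sum_{u=1}^{t-k}\Phi_{t,u}\Phi_{t-k,u}$. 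Splitting the range of $\Phi_{t,u}$ as $\sum_{s=u}^{t}=\sum_{s=u}^{t-k}+\sum_{s=t-k+1}^{t}$, and then interchanging the $u$- and $s$-summations in the first piece so that $1\le u\le s\le t-k$, reproduces exactly the two triple sums in the first case of \eqref{eq:Cov_YY}.

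The only genuine obstacle I anticipate is the index bookkeeping in the $b\ge 1$ case: carrying the reindexing $u=s-i$ through consistently, checking that the diagonal overlap between $x_t$ and $x_{t-k}$ runs over $u=1,\dots,t-k$ rather than $u=1,\dots,t$, and verifying that the two orderings of the pair $(s,u)$ describe the same index set $\{1\le u\le s\le t-k\}$. Once $\Phi_{t,u}$ is introduced these are short and mechanical steps; everything else follows from the independence structure of $(\vu_t,\eta_t)$ already recorded after Theorem \ref{th:1}.
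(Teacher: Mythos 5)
Your proposal is correct and follows essentially the same route as the paper: both reduce $\Cov_\vtheta(\vy_t,\vy_{t-k})$ to $\vbeta\vbeta'\E_\vtheta(x_t x_{t-k})$ via the independence of $\{\vu_s\}$ and $\{\eta_s\}$, and in the $b\geq 1$ case both reindex with $u=s-i$ and collapse the sums using $\Cov(\eta_u,\eta_{u'})=\delta_{uu'}$, with the overlap running over $u=1,\dots,t-k$. Your intermediate quantity $\Phi_{t,u}$ merely repackages the paper's direct quadruple-sum computation, and the split $\sum_{s=u}^{t}=\sum_{s=u}^{t-k}+\sum_{s=t-k+1}^{t}$ yields exactly the two triple sums in \eqref{eq:Cov_YY}.
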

\begin{proof}[Proof of Lemma \ref{L2:b}]
Let $b\geq 1$.  From $x_t = (1-L)_+^{-\indic}\sum_{i=0}^{t-1}\varphi_i \veta_{t-i}$  one has for $k=1,\ldots,t-1$,
\begin{align*}
	\Cov_\vtheta(\vy_t, \vy_{t-k}) & = \Cov_\vtheta\left( \vbeta \sum_{s=1}^t\sum_{i=0}^{s-1} \varphi_i \eta_{s-i} + \vu_t, 
	\vbeta \sum_{s^\prime=1}^{t-k}\sum_{i^\prime=0}^{s^\prime-1} \varphi_{i^\prime} \eta_{s^\prime-i^\prime} + \vu_{t-k}\right) \nonumber \\
	& = \vbeta \left[ \sum_{s=1}^t \sum_{i=0}^{s-1} \sum_{s^\prime=1}^{t-k} \sum_{i^\prime=0}^{s^\prime-1} \varphi_i \varphi_{i^\prime} \Cov(\eta_{s-i}, \eta_{s^\prime-i^\prime})\right] \vbeta^\prime,  \nonumber
	\intertext{and with defining $u= s-i$ and $u^\prime = s^\prime-i^\prime$ 
one obtains
}
	\Cov_\vtheta(\vy_t, \vy_{t-k}) & = \vbeta\vbeta^\prime \sum_{s=1}^{t-k} \sum_{u=1}^s \sum_{s^\prime=1}^{t-k} \sum_{u^\prime=1}^{s^\prime} \varphi_{s-u} \varphi_{s^\prime -u^\prime} \Cov_\vtheta(\eta_u, \eta_{u^\prime})  \\
	& + \vbeta\vbeta^\prime \sum_{s=t-k+1}^{t} \sum_{u=1}^{t-k} \sum_{s^\prime=1}^{t-k} \sum_{u^\prime=1}^{s^\prime} \varphi_{s-u} \varphi_{s^\prime -u^\prime} \Cov_\vtheta(\eta_u, \eta_{u^\prime})\\
	& = \beta\beta^\prime \sum_{s=1}^{t-k} \sum_{u=1}^s \sum_{s^\prime=u}^{t-k} \varphi_{s-u} \varphi_{s^\prime -u} +  \beta \beta^\prime \sum_{s=t-k+1}^{t} \sum_{u=1}^{t-k} \sum_{s^\prime=u}^{t-k} \varphi_{s-u} \varphi_{s^\prime -u},\quad b\geq 1.
\end{align*} 

\noindent
For $b<1$ one has
$
	\Cov_\vtheta\left(\vy_t, \vy_{t-k} \right)   = \Cov_\vtheta\left( \vbeta \sum_{i=0}^{t-1} \varphi_i \eta_{t-i} + u_t, \beta \sum_{l=0}^{t-k-1} \varphi_l \eta_{t-k-l} + u_{t-k} \right) \nonumber = \vbeta \vbeta^\prime \sum_{l=0}^{t-k-1} \varphi_{k+l} \varphi_l$. 
Here $t-i=t-k-l$ was used to obtain $i=k+l$ and $l\leq t-k-1$. 
\end{proof}
\begin{corollary}\label{L2:c}
Given $\vtheta$, the joint normal distribution of $\veta_{1:t} = (\veta_{1}, ..., \veta_t)'$, $\mY_t = (\vy_1', ... , \vy_t')'$ is given by
\begin{align} \label{eq:joint_normal}
	\bvec \eta_{1:t} \\ \mY_t \evec \sim N\left( 0, \bmat \mI_t & \Sigma_{\eta_{1:t} Y_t} \\ 
	\Sigma_{\eta_{1:t}Y_t}' &  \Sigma_{Y_t} \emat \right),
\end{align}
where the $(tp\times t)$ covariance matrix
$\Sigma_{\eta_{1:t}Y_t}'$ has entries $Cov_\vtheta(y_s,\eta_{s-i})$, $i=0,\ldots,s-1$ for $s=1,\ldots,t$ given by \eqref{eq:Cov_YEta}  and zero matrices for all $s$ with $i>0$ and  $s+i\leq t-1$.  The $(tp\times tp)$ covariance matrix $\Sigma_{Y_t} $ has entries given by \eqref{eq:Cov_YY}.
\end{corollary}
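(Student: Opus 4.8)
The plan is to obtain the corollary as a direct assembly of Lemmas \ref{L2:a} and \ref{L2:b} together with the elementary fact that a linear image of a Gaussian vector is Gaussian. First I would establish joint normality. From \eqref{dgp:1} and \eqref{dgp:1b}, $\vy_s = \vbeta x_s + \vu_s$ with $x_s = \indic\, x_{s-1} + \sum_{j=0}^{s-1}\varphi_j \eta_{s-j}$, so that $x_s$, and hence $\vy_s$, is a fixed (deterministic, given $\vtheta$) linear combination of $\eta_1,\ldots,\eta_s$ and $\vu_s$ only. Consequently $(\veta_{1:t}', \mY_t')'$ is a linear transformation of the vector $(\eta_1,\ldots,\eta_t,\vu_1',\ldots,\vu_t')'$, which is NID by assumption; a linear transformation of a Gaussian vector is Gaussian, and every coordinate has mean zero, so $(\veta_{1:t}', \mY_t')'$ is jointly normal with mean zero. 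It then remains only to identify the three blocks of its covariance matrix in \eqref{eq:joint_normal}.

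The $(t\times t)$ leading block is $\Var_\vtheta(\veta_{1:t}) = \mI_t$ because $\Var(\eta_s)=1$ and the $\eta_s$ are mutually independent. For the cross block, the relevant $(s,\cdot)$ sub-blocks of $\mSigma_{\veta_{1:t}\mY_t}'$ are the vectors $\Cov_\vtheta(\vy_s,\eta_r)$, $r=1,\ldots,t$. If $r\le s$, write $i=s-r\in\{0,\ldots,s-1\}$; then $\Cov_\vtheta(\vy_s,\eta_r)=\Cov_\vtheta(\vy_s,\eta_{s-i})$ is exactly the quantity given in \eqref{eq:Cov_YEta} of Lemma \ref{L2:a} (applied with $t$ there replaced by $s\le t$). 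If $r>s$, the shock $\eta_r$ is independent of $\eta_1,\ldots,\eta_s$ and of $\vu_s$, hence independent of $\vy_s$, so $\Cov_\vtheta(\vy_s,\eta_r)=0$ — this is precisely the statement that the sub-blocks attaching a strictly future shock to $\vy_s$ vanish. Finally, the $(tp\times tp)$ block $\mSigma_{\mY_t}$ has $(s,s-k)$ sub-block $\Cov_\vtheta(\vy_s,\vy_{s-k})$; for $k\ge 1$ this is given by \eqref{eq:Cov_YY} of Lemma \ref{L2:b}, while for $k=0$ one uses $\vy_s=\vbeta x_s+\vu_s$ with $\vu_s$ independent of $x_s$ to get $\Var_\vtheta(\vy_s)=\vbeta\Var_\vtheta(x_s)\vbeta'+\mSigma$, where $\Var_\vtheta(x_s)$ follows from the same moving-average/partial-sum expansion of $x_s$ used in the proof of Lemma \ref{L2:b} (equivalently, \eqref{eq:Cov_YY} evaluated at $k=0$ with $t$ replaced by $s$), the extra summand $\mSigma$ being the only difference from the $k\ge1$ formula.

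The main point requiring care is bookkeeping rather than analysis: correctly matching the row/column block structure of $\mSigma_{\veta_{1:t}\mY_t}$ and $\mSigma_{\mY_t}$ to the indices in Lemmas \ref{L2:a} and \ref{L2:b}, pinning down which cross-blocks are identically zero (those pairing $\vy_s$ with $\eta_r$ for $r>s$), and treating the diagonal blocks of $\mSigma_{\mY_t}$ separately from the off-diagonal ones, since \eqref{eq:Cov_YY} is stated only for $k\ge 1$ and the idiosyncratic variance $\mSigma$ contributes only on the diagonal. No genuine obstacle arises; the corollary is essentially a repackaging of the two preceding lemmas.
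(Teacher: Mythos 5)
Your proposal is correct and is exactly the argument the paper intends: the corollary is stated without an explicit proof precisely because it is the routine assembly of Lemmas \ref{L2:a} and \ref{L2:b} with joint Gaussianity following from linearity in the NID shocks. Your added care about the diagonal blocks of $\mSigma_{\mY_t}$ (where the idiosyncratic $\mSigma$ enters and \eqref{eq:Cov_YY} as stated only covers $k\geq 1$) and about which cross-blocks vanish is a correct and slightly more explicit bookkeeping than the paper provides.
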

\noindent
\begin{proof}[Proof of Theorem \ref{th:2}]
	Using lemmas \ref{L2:a}, \ref{L2:b}, and corollary \ref{L2:c}, the conditional expectation of the latent state from the truncated model \eqref{tru:1} can be rearranged such that
	\begin{align*}
		\tilde{x}_{t+1|t} &= x_{t+1|t} + \E_\vtheta (\tilde{x}_{t+1} - x_{t+1} | \mathcal{F}_t) = x_{t+1|t} - \Delta_+^{-\indic}\sum_{i=m+1}^{t} \varphi_i \E_\vtheta(\veta_{t+1-i} | \mathcal{F}_t) = \\
		&= x_{t+1|t} - \Delta^{-\indic}_+\sum_{i=m+1}^{t} \varphi_i \ve_{t+1-i} \mSigma_{\veta_{1:t}\mY_t} \mSigma_{\mY_{t}}^{-1} \mY_t,
	\end{align*}
where the last step follows from
$
	\E_\vtheta(\veta_{1:t}|\mathcal{F}_{t}) = 
	\mSigma_{\veta_{1:t}Y_{t}} \mSigma_{Y_t} ^{-1} \left(\mY_t - \E_\vtheta(Y_t)\right)= \mSigma_{\veta_{1:t}Y_t} \Sigma_{\mY_t} ^{-1} \mY_t.
$
\end{proof}

\subsection{Proofs for section 3} \label{AS3}
\begingroup
\allowdisplaybreaks
\begin{lemma}\label{L:1b}
	For a fixed state dimension $n$ the prediction error covariance matrix of the exact model \eqref{dgp:1} has a steady state
	\begin{align*}
		\mP_{t|t-1} = \mP^{[n]} + O(e^{-t}),
	\end{align*}
	and, therefore, $\omega_{t}^{(i, j)} \to \omega^{(i, j)}_{[n]}$ as $t \to \infty$, and $\lim_{t \to \infty } \mF_t = \mF^{[n]}$ where the superscript $[n]$ denotes the dependence of $\lim_{t \to \infty} \mF_t$ on the system dimension $n$ due to the type II definition of fractional integration. 
\end{lemma}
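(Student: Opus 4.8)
The plan is to observe that, once the state dimension $n$ is fixed, the recursion \eqref{tilde_P_t+1_t} together with $\mF_t = \mZ\mP_{t|t-1}\mZ' + \mSigma = \vbeta\omega^{(1,1)}_t\vbeta' + \mSigma$ from \eqref{eq:F_t} is a \emph{time-invariant} matrix Riccati difference equation, since the matrices $\mT$, $\mR$, $\mZ$ and the noise variance $\mSigma$ do not depend on $t$. It is initialised at $\mP_{1|0} = \Var_\vtheta(\valpha_1) = \mR\mR'$ (recall $\valpha_1 = \mR\eta_1$ and $\Var(\eta_1)=1$), and since $\mSigma$ has full rank we have $\mF_t \geq \mSigma > 0$ for every $t$, so the recursion is well defined. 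I would then invoke the classical convergence theory for the discrete-time Riccati equation \citep[see, e.g.,][and the references therein]{DurKoo2012}, in the spirit of the $I(1)$ argument of \citet{ChaMilPa2009}: if $(\mT, \mZ)$ is detectable and $(\mT, \mR)$ is stabilisable, then $\mP_{t|t-1}$ converges to the unique positive-semidefinite stabilising solution $\mP^{[n]}$ of the algebraic Riccati equation, the associated closed-loop matrix has spectral radius strictly below one, and the convergence is geometric, $\mP_{t|t-1} = \mP^{[n]} + O(e^{-ct})$ for some $c = c(n) > 0$; this is the content of the $O(e^{-t})$ in the statement, the implied rate and constant being allowed to depend on the (fixed) $n$.

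The one step that requires real work is verifying detectability and stabilisability, where the case distinction of \eqref{dgp:1b} enters. Expanding $\det(\mT - \lambda\mI_{n+1})$ along the first column gives $(\indic - \lambda)(-\lambda)^n$, so $\mT$ has the eigenvalue $0$ with multiplicity $n$ and, \emph{only} when $b \geq 1$, the additional simple eigenvalue $\lambda = 1$; hence the sole eigenvalue on or outside the unit circle is $\lambda = 1$, arising exactly in the nonstationary case (for $b < 1$, $\mT$ is nilpotent and both conditions hold trivially). For $b \geq 1$ I would use the PBH (Hautus) criterion, which reduces each property to a single eigenvalue: the right eigenvector of $\mT$ for $\lambda = 1$ is the first standard basis vector $(1, 0, \ldots, 0)'$, and $\mZ(1,0,\ldots,0)' = \vbeta \neq 0$ by the identification restriction (positive first entry of $\vbeta$), so $(\mT,\mZ)$ is detectable; the left eigenvector for $\lambda=1$ is the all-ones vector $(1,1,\ldots,1)$, and $(1,\ldots,1)\mR = 1 + \sum_{j=1}^{n}\varphi_j = \sum_{j=0}^{n}\varphi_j(d) \geq 1 > 0$ since $\varphi_j(d) = \pi_j(-d) \geq 0$ for the mean-reverting order $d = b - \indic \in [0,1)$, so $(\mT,\mR)$ is stabilisable.

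Given the verified hypotheses, the remaining conclusions are immediate: reading off the $(i,j)$ entry of $\mP_{t|t-1} = \mP^{[n]} + O(e^{-t})$ yields $\omega^{(i,j)}_t \to \omega^{(i,j)}_{[n]}$ at geometric rate, and substituting $\omega^{(1,1)}_t \to \omega^{(1,1)}_{[n]}$ into \eqref{eq:F_t} gives $\mF_t \to \vbeta\omega^{(1,1)}_{[n]}\vbeta' + \mSigma =: \mF^{[n]}$. The superscript $[n]$ is unavoidable: $\mT$, $\mR$, $\mZ$ — and hence $\mP^{[n]}$, $\mF^{[n]}$ — depend on $n$ through the dimension $n+1$ of the state vector dictated by the type II construction.

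The main obstacle is localised entirely in the nonstationary region $b \geq 1$, where the steady-state theorem would fail if the single unit-root mode were unobservable or unreachable; the crux is therefore to check both, which the eigenvector computations above do. The structural reason it works is that the common trend $x_t$ loads on every coordinate of $\vy_t$ through $\vbeta \neq 0$ and is itself excited by $\eta_t$ through $\varphi_0 = 1$. A more self-contained alternative, which I would fall back on if a clean citation were not at hand, is to exploit the companion structure of $\mT$: split $\valpha_t$ into its scalar first coordinate $x_t$ and the finite moving-average block governed by the nilpotent part of $\mT$, for which the conditional error covariance stabilises quickly, and then analyse the resulting scalar Riccati recursion for $x_t$ directly.
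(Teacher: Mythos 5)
Your proposal is correct and follows essentially the same route as the paper: both reduce the claim to the classical steady-state theorem for the time-invariant Riccati recursion (positive semidefinite $\mP_{1|0}=\mR\mR'$, stability for $b<1$ via nilpotency of $\mT$, and detectability plus stabilisability for $b\geq 1$ to handle the single unit eigenvalue). The only difference is cosmetic: you verify the two structural conditions with the PBH eigenvector test on the original system, whereas the paper explicitly constructs the matrices $\mD$ and $\mS$ from the Harvey (1990) definitions on an augmented state that absorbs $\vu_t$; the two verifications are equivalent.
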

\begin{proof}[Proof of Lemma \ref{L:1b}]
	As shown by \citet[section 4.4]{AndMoo1979}, any stable, time invariant state space model with positive semi-definite initial prediction error covariance matrix $\mP_{1|0}$ has a steady state solution for $\mP_{t+1|t}$. Furthermore, a non-stable system has a steady state solution for $\mP_{t+1|t}$ if it is stabilisable and detectable and if $\mP_{1|0}$ is positive semi-definite. 
	Note that $\mP_{1|0}$ is given by 
	\begin{align*}
		\bmat 1 & \varphi_1   & \cdots & \varphi_n \\ 
		\varphi_1 & \varphi_1^2 & \cdots & \varphi_1\varphi_n \\ 
		\vdots & \vdots & \ddots & \vdots \\
		\varphi_n & \varphi_n\varphi_1 & \cdots & \varphi_n^2 \emat,
	\end{align*} 
	which follows from $\alpha_1=(x_1, \varphi_1 \eta_1,\ldots,\varphi_n \eta_1)'$. Therefore, the matrix $\mP_{1|0}$  is positive semidefinite. Hence, it is sufficient to show that our model is stable for $b < 1$ and stabilisable and detectable for $b \geq 1$. For this, consider the representation
	\begin{align*}
		\vy_t = \mZ^* \valpha_t^* = \bmat \mZ & \mI \emat \bvec \valpha_t \\ \vu_t \evec, && \valpha^*_t = \mT^* \valpha_{t-1}^* + \mG \bvec \eta_t \\ \vu_t \evec = \bmat \mT & 0 \\ 0 & 0 \emat \valpha_{t-1}^* + \bmat \mR & 0 \\ 0 & \mI \emat \bvec \eta_t \\ \vu_t \evec.
	\end{align*}
	The following definitions are taken from \citet[section 3.3]{Har1990}. A system is stable if the characteristic roots of the transition matrix $\mT^*$ have modulus less than one, i.e. $| \lambda_i (\mT^*)| < 1$ $\forall i$. 
	Furthermore, a system is called stabilisable if there exists a matrix $\mS$ such that $|\lambda_i(\mT^* + \mG \mS')| < 1$ $\forall i$. 
	Finally, a system is detectable if there exists a matrix $\mD$ such that $| \lambda_i(\mT^* - \mD \mZ^*)|<1$ $\forall i$. \\  
	Beginning with the stable case, $b < 1$, we note that $\mT^*$ is a strictly upper triangular matrix, such that its eigenvalues $\lambda_i (\mT^*)=0$ $\forall i$. Another way to see this is to rewrite $x_t$ as $x_t = - \sum_{i=1}^{t-1} \pi_i(b) x_{t-i} + \eta_t$, where all roots of $- \sum_{i=1}^{t-1}\pi_i(b) L^i$ lie outside the unit circle for $b < 1$. Hence, for $b < 1$ the system is stable. 
	
	\noindent
	For $b \in [1, 1.5)$ the system is not stable since its largest eigenvalue equals $1$ due to the unit root imposed on $\vx_t$ via $\mT$. 
	Nonetheless, the nonstationary unobserved components model is detectable since a $(n+1+p) \times p$ matrix $\mD$ with $\mD^{(1,1)}=1/\vbeta^{(1)}$ in its upper left entry and all other elements $0$ yields a strictly upper triangular matrix $\mT^* - \mD \mZ^*$ such that all eigenvalues are zero. Furthermore, the model is stabilisable since an $(n+1+p)\times (1+p)$ matrix $\mS$ with $\mS^{(1,1)}=-1$ and all other entries zero yields eigenvalues that are bounded below one in absolute value due to the stationary coefficients $\varphi_i$. Therefore, the nonstationary model is also stabilisable. Consequently, lemma \ref{L:1b} follows.
\end{proof}

\begin{lemma} \label{L:1c}
	
	As $n \to \infty$ the steady state prediction error variance $\mF^{[n]}$ defined in lemma \ref{L:1b} converges 
	\begin{align*}
		\lim_{n \to \infty} \mF^{[n]} = \lim_{n \to \infty}  \lim_{t \to \infty} \mF_{t}^{[n]} = \mF,
	\end{align*}
	where $\mF_t^{[n]} = \Var_\vtheta ( \vv_t(\vtheta) | \mathcal{F}_{t-1}) $ indicates the dependence of $\mF_t$ on the state dimension $n$, and $0 < \mF < \infty$.
\end{lemma}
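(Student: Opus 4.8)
The plan is to reduce the statement to the convergence of a single scalar quantity, and then to identify that quantity as the one‑step prediction error variance of a stationary ARFIMA‑type process, to which the Kolmogorov--Szeg\H{o} formula applies. By \eqref{eq:F_t} one has $\mF_t=\vbeta\,\omega^{(1,1)}_t\vbeta'+\mSigma$ with $\omega^{(1,1)}_t=\Var_\vtheta(x_t\mid\mathcal{F}_{t-1})$, and Lemma~\ref{L:1b} yields $\omega^{(1,1)}_t\to\omega^{(1,1)}_{[n]}$, hence $\mF^{[n]}=\vbeta\,\omega^{(1,1)}_{[n]}\vbeta'+\mSigma$, as $t\to\infty$ for fixed $n$. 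Since $\mSigma$ is diagonal with full rank, it therefore suffices to show that $\omega^{(1,1)}_{[n]}$ converges as $n\to\infty$ to a finite limit $\omega^{(1,1)}\in[0,\infty)$; then $\mF^{[n]}\to\mF:=\vbeta\,\omega^{(1,1)}\vbeta'+\mSigma$, and $0\prec\mSigma\preceq\mF\preceq\omega^{(1,1)}(\vbeta'\vbeta)\,\mI_p+\mSigma\prec\infty$, i.e.\ $0<\mF<\infty$.

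To handle $\omega^{(1,1)}_{[n]}$ I would pass to the scalar ``canonical'' observable $w_t:=\tfrac{\vbeta'\mSigma^{-1}}{\vbeta'\mSigma^{-1}\vbeta}\vy_t=x_t+\nu_t$, where $\nu_t:=\tfrac{\vbeta'\mSigma^{-1}}{\vbeta'\mSigma^{-1}\vbeta}\vu_t$ is i.i.d.\ with variance $\sigma^2_\nu:=(\vbeta'\mSigma^{-1}\vbeta)^{-1}$ and independent of $\{\eta_s\}$. Arguing exactly as in the proof of Lemma~\ref{lemma:z_t_order} (the directions of $\vy_s$ orthogonal to $\vbeta$ in the $\mSigma^{-1}$‑metric are independent of $\{w_s\}$ and of $\nu_t$) gives $\E_\vtheta(x_t\mid\mathcal{F}_{t-1})=\E_\vtheta(w_t\mid w_1,\dots,w_{t-1})$, and then a short computation with $\mF_t=\vbeta\,\omega^{(1,1)}_t\vbeta'+\mSigma$ yields $\omega^{(1,1)}_t=\sigma^2_{w,t}-\sigma^2_\nu$, where $\sigma^2_{w,t}:=\Var_\vtheta(w_t-\E_\vtheta(w_t\mid w_1,\dots,w_{t-1}))$; hence $\omega^{(1,1)}_{[n]}=\sigma^2_{[n]}-\sigma^2_\nu$ with $\sigma^2_{[n]}:=\lim_{t\to\infty}\sigma^2_{w,t}$. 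In the frozen $(n{+}1)$‑dimensional system the latent state satisfies $(1-L)^{\indic}x_t=\sum_{j=0}^{n}\varphi_j\eta_{t-j}=:\phi_n(L)\eta_t$, so $(1-L)^{\indic}w_t$ is a stationary moving‑average process with $2\pi f_n(\lambda)=\lvert\phi_n(e^{\iu\lambda})\rvert^{2}+\sigma^2_\nu\lvert 1-e^{\iu\lambda}\rvert^{2\indic}$; since $(1-L)^{\indic}$ is invertible in the truncated (type II) sense with a causal inverse, the one‑step prediction error variance of $w_t$ equals that of $(1-L)^{\indic}w_t$, so the Kolmogorov--Szeg\H{o} formula gives $\sigma^2_{[n]}=\exp\left\{\tfrac{1}{2\pi}\int_{-\pi}^{\pi}\log\left(2\pi f_n(\lambda)\right)\rd \lambda\right\}$.

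It then remains to pass $n\to\infty$ under the integral. Because $\varphi_j=\pi_j(-d)$ with $d=b-\indic\in[0,1/2)$ is nonnegative and nonincreasing in $j$, Abel summation gives the $n$‑independent bound $\lvert\phi_n(e^{\iu\lambda})\rvert\le 2/\lvert 1-e^{\iu\lambda}\rvert$, whence $\log\left(2\pi f_n(\lambda)\right)\le\log\left(4\lvert 1-e^{\iu\lambda}\rvert^{-2}+4\sigma^2_\nu\right)$; on the other side $2\pi f_n(\lambda)\ge\sigma^2_\nu\lvert 1-e^{\iu\lambda}\rvert^{2\indic}$ gives $\log\left(2\pi f_n(\lambda)\right)\ge\log\sigma^2_\nu+2\indic\log\lvert 1-e^{\iu\lambda}\rvert$. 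Both bounds are independent of $n$ and integrable on $[-\pi,\pi]$ (each has only a logarithmic singularity at $\lambda=0$, and $\int_{-\pi}^{\pi}\log\lvert 1-e^{\iu\lambda}\rvert\,\rd \lambda=0$), while $\phi_n(e^{\iu\lambda})\to(1-e^{\iu\lambda})^{-d}$ for every $\lambda\neq0$, so dominated convergence yields $\sigma^2_{[n]}\to\sigma^2_\infty:=\exp\left\{\tfrac{1}{2\pi}\int_{-\pi}^{\pi}\log\left(\lvert 1-e^{\iu\lambda}\rvert^{-2d}+\sigma^2_\nu\lvert 1-e^{\iu\lambda}\rvert^{2\indic}\right)\rd \lambda\right\}\in(0,\infty)$. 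Consequently $\omega^{(1,1)}_{[n]}\to\omega^{(1,1)}:=\sigma^2_\infty-\sigma^2_\nu<\infty$, which is in fact strictly positive because $\omega^{(1,1)}_{[n]}=\Var_\vtheta(x_t-\E_\vtheta(x_t\mid\mathcal{F}_{t-1}))\ge\Var_\vtheta(\eta_t)=1$ (the shock $\eta_t$ enters $x_t$ with coefficient $\varphi_0=1$ and is independent of $\mathcal{F}_{t-1}$ and of $x_t-\eta_t$); this establishes $\mF^{[n]}\to\mF$ with $0<\mF<\infty$. The main obstacle is this last step: because the truncation lag in $\phi_n$ grows with the state dimension $n$, $\phi_n(e^{\iu\lambda})$ is unbounded near $\lambda=0$, and one must bound $\log(2\pi f_n)$ there uniformly in $n$ — it is the monotonicity of $\{\varphi_j\}$, through Abel summation, that makes the singularity at the origin uniformly log‑integrable and thus lets dominated convergence go through.
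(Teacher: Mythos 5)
Your argument is correct, but it takes a genuinely different route from the paper's. The paper works directly with the matrix recursion: writing $\mF_t^{[n]} = \mSigma - \vbeta\vbeta'/(\vbeta'\mSigma\vbeta) + \vbeta\vbeta' + \Var_\vtheta((\mI-\mP)'\Delta_+^b\vu_t\mid\mathcal{F}_{t-1})$, it applies the law of total variance to split the last term into an unconditional part $A_n$ and a correction $B_{n,t}$, and then shows that the increments $A_{n+1}-A_n$ and $B_{n+1,t}-B_{n,t}$ are of order $O(n^{-2-2b})$ and $O(n^{-1-b})$ respectively (hence summable), with boundedness following from $\sum_i\pi_i^2(b)<\infty$. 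You instead collapse the problem to the scalar signal-plus-noise process $w_t=x_t+\nu_t$, identify $\omega^{(1,1)}_{[n]}$ with the steady-state innovation variance of that process minus $\sigma_\nu^2$, and evaluate the limit via the Kolmogorov--Szeg\H{o} formula and dominated convergence, with the Abel-summation bound $\lvert\phi_n(e^{\iu\lambda})\rvert\le 2/\lvert 1-e^{\iu\lambda}\rvert$ doing the real work of making the singularity at the origin uniformly log-integrable. Your reduction to $w_t$ is sound (it is the same projection used in the proof of lemma \ref{lemma:z_t_order}), and the payoff is an explicit closed-form expression for the limit $\mF$, which the paper's telescoping argument does not deliver; the paper's approach in exchange avoids spectral theory entirely and stays within elementary variance decompositions. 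The one step you should tighten is the identification of $\sigma^2_{[n]}=\lim_{t\to\infty}\sigma^2_{w,t}$ (a type-II prediction based on a finite, zero-initialized past) with the Szeg\H{o} innovation variance of the doubly-infinite stationary process $(1-L)^{\indic}w_t$: the causal bijection between $(w_1,\dots,w_{t-1})$ and its differences gives equality of the finite-$t$ prediction error variances, but you still need the standard (and here valid, by lemma \ref{L:1b}) fact that the finite-past prediction error of the stationary MA-plus-noise process converges to the infinite-past innovation variance before the Szeg\H{o} formula applies.
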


\begin{proof}[Proof of Lemma \ref{L:1c}]
	To prove lemma \ref{L:1c} we first consider $\mF_t^{[n]}$ and derive the limits for $\mF_t^{[n+1]} - \mF_t^{[n]}$. Note that $\mF_t^{[n+1]}$, $\mF_t^{[n]}$ are identical for $t \leq n$, such that $\lim_{n \to \infty}\mF_t^{[n+1]} - \mF_t^{[n]}=0$ holds for a fixed $t$. Thus, we only consider $t > n$.  Next, we show that the limit of $\mF_t^{[n]}$ is bounded. 
	
	\noindent
	To simplify the notation, we define $P=I - \frac{\mSigma^{-1}\vbeta \vbeta'}{\vbeta' \mSigma^{-1}\vbeta}$ analog to \eqref{eq:Px}. Then from theorem \ref{th:1} $\vv_t(\vtheta)= \mP' \vy_t + (I-\mP)'(\Delta_+^b \vy_t - \E_\vtheta(\Delta_+^b \vy_t | \mathcal{F}_{t-1}))$, such that $\mF_t^{[n]}=\Var_\vtheta(\vv_t(\vtheta) | \mathcal{F}_{t-1}) = \Var_\vtheta(\mP' \vu_t(\vtheta) + (I - \mP)' \Delta_+^b \vy_t | \mathcal{F}_{t-1} ) =- \mP' \mSigma \mP + \mP' \mSigma + \mSigma P+ \Var_\vtheta( (I - \mP)' \Delta_+^b \vy_t | \mathcal{F}_{t-1})$ since $P'\vbeta x_t = 0$ and $\Cov_\vtheta(\mP'\vu_t(\vtheta), (I-P)'\Delta_+^b \vy_t | \mathcal{F}_{t-1}) = \mP'\mSigma (I - \mP)$. Furthermore $- \mP' \mSigma \mP + \mP' \mSigma + \mSigma P = \mSigma - \frac{\vbeta \vbeta'}{\vbeta' \mSigma^{-1}\vbeta}$ which can be seen by plugging in $P$. Again using $P'\vbeta x_t = 0$, the latter term is $\Var_\vtheta( (I - \mP)' \Delta_+^b \vy_t | \mathcal{F}_{t-1}) = \Var_\vtheta(  \Delta_+^b \vy_t - \mP' \Delta_+^b\vu_t(\vtheta)| \mathcal{F}_{t-1}) =  \Var_\vtheta(   \vbeta \eta_t(\vtheta) + (I -  \mP)' \Delta_+^b\vu_t(\vtheta)| \mathcal{F}_{t-1}) = \vbeta \vbeta' +  \Var_\vtheta(  (I - \mP)' \Delta_+^b\vu_t(\vtheta)| \mathcal{F}_{t-1})$. Thus
	\begin{align}\label{eq:Fn}
	 \mF_t^{[n]} = \mSigma - \frac{\vbeta \vbeta'}{\vbeta' \mSigma \vbeta} + \vbeta \vbeta' + \Var_\vtheta(  (I - \mP)' \Delta_+^b\vu_t(\vtheta)| \mathcal{F}_{t-1}).
	\end{align}
	For the latter term we define $A_{n}=		\Var_\vtheta(  (I - \mP)' \sum_{i=0}^{n-1}\pi_i(b)\vu_{t-i}(\vtheta))$, which is independent of $t$ due to $t>n$, and $B_{n, t}= \Var_\vtheta(\E_\vtheta(  (I - \mP)' \sum_{i=0}^{n-1}\pi_i(b)\vu_{t-i}(\vtheta)| \mathcal{F}_{t-1} ))$. It follows from the law of total variance that
	\begin{align}\label{eq:LtV}
		\Var_\vtheta(  (I - \mP)' \sum_{i=0}^{n-1}\pi_i(b)\vu_{t-i}(\vtheta)| \mathcal{F}_{t-1})= A_{n} - B_{n, t}.
	\end{align}
	
	\noindent
	Since all other terms are constant, the difference between $\mF_t^{[n+1]}$ and $\mF_t^{[n]}$ solely depends on \eqref{eq:LtV} and is given by 
	\begin{align}\label{eq:F}
		\mF_t^{[n+1]} - \mF_t^{[n]} = A_{n+1} - A_{n} - (B_{n+1, t} - B_{n, t}).
	\end{align} In the following, we consider $A_{n+1} - A_{n} $ and $B_{n+1, t} - B_{n, t}$ separately, where we show that their limits converge to zero. 
	
	\noindent
	Since $A_{n+1} =\Var_\vtheta(  (I - \mP)' \sum_{i=0}^{n} \pi_i(b)\vu_{t-i}(\vtheta)) = (I - \mP)'\mSigma (I - \mP)\sum_{i=0}^n \pi_i^2 (b)$, and analog for $A_{n}$, one directly has
	\begin{align}
		A_{n+1} - A_{n} = (I - \mP)'\mSigma (I - \mP)\pi_n^2(b).
	\end{align}
	Note that $A_n$ is invariant w.r.t. $t$, and $\lim_{n \to \infty} \lim_{t \to \infty} (A_{n+1} - A_{n})  = \lim_{n \to \infty}  (A_{n+1} - A_{n}) =0$ since $\pi_n^2(b) = O(n^{-2-2b})$  \citep[cf.\ e.g.][lemma 5.1]{Has2018}.
	
	\noindent
	The calculation of $B_{n+1, t} - B_{n, t}$ is more involved. By writing $B_{n+1, t} = B_{n, t} + C_{n+1, t} + D_{n+1, t} + D_{n+1, t}'$ with $C_{n+1, t}=\Var_\vtheta(\E_\vtheta(  (I - \mP)' \pi_n(b) \vu_{t-n}(\vtheta)| \mathcal{F}_{t-1} ))$, $D_{n+1, t} = \Cov_\vtheta ( \E_\vtheta(  (I - \mP)' \sum_{i=0}^{n-1}\pi_i(b)\vu_{t-i}(\vtheta)| \mathcal{F}_{t-1} ), \E_\vtheta(  (I - \mP)' \pi_n(b)\vu_{t-n}(\vtheta)| \mathcal{F}_{t-1} )) $ the difference becomes $B_{n+1, t} - B_{n, t} = C_{n+1, t} + D_{n+1, t} + D_{n+1, t}'$. 
	
	\noindent
	For $D_{n+1, t}$, define $\mY_{t-1} = (y_{1}',...,y_{t-1}')'$ and $\mSigma_{Y_{t-1}} = \Var_\vtheta (\mY_{t-1})$. Then it follows from \citet[lemma 1]{DurKoo2012}
\begin{align}
	&\Cov_\vtheta \left( \E_\vtheta \left(  \sum_{i=0}^{n-1}\pi_i(b)\vu_{t-i}(\vtheta)\bigg| \mathcal{F}_{t-1} \right), \E_\vtheta \left(   \pi_n(b)\vu_{t-n}(\vtheta )| \mathcal{F}_{t-1} \right)\right) \notag \\
	&= \sum_{i=0}^{n-1}\pi_i(b) \Cov_\vtheta \left( \Cov_\vtheta (\vu_{t-i}(\vtheta), \mY_{t-1})\mSigma_{\mY_{t-1}}^{-1} \mY_{t-1},  \Cov_\vtheta (\vu_{t-n}(\vtheta), \mY_{t-1})\mSigma_{Y_{t-1}}^{-1} \mY_{t-1}\right) \pi_{n}(b) = \notag \\
	&=  \pi_{n}(b) \sum_{i=0}^{n-1}\pi_i(b)\Cov_\vtheta (\vu_{t-i}(\vtheta), \mY_{t-1}) \mSigma_{\mY_{t-1}}^{-1}\Cov_\vtheta (\vu_{t-n}(\vtheta), \mY_{t-1})' \notag \\
	&= \pi_{n}(b) \sum_{i=1}^{n-1}\pi_{i}(b)\mSigma E_{t-i} \mSigma_{\mY_{t-1}}^{-1}E_{t-n}' \mSigma, \label{eq:D}
\end{align}
where $E_{j} = \bmat 0_{p\times p} & \cdots & 0_{p\times p} & I_{p\times p} & 0_{p\times p} & \cdots & 0_{p\times p} \emat$ is a $p \times (t-1)p$ selection matrix, with an identity matrix in its $j$-th block and all other blocks zero. Hence, $ \mSigma_{\mY_{t-1}}^{-1}E_{t-n}'$ picks the columns corresponding to $\Cov_\vtheta(\mY_{t-1}, \vy_{t-n})$ from the inverse $\mSigma_{\mY_{t-1}}^{-1}$, and hence $\mSigma E_{t-i} \mSigma_{\mY_{t-1}}^{-1}E_{t-n}' \mSigma$ is finite for all $t > n$.  Since the sum $\sum_{i=1}^{n-1}\pi_{i}(b) < \infty$ for all $n$ \citep[][lemma 5.2]{Has2018}, it follows for \eqref{eq:D} that
$
\sum_{i=1}^{n-1}\pi_{i}(b)\mSigma E_{t-i} \mSigma_{\mY_{t-1}}^{-1}E_{t-n}' \mSigma
$
is finite. As noted before $\pi_n(b) = O(n^{-1-b})$, such that the limit $\lim_{n \to \infty} \lim_{t \to \infty} D_{n+1,t} = 0$.

\noindent
For $C_{n+1, t}=\pi_n^2(b)\Var_\vtheta(\E_\vtheta(  (I - \mP)'  \vu_{t-n}(\vtheta)| \mathcal{F}_{t-1} ))$ one obtains from the law of total variance that $\Var_\vtheta(\E_\vtheta(\vu_{t-n}(\vtheta)| \mathcal{F}_{t-1} )) \leq \Var_\vtheta(\vu_{t-n}(\vtheta)) = \Sigma$. Since $\pi_n^2(b)=O(n^{-2-2b})$, $\lim_{n \to \infty} \lim_{t \to \infty} C_{n+1,t}=0$. The results for $C_{n+1,t}$, $D_{n+1, t}$ imply $\lim_{n \to \infty} \lim_{t \to \infty} (B_{n+1,t} - B_{n, t})=0$. 
It then follows for \eqref{eq:F} that
\begin{align}\label{eq:Flim}
	\lim_{n \to \infty} \lim_{t \to \infty}( \mF_t^{[n+1]} - \mF_t^{[n]})  = \lim_{n \to \infty} \lim_{t \to \infty}(A_{n+1} - A_{n})  - \lim_{n \to \infty} \lim_{t \to \infty}(B_{n+1,t} - B_{n, t})  =0.
\end{align}

\noindent
Finally, to prove boundedness of $\lim_{n \to \infty} \mF^{[n]}$, it is sufficient to show that in \eqref{eq:Fn} the limit $\lim_{n \to \infty} \lim_{t \to \infty}\Var_\vtheta(  (I - \mP)' \Delta_+^b\vu_t(\vtheta)| \mathcal{F}_{t-1})<\infty$. From the law of total variance in \eqref{eq:LtV} it follows that $\Var_\vtheta(  (I - \mP)' \sum_{i=0}^{n-1}\pi_i(b)\vu_{t-i}(\vtheta)| \mathcal{F}_{t-1})\leq A_{n} $ since $B_{n, t}\geq 0$. For $A_{n}=(I - \mP)'\mSigma (I - \mP)\sum_{i=0}^{n-1} \pi_i^2 (b)$, note that $\lim_{t \to \infty} A_n = A_n$, and $
\lim_{n \to \infty} \sum_{i=0}^{n-1} \pi_{i}^2(b)< \infty,
$ 
 \citep[cf.\ e.g.][lemma 5.2]{Has2018}. Hence, $\lim_{n \to \infty} \mF^{[n]} < \infty$ and $\lim_{n \to \infty}\lim_{t \to \infty} \mF_t^{[n]}= \mF$. 
\end{proof}

 
\begin{proof}[Proof of Theorem \ref{th:consistency}]
The  prediction error $z_t(\vtheta)$ of $\Delta_+^b \frac{\vbeta' \mSigma^{-1}  }{\vbeta' \mSigma^{-1}\vbeta } y_{t}$ is the only component in $v_t(\vtheta)$ that depends on $b$. Therefore, it is the only part in $v_t(\vtheta)$ that matters for estimating $b$. In the proof of lemma \ref{lemma:z_t_order} we showed that the prediction error $z_t(\vtheta)$ is identical to the residuals in \cite{Nie2015} (compare \eqref{vbar}). 
While \cite{Nie2015} considers the CSS estimator, we consider the ML estimator based on \eqref{ll}. The latter also contains $F^{[n]}$ which  depends on the sample size $n$. By lemma \ref{L:1c} the steady state prediction error variance $F^{[n]}$ converges to $\mF$ as $n \to \infty$. Therefore, the ML estimator and the CSS estimator are asymptotically equivalent and it suffices to consider the behavior of the sum of squared residuals $\sum_{t=1}^n \vv_t(\vtheta) \vv_t(\vtheta)'$ in \eqref{ll}. By the equivalence of the prediction errors stated above this objective function is nested in the ARFIMA objective function considered in \cite{Nie2015}. Thus, his consistency results carry over to the ML estimator of $b$ if for $z_t(\vtheta)$ assumptions A -- D in \cite{Nie2015} hold. 
	
	\noindent
Since $g_t$ defined in the proof of lemma \ref{lemma:z_t_order} is univariate Gaussian white noise with positive variance and $b\in D$, assumptions A and B in \cite{Nie2015} are satisfied. Following the proof of lemma \ref{lemma:z_t_order}, $\zeta_{t}(\vtheta)=A_+(L, \vtheta) g_t $ is $I(0)$ which guarantees a well-defined inverse of the MA polynomial $A_+(L, \vtheta)$ even for $n\to\infty$. Therefore, assumptions C and D in \cite{Nie2015} hold. Under these assumptions it follows that the CSS estimator for $b$ is consistent. Since the CSS estimator has the same limit distribution as the ML estimator as argued before, it follows that $\hat{b} \pto b_0$ as $n \to \infty$. 
\end{proof}


\begin{proof}[Proof of Lemma \ref{L3a}]
	The partial derivatives of $\vv_t(\vtheta)$ w.r.t.\ $\vbeta$, $\mSigma$ have been derived for the $I(1)$ case in \citet[][lemma 3.2]{ChaMilPa2009}. We obtain similar expressions for the $I(b)$ case. 
	Note that from theorem \ref{th:1} and \eqref{vbar} 
		\begin{align*}
			v_{t}(\vtheta)' & =  \vy_t'  \left(\mI - \frac{\mSigma^{-1}\vbeta \vbeta'}{\vbeta' \mSigma^{-1}\vbeta}\right) + z_t(\vtheta) \vbeta'  =\vy_t'  \left(\mI - \frac{\mSigma^{-1}\vbeta \vbeta'}{\vbeta' \mSigma^{-1}\vbeta}\right)  +  B_+(L, \vtheta) \Delta_+^b   \vy_t'  \frac{ \mSigma^{-1}\vbeta \vbeta'}{\vbeta' \mSigma^{-1}\vbeta},
		\end{align*}
		with $B_+(L, \vtheta) = \left[1-(1+\vbeta'\mSigma^{-1}\beta)^{-1/2} + (1+\vbeta'\mSigma^{-1}\beta)^{-1/2} \Delta_+^{b}\right]^{-1}_+$ 
	following	from the proof of lemma \ref{lemma:z_t_order}. 
	The derivative w.r.t.\ $\vec \mSigma$, evaluated at $\vtheta_0$, is $I(0)$ and given by
	\begin{align*}
		&\frac{\partial \vv_t(\vtheta)'}{\partial \vec \mSigma} =\frac{\mSigma^{-1}\vbeta \otimes \mSigma^{-1}}{\vbeta'\mSigma^{-1}\vbeta}\left(I - \frac{\vbeta \vbeta' \mSigma^{-1}}{\vbeta'\mSigma^{-1}\vbeta}\right) \left(\vy_t \vbeta' - B_+(L, \vtheta) \Delta_+^b   \vy_t  \vbeta' \right) + \frac{\partial B_+(L, \vtheta)}{\partial \vec \mSigma}\Delta_+^b \vy_t' \frac{\mSigma^{-1}\vbeta \vbeta'}{\vbeta'\mSigma^{-1}\vbeta}, \\
		&\frac{\partial \vv_t(\vtheta)'}{\partial \vec \mSigma} \Bigg\rvert_{\vtheta = \vtheta_0} = \frac{\mSigma_0^{-1}\vbeta_0 \otimes \mSigma_0^{-1}}{\vbeta_0'\mSigma_0^{-1}\vbeta_0}\left(I - \frac{\vbeta_0 \vbeta_0' \mSigma_0^{-1}}{\vbeta_0'\mSigma_0^{-1}\vbeta_0}\right) \vbeta_0 x_t \vbeta_0'+ w_{t}  = a_\Sigma^0(\vu_t, \eta_t),
	\end{align*} 
	where $\partial B_+(L, \vtheta)/(\partial \vec \mSigma) = -1/2 (\mSigma^{-1} \vbeta \otimes \mSigma^{-1}\vbeta)(1+\vbeta'\mSigma^{-1}\vbeta )^{-3/2}B_+^2(L, \vtheta)(\Delta_+^b - 1)$ is a stationary filter,
	$w_t$, ${a}^0_\Sigma(\vu_t, \eta_t)$ are $I(0)$ processes that depend on $\vu_1,..., \vu_t$, $\eta_1,...,\eta_t$.
	Next, consider the derivative w.r.t.\ $\vbeta$, evaluated at $\vtheta_0$. For $x_{t|t-1}$ one has
	\begin{align*}
		\frac{\partial x_{t|t-1}}{\partial \vbeta} = \frac{\mSigma^{-1}}{\vbeta' \mSigma^{-1}\vbeta } \left( I - 2 \frac{\vbeta \vbeta' \mSigma^{-1} }{\vbeta'\mSigma^{-1}\vbeta } \right)\left(\vy_t   - B_+(L, \vtheta) \Delta_+^b   \vy_t  \right) - \frac{\partial B_+(L, \vtheta)}{\partial \vbeta} \frac{\vbeta'\mSigma^{-1}}{\vbeta'\mSigma^{-1}\vbeta}\Delta_+^b \vy_t, 
	\end{align*}
	where $\partial B_+(L, \vtheta)/ \partial \vbeta = \mSigma^{-1} \vbeta B_+^2(L, \vtheta)(\Delta_+^b - 1) (1 + \vbeta' \mSigma^{-1}\vbeta)^{-3/2}$ is a stationary filter. Thus
	\begin{align} \label{eq:partial_v_partial_beta}
		\frac{\partial \vv_t(\vtheta)'}{\partial \vbeta} &= -\frac{\vbeta' \mSigma^{-1}}{\vbeta'\mSigma^{-1}\vbeta}\left[  \vy_t  - B_+(L, \vtheta) \Delta_+^b \vy_t \right]\mI + \frac{\partial B_+(L, \vtheta)}{\partial \vbeta} \frac{\vbeta'\mSigma^{-1}}{\vbeta'\mSigma^{-1}\vbeta}\Delta_+^b \vy_t \vbeta' \nonumber \\
		&- \frac{\mSigma^{-1}}{\vbeta' \mSigma^{-1}\vbeta} \left(\mI - 2 \frac{\vbeta \vbeta' \mSigma^{-1}}{\vbeta'\mSigma^{-1}\vbeta}\right) \left[  \vy_t  - B_+(L, \vtheta) \Delta_+^b \vy_t  \right] \vbeta',\\
		\frac{\partial \vv_t(\vtheta)'}{\partial \vbeta}\Bigg\rvert_{\vtheta = \vtheta_0} &= - \left( \mI - \frac{\mSigma_0^{-1} \vbeta_0 \vbeta_0'}{\vbeta_0' \mSigma_0^{-1}\vbeta_0} \right) x_t + a_\vbeta^0(\vu_t, \eta_t), \nonumber
	\end{align}
	where again ${a}^0_\vbeta(\vu_t, \eta_t)\sim I(0)$ depends on $\vu_1,..., \vu_t$, $\eta_1,...,\eta_t$.
	
	\noindent
	For the derivative w.r.t.\ $b$, one obtains $\partial \vv_t(\vtheta)'/ \partial b = (\partial z_t(\vtheta)/\partial b) \vbeta' $. From \eqref{vbar} one has $z_t(\vtheta) = B_+(L,\vtheta) \frac{\vbeta'\mSigma^{-1}}{\vbeta'\mSigma^{-1}\vbeta} \Delta_+^{b-b_0} \Delta_+^{b_0}\vy_t,$
	\begin{align} 
		\frac{\partial z_t(\vtheta)}{\partial b} &= \frac{\vbeta' \mSigma^{-1}}{\vbeta' \mSigma^{-1}\vbeta} \left[ \frac{\partial B_+(L, \vtheta)}{\partial b}  \Delta_+^{b-b_0} \Delta_+^{b_0}\vy_t+ B_+(L, \vtheta) \frac{\partial }{\partial b} \Delta_+^{b-b_0} \Delta_+^{b_0}\vy_t\right].\label{partial:z}
	\end{align}
	To calculate the partial derivatives in \eqref{partial:z} we rearrange $ \Delta_+^{b-b_0} \Delta_+^{b_0}\vy_t = (1-L) \Delta_+^{b-b_0-1} \Delta_+^{b_0}\vy_t= (1-L)\sum_{j=0}^{t-1}\pi_j(b - b_0 - 1) \Delta_+^{b_0}\vy_{t-j}$, where $\pi_j(b-b_0-1) = \frac{\Gamma(1+b_0-b + j)}{\Gamma(j+1)\Gamma(1+b_0-b)}$, and $\Gamma(u)$ is the gamma function at $u$. Define $\Psi(u)$ as the digamma function at $u$, $\Psi(u) = \frac{\partial \Gamma(u)/\partial u}{\Gamma(u)}$. It satisfies $\Psi(u+j) - \Psi(u) = \sum_{k=0}^{j-1} (u+k)^{-1}$ for positive $u$. Due to theorem \ref{th:consistency} $|b-b_0|$ boils down to the stationary region, such that $1+b_0-b$ is positive asymptotically. 
		Then 
		\begin{align}\label{partial:b}
			\frac{\partial \pi_j(b - b_0 - 1)}{\partial b} &= - [\Psi(1+b_0-b+j) - \Psi(1+b_0-b)] \frac{\Gamma(1+b_0-b + j)}{\Gamma(j+1)\Gamma(1+b_0-b)} = \notag \\
			&=- \sum_{k=0}^{j-1}(1+b_0-b+k)^{-1}\pi_j(b-b_0-1),
		\end{align}
		and
		\begin{align}\label{partial:zeta}
			\frac{\partial}{\partial b} \Delta_+^{b-b_0} \Delta_+^{b_0}\vy_t= - (1-L)\sum_{j=1}^{t-1}\sum_{k=1}^j \frac{1}{b_0-b+k}\pi_j(b-b_0-1)\Delta_+^{b_0}\vy_{t-j}.
		\end{align}
		
		\noindent
		The first term in \eqref{partial:z} is 
		\begin{align}\label{partial:B}
			\frac{\partial B_+(L, \vtheta)}{\partial b} \Delta_+^{b-b_0} \Delta_+^{b_0} \vy_t = - B_+^2(L, \vtheta)(1+\vbeta'\mSigma^{-1}\vbeta)^{-1/2}  \Delta_+^{b}\frac{\partial}{\partial b} \Delta_+^{b-b_0} \Delta_+^{b_0} \vy_t. 
		\end{align}
		By plugging \eqref{partial:B} and \eqref{partial:zeta} into \eqref{partial:z} one obtains $\partial z_t(\vtheta)/\partial b$
		\begin{align}\label{partial:z2}
			\frac{\partial z_t(\vtheta)}{\partial b} &= \frac{\vbeta' \mSigma^{-1}}{\vbeta'\mSigma^{-1}\vbeta}B_+(L, \vtheta)\left(\frac{B_+(L, \vtheta)\Delta_+^{b}}{\sqrt{1+\vbeta'\mSigma^{-1}\vbeta}} -1\right)\Delta\sum_{j=1}^{t-1}\sum_{k=1}^j \frac{\pi_j(b-b_0-1)}{b_0-b+k}\Delta_+^{b_0}\vy_{t-j}.
		\end{align}
		For ${\vtheta = \vtheta_0}$ one has $\pi_j(-1) = 1$. The sum in \eqref{partial:z2} becomes $ (1-L)\sum_{j=1}^{t-1}\sum_{k=1}^j k^{-1}\Delta_+^{b_0}\vy_{t-j} = \sum_{j=1}^{t-1}\sum_{k=1}^j {k}^{-1}\Delta_+^{b_0}\vy_{t-j} -  \sum_{j=1}^{t-2}\sum_{k=1}^j {k}^{-1}\Delta_+^{b_0}\vy_{t-1-j} =  \sum_{j=1}^{t-1} {j}^{-1}\Delta_+^{b_0}\vy_{t-j} $,
		which is stationary and $\mathcal{F}_{t-1}$-measurable. For \eqref{partial:z2} evaluated at $\vtheta_0$ one has
		\begin{align*}
			\frac{\partial z_t(\vtheta)}{\partial b}\Big\rvert_{\vtheta = \vtheta_0} &=  \frac{\vbeta_0'\mSigma_0^{-1}}{\vbeta_0'\mSigma_0^{-1}\vbeta_0}B_+(L, \vtheta_0)\left(\frac{B_+(L, \vtheta_0)\Delta_+^{b_0}}{\sqrt{1+\vbeta_0'\mSigma_0^{-1}\vbeta_0}} -1\right)\sum_{j=1}^{t-1} {j}^{-1}\Delta_+^{b_0}\vy_{t-j},
		\end{align*}
		which is stationary since $B_+(L, \vtheta_0)$ is a stationary polynomial. Thus, $(\partial \vv_t(\vtheta)'/\partial b)\rvert_{\vtheta = \vtheta_0} = (\partial z_t(\vtheta)/ \partial b) \rvert _{\vtheta = \vtheta_0} \vbeta_0' = a_b^0(\vu_t, \veta_t)$. 
\end{proof}%
\endgroup

\noindent
The following lemmas are required for theorem \ref{th:4}

\begin{lemma}\label{L3:b}
	The process 
	\begin{align*}
		\frac{\partial \vv_t(\vtheta)'}{\partial\vtheta}\Bigg\rvert_{\vtheta = \vtheta_0} \mF_0^{{[n]}^{-1}} \vv_t(\vtheta_0),
	\end{align*}
	together with $\mathcal{F}_t$ is a martingale difference sequence. 
\end{lemma}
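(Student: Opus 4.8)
The plan is to verify directly the three defining properties of a martingale difference sequence adapted to $\{\mathcal{F}_t\}$: adaptedness, integrability, and the vanishing of the one-step-ahead conditional expectation.

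\textbf{Adaptedness.} As noted just before the statement, $\frac{\partial \vv_t(\vtheta)'}{\partial\vtheta}\big\rvert_{\vtheta=\vtheta_0} = -\frac{\partial (x_{t|t-1}\vbeta')}{\partial\vtheta}\big\rvert_{\vtheta=\vtheta_0}$ is $\mathcal{F}_{t-1}$-measurable because $x_{t|t-1}$ is, while $\mF_0^{[n]}$ is a deterministic matrix by lemma \ref{L:1b}. Since $\vv_t(\vtheta_0)$ is $\mathcal{F}_t$-measurable, the product $\frac{\partial \vv_t(\vtheta)'}{\partial\vtheta}\big\rvert_{\vtheta=\vtheta_0}\mF_0^{[n]^{-1}}\vv_t(\vtheta_0)$ is $\mathcal{F}_t$-measurable.

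\textbf{Integrability.} The Kalman recursions \eqref{tilde_v_t}--\eqref{tilde_P_t+1_t} are linear in the NID shocks $\{\eta_s,\vu_s\}_{s\le t}$, so $\vv_t(\vtheta_0)$ and $x_{t|t-1}$, and hence every entry of $\frac{\partial \vv_t(\vtheta)'}{\partial\vtheta}\big\rvert_{\vtheta=\vtheta_0}$ in the explicit forms of lemma \ref{L3a}, are Gaussian with finite variance for each fixed $t$ and $n$. The Cauchy--Schwarz inequality then gives $\E_{\vtheta_0}\big\lVert \frac{\partial \vv_t(\vtheta)'}{\partial\vtheta}\big\rvert_{\vtheta=\vtheta_0}\mF_0^{[n]^{-1}}\vv_t(\vtheta_0)\big\rVert<\infty$.

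\textbf{Conditional mean zero.} Conditioning on $\mathcal{F}_{t-1}$ and pulling out the $\mathcal{F}_{t-1}$-measurable (and deterministic) factors yields
\[
\E_{\vtheta_0}\!\left[\frac{\partial \vv_t(\vtheta)'}{\partial\vtheta}\Big\rvert_{\vtheta=\vtheta_0}\!\!\mF_0^{[n]^{-1}}\vv_t(\vtheta_0)\,\Big|\,\mathcal{F}_{t-1}\right] = \frac{\partial \vv_t(\vtheta)'}{\partial\vtheta}\Big\rvert_{\vtheta=\vtheta_0}\!\!\mF_0^{[n]^{-1}}\,\E_{\vtheta_0}\!\left[\vv_t(\vtheta_0)\mid\mathcal{F}_{t-1}\right]=0,
\]
where the last equality uses that $(\vv_t(\vtheta_0),\mathcal{F}_t)$ is itself a martingale difference sequence, as established after theorem \ref{th:1}. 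Combining the three points gives the claim.

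The argument is essentially bookkeeping; the only point that merits a little care is integrability in the nonstationary case $b_0>1/2$, where $x_t$ (hence $\partial \vv_t(\vtheta)'/\partial\vbeta$ at $\vtheta_0$) has variance growing with $t$. But this growth is only polynomial and the process stays Gaussian for each fixed $t$, so second moments are finite and Cauchy--Schwarz suffices; no uniformity in $t$ is needed for the martingale difference property itself.
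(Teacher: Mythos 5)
Your proof is correct and follows essentially the same route as the paper's: $\mathcal{F}_{t-1}$-measurability of the derivative (via $x_{t|t-1}$), the MDS property of $\vv_t(\vtheta_0)$ to kill the conditional mean, and Gaussianity of $\vy_t$ and $x_t$ for each finite $t$ to secure integrability. The paper phrases the moment bound as $\E[|x_t\vv_t(\vtheta_0)|]<\infty$ rather than invoking Cauchy--Schwarz explicitly, but that is the same observation.
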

\begin{proof}[Proof of Lemma \ref{L3:b}]
	Note that $\frac{\partial \vv_t(\vtheta)'}{\partial\vtheta}\Big\rvert_{\vtheta = \vtheta_0} = -\frac{\partial x_{t|t-1}\vbeta'}{\partial\vtheta}\Big\rvert_{\vtheta = \vtheta_0}$ is $\mathcal{F}_{t-1}$-measurable since $x_{t|t-1}$ is $\mathcal{F}_{t-1}$-measurable. Hence,
	\begin{align*}
		\E \left[ 	\frac{\partial \vv_t(\vtheta)'}{\partial\vtheta}\Bigg\rvert_{\vtheta = \vtheta_0} \mF_0^{{[n]}^{-1}} 
		\vv_t(\vtheta_0) \Big| \mathcal{F}_{t-1} \right] 
		&= \frac{\partial \vv_t(\vtheta)'}{\partial\vtheta}\Bigg\rvert_{\vtheta = \vtheta_0} \mF_0^{{[n]}^{-1}} \E\left[ \vv_t(\vtheta_0) | \mathcal{F}_{t-1} \right] = 0, 
		\end{align*} and $
		\E \left[ 	\frac{\partial \vv_t(\vtheta)'}{\partial\vtheta}\big\rvert_{\vtheta = \vtheta_0} \mF_0^{{[n]}^{-1}} \vv_t(\vtheta_0) \right]= 0$
	by the law of iterated expectations. Since $\vy_t$ and $x_t$ are normally distributed, $\E[|\vy_t|] < \infty$ for every finite $t$, so that $\E[|\vv_t(\vtheta_0)|] <\infty$ and $\E[|x_t \vv_t(\vtheta_0)|] <\infty$ hold as well. Therefore $\E \left[ \left|	\frac{\partial \vv_t(\vtheta)'}{\partial\vtheta}\big\rvert_{\vtheta = \vtheta_0} \mF_0^{{[n]}^{-1}} \vv_t(\vtheta_0) \right| \right] < \infty$.Under these two conditions the process is a martingale difference sequence \citep[thm. 6.2.1]{Dav2000}. 
\end{proof}

\begin{lemma}\label{L3:c}
	If $b_0 < 0.5$, a CLT for the gradient in \eqref{eq:sn0} yields
	\begin{align}
		\frac{1}{\sqrt{n}} \sum_{t=1}^{n} \frac{\partial \vv_t(\vtheta)'}{\partial\vtheta}\Bigg\rvert_{\vtheta = \vtheta_0} \mF_0^{{[n]}^{-1}} \vv_t(\vtheta_0) &\dto \mG, \label{eq:L3:c:1}\\
		\frac{1}{2} \frac{\partial (\vec \mF^{[n]})'}{\partial \vtheta}\Bigg\rvert_{\vtheta = \vtheta_0} \left( \mF_0^{{[n]}^{-1}} \otimes \mF_0^{{[n]}^{-1}}  \right) \vec \left( \frac{1}{\sqrt{n}}  \sum_{t=1}^{n}\left( \vv_t(\vtheta_0) \vv_t(\vtheta_0)' - \mF_0^{[n]}  \right)  \right) &\dto \mJ \label{eq:L3:c:2},
	\end{align}
 $\mG \sim \mathrm{N}(0, \Var(\mG))$, $\mJ \sim \mathrm{N}(0, \Var(\mJ))$,
	as $n\to \infty$ where
	\begin{align*}
		\Var(\mG) &= \plim_{n \to \infty} \frac{1}{n} \sum_{t=1}^{n} \frac{\partial \vv_t(\vtheta)'}{\partial \vtheta} \Bigg \rvert_{\vtheta = \vtheta_0}  \mF_0^{-1}  \frac{\partial \vv_t(\vtheta)}{\partial \vtheta'}\Bigg\rvert_{\vtheta = \vtheta_0}, \\
		 \Var(\mJ) &= \frac{1}{2}\left[\frac{\partial (\vec \mF)'}{\partial \theta}\Bigg \rvert_{\vtheta=\vtheta_0} \left( \mF_0^{-1}  \otimes  \mF_0^{-1} \right)  \frac{\partial \vec \mF}{\partial \theta'}\Bigg \rvert_{\vtheta=\vtheta_0}\right].
	\end{align*}
\end{lemma}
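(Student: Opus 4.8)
The plan is to establish \eqref{eq:L3:c:1} and \eqref{eq:L3:c:2} separately, each as the image of an already-available limit under a convergent linear map, the first combined with a martingale central limit theorem.

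\textbf{Display \eqref{eq:L3:c:2}.} This is essentially a corollary of \eqref{eq:wis}. Writing $\Phi_n := \frac12\,\frac{\partial(\vec\mF^{[n]})'}{\partial\vtheta}\big\rvert_{\vtheta_0}\big((\mF_0^{[n]})^{-1}\otimes(\mF_0^{[n]})^{-1}\big)$, the left-hand side equals $\Phi_n\cdot\big(n^{-1/2}\sum_{t=1}^n\vec(\vv_t(\vtheta_0)\vv_t(\vtheta_0)'-\mF_0^{[n]})\big)$. By Lemma \ref{L:1c}, $\mF^{[n]}\to\mF$ with $0<\mF<\infty$; differentiating in $\vtheta$ the fixed-point (algebraic Riccati) equation that characterises $\mP^{[n]}$, and hence $\mF^{[n]}$, shows that $\partial(\vec\mF^{[n]})'/\partial\vtheta$ converges to $\partial(\vec\mF)'/\partial\vtheta$ as well, so $\Phi_n\to\frac12\,\frac{\partial(\vec\mF)'}{\partial\vtheta}(\mF_0^{-1}\otimes\mF_0^{-1})$. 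Combining this with \eqref{eq:wis} via Slutsky's theorem yields \eqref{eq:L3:c:2} with limiting variance $\frac14\,\frac{\partial(\vec\mF)'}{\partial\vtheta}(\mF_0^{-1}\otimes\mF_0^{-1})(\mI+\mK)(\mF_0\otimes\mF_0)(\mF_0^{-1}\otimes\mF_0^{-1})\frac{\partial\vec\mF}{\partial\vtheta'}$. Since $(\mI+\mK)(\mF_0\otimes\mF_0)(\mF_0^{-1}\otimes\mF_0^{-1})=\mI+\mK$, $\mK(\mF_0^{-1}\otimes\mF_0^{-1})=(\mF_0^{-1}\otimes\mF_0^{-1})\mK$, and $\frac{\partial(\vec\mF)'}{\partial\vtheta}\mK=\frac{\partial(\vec\mF)'}{\partial\vtheta}$ (because $\mK\vec\mF=\vec\mF$ for symmetric $\mF$), this expression collapses to $\Var(\mJ)=\frac12\,\frac{\partial(\vec\mF)'}{\partial\vtheta}(\mF_0^{-1}\otimes\mF_0^{-1})\frac{\partial\vec\mF}{\partial\vtheta'}$, as claimed.

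\textbf{Display \eqref{eq:L3:c:1}.} Put $\xi_{t,n}:=\frac{\partial\vv_t(\vtheta)'}{\partial\vtheta}\big\rvert_{\vtheta_0}(\mF_0^{[n]})^{-1}\vv_t(\vtheta_0)$. By Lemma \ref{L3:b}, $(\xi_{t,n},\mathcal{F}_t)$ is a martingale difference sequence, so I would invoke a triangular-array martingale CLT (e.g.\ \citet[eq.\ 11.3.11]{Dav2000}), reducing to scalars by the Cram\'er--Wold device. The conditions to verify are: (i) the MDS property (Lemma \ref{L3:b}); (ii) convergence of the conditional variance, $n^{-1}\sum_{t=1}^n\E[\xi_{t,n}\xi_{t,n}'\mid\mathcal{F}_{t-1}]\pto\Var(\mG)$; and (iii) a conditional Lindeberg (equivalently, a uniform $(2+\delta)$-moment) condition. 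For (ii), since $\partial\vv_t(\vtheta)'/\partial\vtheta\big\rvert_{\vtheta_0}$ is $\mathcal{F}_{t-1}$-measurable, $\E[\xi_{t,n}\xi_{t,n}'\mid\mathcal{F}_{t-1}]=\frac{\partial\vv_t'}{\partial\vtheta}(\mF_0^{[n]})^{-1}\,\E[\vv_t\vv_t'\mid\mathcal{F}_{t-1}]\,(\mF_0^{[n]})^{-1}\frac{\partial\vv_t}{\partial\vtheta'}$; because $b_0<1/2$, Lemma \ref{L3a} makes $\partial\vv_t(\vtheta)'/\partial\vtheta\big\rvert_{\vtheta_0}$ an $I(0)$, hence asymptotically stationary and ergodic, function of the NID shocks (in particular the $-\mP_x x_t$ term is stationary since $x_t\sim I(b_0)$ with $b_0<1/2$), while $\E[\vv_t\vv_t'\mid\mathcal{F}_{t-1}]\to\mF_0$ and $\mF_0^{[n]}\to\mF_0$; a weak law of large numbers for asymptotically stationary sequences then gives $n^{-1}\sum_{t=1}^n\E[\xi_{t,n}\xi_{t,n}'\mid\mathcal{F}_{t-1}]\pto\plim_{n}n^{-1}\sum_{t=1}^n\frac{\partial\vv_t'}{\partial\vtheta}\mF_0^{-1}\frac{\partial\vv_t}{\partial\vtheta'}=\Var(\mG)$. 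Condition (iii) is immediate, since everything here is jointly Gaussian with uniformly bounded moments, so $\sup_{t,n}\E|\xi_{t,n}|^{2+\delta}<\infty$. The CLT then delivers $n^{-1/2}\sum_{t=1}^n\xi_{t,n}\dto N(0,\Var(\mG))$.

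\textbf{Main obstacle.} The delicate point is the triangular-array bookkeeping behind \eqref{eq:L3:c:1}: neither $\vv_t(\vtheta_0)$ nor $\partial\vv_t(\vtheta)'/\partial\vtheta\big\rvert_{\vtheta_0}$ is exactly stationary — both are generated by Kalman recursions initialised at $t=1$ — and the normaliser $\mF_0^{[n]}$ depends on $n$ while the underlying state dimension grows with $n$. I would handle this by replacing $\xi_{t,n}$ with its steady-state counterpart $\xi_t^\star$ (built from $\mP^{[n]}$, $\mF^{[n]}$ instead of $\mP_{t\mid t-1}$, $\mF_t^{[n]}$) and showing $n^{-1/2}\sum_{t=1}^n(\xi_{t,n}-\xi_t^\star)=o_p(1)$; this follows from the geometric convergence $\mP_{t\mid t-1}=\mP^{[n]}+O(e^{-t})$ of Lemma \ref{L:1b} and the uniform convergence $\mF^{[n]}\to\mF_0$ of Lemma \ref{L:1c}, the transient being summable in $t$ uniformly in $n$. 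Once the sum is reduced to an asymptotically stationary MDS, steps (ii) and (iii) are routine, so this reduction is the only genuinely technical part of the argument.
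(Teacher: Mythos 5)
Your proposal is correct and follows essentially the same route as the paper: a martingale difference CLT building on Lemma \ref{L3:b}, with verification that the asymptotic variance is finite for $b_0<1/2$ because $x_t$ and hence $\partial \vv_t(\vtheta)'/\partial\vtheta\rvert_{\vtheta_0}$ are asymptotically stationary, for \eqref{eq:L3:c:1}; and the Wishart limit \eqref{eq:wis} combined with convergence of $\mF^{[n]}$ (Lemma \ref{L:1c}) for \eqref{eq:L3:c:2}. The only difference is one of detail: you spell out the commutation-matrix simplification of $\Var(\mJ)$ and the triangular-array/steady-state reduction explicitly, whereas the paper defers the former to \citet[lemma 3.4]{ChaMilPa2009} and absorbs the latter into Lemmas \ref{L:1b} and \ref{L:1c}.
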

\begin{proof}[Proof of Lemma \ref{L3:c}]
Due to lemma \ref{L3:b}, the l.h.s.\ of \eqref{eq:L3:c:1} together with $\mathcal{F}_t$ is a MDS. Since we show below that $\Var \left[  \frac{\partial \vv_t(\vtheta)'}{\partial \vtheta}\big \rvert_{\vtheta = \vtheta_0}  \mF_0^{{[n]}^{-1}}  \vv_t(\vtheta_0) \right] < \infty$ holds, a MDS CLT \citep[cf.][thm. 6.2.3]{Dav2000} applies and yields equation \eqref{eq:L3:c:1}. From lemma \ref{L:1c} one has $\mF_0^{[n]} \to \mF_0$ for $n\to\infty$ and $\mF_{t,0} = \Var\left(\vv_t(\vtheta_0) \rvert \mathcal{F}_{t-1}\right) = \mF_0 + o(1)$ so that
\begin{align} \nonumber
&\Var \left[  \frac{\partial \vv_t(\vtheta)'}{\partial \vtheta}\Bigg \rvert_{\vtheta = \vtheta_0} \mF_0^{{[n]}^{-1}} \vv_t(\vtheta_0) \right] 
= \E\left[ \Var \left(  \frac{\partial \vv_t(\vtheta)'}{\partial \vtheta}\Bigg \rvert_{\vtheta = \vtheta_0} \mF_0^{{[n]}^{-1}}  \vv_t(\vtheta_0) \rvert \mathcal{F}_{t-1} \right)  \right] \\
&= \E\left[ \frac{\partial \vv_t(\vtheta)'}{\partial \vtheta}\Bigg \rvert_{\vtheta = \vtheta_0} \mF_0^{-1}  \frac{\partial \vv_t(\vtheta)}{\partial \vtheta'}\Bigg \rvert_{\vtheta = \vtheta_0} \right] + o(1). \label{eq:Var_for_G} 
\end{align}
For the decisive block in \eqref{eq:Var_for_G} we have,  using lemma \ref{L3a} and the projection matrix \eqref{eq:Px}, 
\begin{align} 
\E\left( \left(-\mP_x x_t + a_\vbeta^0(\vu_t, \eta_t)\right) \mF_0^{-1} \left(-\mP_x x_t + a_\vbeta^0(\vu_t, \eta_t)\right)'\right) + o(1).\label{eq:Var_partialv_v} 
\end{align}
The leading term in \eqref{eq:Var_partialv_v} is $ \mP_x \mF_0^{-1} \mP_x' \E(x_t^2) $. It is finite for $b_0<1/2$ since $x_t$ is asymptotically stationary and so are all cross products from \eqref{eq:Var_partialv_v}. Thus, the covariance matrix is finite for $b_0<1/2$. Hence
	$
	\frac{1}{{n}} \sum_{t=1}^{n} \frac{\partial \vv_t(\vtheta)'}{\partial \vtheta}\big \rvert_{\vtheta = \vtheta_0} \mF_0^{{[n]}^{-1}}\vv_t(\vtheta_0)\vv_t(\vtheta_0)' \mF_0^{{[n]}^{-1}}  \frac{\partial \vv_t(\vtheta)}{\partial \vtheta'}\big \rvert_{\vtheta = \vtheta_0}
		 \pto \Var(\mG)
	$
	as $n \to \infty$, where $0 < \Var(\mG) < \infty$ and $\Var(\mG)$ results from \eqref{eq:Var_for_G}. The proof of \eqref{eq:L3:c:2} is identical to \citet[lemma 3.4]{ChaMilPa2009} except for the additional use of lemma \ref{L:1c}.
\end{proof}
\noindent
With these lemmas at hand, we are ready to prove theorem \ref{th:4}.
\begin{proof}[Proof of Theorem \ref{th:4}]
	 As noted in section \ref{Ch:3}, both terms in \eqref{eq:sn0} $\sum_{t=1}^n \left(\vv_t(\vtheta_0) \vv_t(\vtheta_0)' -  \mF_0^{[n]}\right)$ and $\sum_{t=1}^n \frac{\partial \vv_t(\vtheta)'}{\partial \vtheta}\Big\rvert_{\vtheta = \vtheta_0} \mF_0^{{[n]}^{-1}}\vv_t(\vtheta_0)$ are asymptotically independent. Therefore, it follows from lemmas \ref{L3:b} and \ref{L3:c} that
	$
		\frac{1}{\sqrt{n}}\vs_n(\vtheta_0) \dto \mG + \mJ \sim \mathrm{N}(0, \mathcal{J}_0)
	$
	as $n \to \infty$, with $\mathcal{J}_0 = \Var(G) + \Var(J)$ and each variance  given in lemma \ref{L3:c}. Using the results of \citet[Ch. 11.3.3]{Dav2000}, it follows for $b_0 < 0.5$ that 
	$
		\sqrt{n} (\hat{\vtheta}_n - \vtheta_0) \dto \mathrm{N}(0, \mathcal{J}_0^{-1})
$
	as $n \to \infty$.
\end{proof}

\begin{proof}[Proof of Lemma \ref{L3:d}]
First note that  from lemma \ref{L3:b} $\frac{\partial \vv_t(\vtheta)'}{\partial \vtheta}\Big\rvert_{\theta = \theta_0} \mF_0^{{[n]}^{-1}}\vv_t(\vtheta_0)$ is a martingale difference sequence adapted to the sigma-algebra $\mathcal{F}_t$. To prove weak convergence of $U_n(r)$, $W_n(r)$, $Y_n(r)$ observe that multiplication with $\mA_S'$ and $\mA_D'$ eliminates the nonstationary part, so that
\begin{align*}
	\mF_0^{{[n]}^{-1}} \vv_t(\vtheta_0), 
	\qquad A_S' \frac{\partial \vv_t(\vtheta)'}{\partial \vtheta}\Big\rvert_{\theta = \theta_0}\mF_0^{{[n]}^{-1}} \vv_t(\vtheta_0), 
	\qquad A_D' \frac{\partial \vv_t(\vtheta)'}{\partial \vtheta}\Big\rvert_{\theta = \theta_0} \mF_0^{{[n]}^{-1}}\vv_t(\vtheta_0),
\end{align*}
are (asymptotically) stationary martingale difference sequences. 
Therefore,  a functional central limit theorem for stationary martingale difference sequences  
\citep[cf.\ eg.][thm.\ 27.14]{Dav1994} implies  $(\mU_n(r), \mW_n(r), \mY_n(r)) \Rightarrow (\mU(r), \mW(r), \mY(r))$ as $n \to \infty$. 

\noindent
For the nonstationary, fractionally integrated $X_n(r)$ it follows from \eqref{gammax} that, $b_0>1/2$,
\begin{align}\label{Anv0}
	A_N' \frac{\partial \vv_t(\vtheta)'}{\partial \vtheta}\Big\rvert_{\theta = \theta_0} = - \mGamma_0' \Delta_+^{-b_0} \eta_t + \mGamma_0' a_\vbeta^0(u_t,\eta_t),
\end{align}
where $n^{-b_0+1/2} \Delta_+^{-b_0}\eta_t$ weakly converges  to fractional Brownian motion of type II \citep[cf.][eq. 6]{JohNie2010}, whereas for $b_0>1/2$ the $I(0)$ component $ \mGamma_0' a_\vbeta^0(u_t,\eta_t)$ in $X_n(r)$ converges to zero due to scaling. Hence, $X_n(r) \Rightarrow X(r)$ as $n \to \infty$.

\noindent
For $V_n$, it follows from \eqref{eq:partial_v_partial_beta} by plugging in $\vy_t$ and rearranging terms that the partial derivative $\partial \vv_t(\vtheta)'/\partial \vbeta \rvert_{\vtheta = \vtheta_0} = \mV_{x,t} + \mV_{\veta,t} + \mV_{u,t} + \mV_{B,t}$, where 
$\mV_{x,t} = -\mP_x x_t $, 
$\mV_{\veta, t} = \mP_x B_+(L, \vtheta_0) \veta_t$, 
$\mV_{B,t} = \partial B_+(L, \vtheta)/\partial \vbeta\rvert_{\vtheta=\vtheta_0} \left(\eta_t \vbeta_0' + (\vbeta_0' \mSigma_0^{-1})(\vbeta_0' \mSigma^{-1}_0\vbeta_0)^{-1} \Delta_+^{b_0} \vu_t\vbeta_0' \right)$, and
\begin{align*}\mV_{u,t} = \frac{-\vbeta_0' \mSigma_0^{-1}}{\vbeta_0' \mSigma_0^{-1} \vbeta_0} (1 - B_+(L, \vtheta_0) \Delta_+^{b_0}) \vu_t\mI  - \frac{\mSigma_0^{-1}}{\vbeta_0'\mSigma_0^{-1}\vbeta_0} \! \left(I - \frac{2\vbeta_0 \vbeta_0' \mSigma_0^{-1}}{\vbeta_0' \mSigma_0^{-1}\vbeta_0}\right) \! (1 - B_+(L, \vtheta_0)\Delta_+^{b_0}) \vu_t\vbeta_0'.
\end{align*}
Note that $\mGamma_0' \mV_{B,t }= 0$, which can be seen directly by plugging in the partial derivative of $B_+(L, \vtheta)$ as given in the proof of lemma \ref{L3a} and using \eqref{gamma:prop}. $\mV_{u,t}$ only depends on $\vu_1,...,\vu_{t-1}$, since $\mB_0 = \pi_0(b_0)=1$, which eliminates $\vu_t$ in $(1 - B_+(L, \vtheta_0)\Delta_+^{b_0})\vu_t$. Furthermore $\vv_t(\vtheta_0) = \mP_x(\vbeta_0 x_t + \vu_t) + \vbeta_0 z_t(\vtheta_0) = \mP_x \vu_t + \vbeta_0 z_t(\vtheta_0)$ only depends on contemporaneous $\vu_t, \eta_t$, since $z_t(\vtheta_0) = \veta_t + \vbeta_0'\mSigma_0^{-1}(\vbeta_0'\mSigma_0^{-1}\vbeta_0)^{-1}\vu_t$ is Gaussian white noise, as discussed in the proof of lemma \ref{lemma:z_t_order}. Finally, the relation $\mGamma_0' \mF_0^{[n]^{-1}} = \mGamma_0'\mSigma^{-1}_0$ will be helpful in proving convergence of $\mV_n$, and follows from plugging in $\mF^{[n]^{-1}}_0$ from lemma \ref{L:1} and using \eqref{gamma:prop}. For $\mV_n$ one then has
\begin{align*}
	\mV_n = \frac{1}{n^{b_0}}\sum_{t=1}^n \mGamma_0' (\mV_x + \mV_\veta + \mV_u) \mF_0^{[n]^{-1}}(\mP_x \vu_t + \vbeta_0 z_t (\vtheta_0)) =  -\frac{1}{n^{b_0}} \sum_{t=1}^n \mGamma_0' \mSigma_0^{-1} x_t u_t + o_p(1),
\end{align*}
since $\sum_{t=1}^n \mGamma_0' \mV_u \mF_0^{[n]^{-1}}(\mP_x \vu_t + \vbeta_0 z_t (\vtheta_0)) = O_p(n^{1/2})$ as $\mV_u$ is $I(0)$, depends on $\vu_1,...,\vu_{t-1}$ and $\vu_t, \eta_t$ are iid,
$\sum_{t=1}^n \mGamma_0' \mV_\eta \mF_0^{[n]^{-1}}(\mP_x \vu_t + \vbeta_0 z_t (\vtheta_0)) = \sum_{t=1}^n \mGamma_0' \mSigma_0^{-1}B_+(L, \vtheta_0) \veta_t(\mP_x \vu_t + \vbeta_0 z_t(\vtheta_0)) = \sum_{t=1}^n \mGamma_0' \mSigma_0^{-1}B_+(L, \vtheta_0) \veta_t \vu_t = O_p(n^{1/2})$, since $\mGamma_0' \mSigma_0^{-1}\vbeta_0 = 0$ and $\veta_t$, $\vu_t$ are independent. Finally, $n^{-b_0}\sum_{t=1}^n \mGamma_0' \mV_x \mF_0^{[n]^{-1}}(\mP_x \vu_t + \vbeta_0 z_t (\vtheta_0)) = 
n^{-b_0}\sum_{t=1}^n  - \mGamma_0' \mSigma_0^{-1} x_t \vu_t$. Since $\eta_t$, $\vu_t$ are independent, one can apply a central limit theorem for fractionally integrated processes \citep[cf.\ e.g.][eq.\ 7]{JohNie2010} and write $V_n \dto V= \int_{0}^{1} \mX(r) \rd \mU(r)$ as $n \to \infty$.

\end{proof}
\begin{lemma}\label{ML:1}
	For $b_0 \in (1/2, 3/2)$ and $\vnu_n^{-1}$ given in \eqref{eq:A_N_A_S_A_D_nu_n}, the score vector of the likelihood function for the fractional unobserved components model satisfies 
	\begin{align*}
		\vnu_n^{-1} \mA' \vs_n(\vtheta_0) \xrightarrow{d}\mN = \bvec - \int_{0}^{1} \mX(r)\rd \mU(r) \\ \mZ-\mW\\
		Q - Y\evec, \qquad \text{as } n \to \infty,
	\end{align*} 
	with $Z_n$, $Q_n$ given in \eqref{eq:Z}, \eqref{eq:Q}, $\mZ_n \dto \mZ \sim \mathrm{N}(0, \Var(Z))$, and $\mQ_n \dto \mQ \sim \mathrm{N}(0, \Var(Q))$, as $n \to \infty$. $\Var(Z), \Var(Q)$ are given in \eqref{VarZ}, \eqref{VarQ}.
\end{lemma}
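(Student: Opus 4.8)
The plan is to exploit the block structure of the rotation matrix $\mA=\bmat \mA_N & \mA_S & \mA_D\emat$ together with the two-term form of the score in \eqref{eq:sn0}, handling the $I(b_0)$ direction $\mA_N$ and the two $I(0)$ directions $\mA_S,\mA_D$ separately. First I would write $\vnu_n^{-1}\mA'\vs_n(\vtheta_0)$ as a stacked vector of three sub-blocks, each being the sum of a ``variance part'' $\tfrac12\mA_\bullet'\frac{\partial(\vec\mF^{[n]})'}{\partial\vtheta}\big\rvert_{\vtheta_0}(\mF_0^{[n]^{-1}}\otimes\mF_0^{[n]^{-1}})\vec\sum_{t=1}^n(\vv_t(\vtheta_0)\vv_t(\vtheta_0)'-\mF_0^{[n]})$ and a ``derivative part'' $-\sum_{t=1}^n\mA_\bullet'\frac{\partial\vv_t(\vtheta)'}{\partial\vtheta}\big\rvert_{\vtheta_0}\mF_0^{[n]^{-1}}\vv_t(\vtheta_0)$, times the appropriate power of $n$. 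By lemma \ref{L:1c} we have $\mF_0^{[n]}\to\mF_0$ and, correspondingly, $\frac{\partial(\vec\mF^{[n]})'}{\partial\vtheta}\big\rvert_{\vtheta_0}\to\frac{\partial(\vec\mF)'}{\partial\vtheta}\big\rvert_{\vtheta_0}$, so after each block is divided by its scaling factor we may replace $\mF_0^{[n]}$ and $\mF^{[n]}$ by $\mF_0$ and $\mF$ at the cost of an $o_p(1)$ remainder, despite the state dimension growing with $n$.

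For the first sub-block, scaled by $n^{-b_0}$: by \eqref{gammax} the product $\mA_N'\frac{\partial\vv_t(\vtheta)'}{\partial\vtheta}\big\rvert_{\vtheta_0}=-\mGamma_0'x_t+\mGamma_0'a_\vbeta^0(\vu_t,\eta_t)$ is $I(b_0)$, and its derivative part is precisely $-\mV_n$ with $\mV_n$ as in the statement preceding lemma \ref{L3:d}; by that lemma $\mV_n\dto\mV=\int_0^1\mX(r)\rd\mU(r)$. The variance part of this sub-block equals $\tfrac12 n^{-b_0}\mA_N'\frac{\partial(\vec\mF^{[n]})'}{\partial\vtheta}\big\rvert_{\vtheta_0}(\mF_0^{[n]^{-1}}\otimes\mF_0^{[n]^{-1}})\vec\sum_{t=1}^n(\vv_t(\vtheta_0)\vv_t(\vtheta_0)'-\mF_0^{[n]})$; since $\sum_{t=1}^n\vec(\vv_t(\vtheta_0)\vv_t(\vtheta_0)'-\mF_0^{[n]})=O_p(n^{1/2})$ by \eqref{eq:wis} and the matrices in front are $O(1)$, this term is $O_p(n^{1/2-b_0})=o_p(1)$ because $b_0>1/2$. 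Hence the first sub-block converges to $-\int_0^1\mX(r)\rd\mU(r)$.

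For the $\mA_S$ and $\mA_D$ sub-blocks, scaled by $n^{-1/2}$: by lemma \ref{L3a} multiplication from the left by $\mA_S'$ or $\mA_D'$ annihilates the nonstationary component of $\frac{\partial\vv_t(\vtheta)'}{\partial\vtheta}\big\rvert_{\vtheta_0}$, so the derivative parts are exactly $\mW_n$ and $Y_n$, which converge to $\mW$ and $Y$ by lemma \ref{L3:d}. The variance parts are $\mZ_n$ and $Q_n$ as defined in \eqref{eq:Z} and \eqref{eq:Q}; convergence $\mZ_n\dto\mZ\sim N(0,\Var(Z))$ and $Q_n\dto Q\sim N(0,\Var(Q))$ follows from the Wishart-type CLT \eqref{eq:wis} combined with $\mF_0^{[n]}\to\mF_0$, exactly as in the second half of lemma \ref{L3:c} (equivalently, as in \citet[lemma 3.4]{ChaMilPa2009}). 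Collecting signs from \eqref{eq:sn0}, the $\mA_S$ sub-block tends to $\mZ-\mW$ and the $\mA_D$ sub-block to $Q-Y$.

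Finally I would assemble joint convergence. Lemma \ref{L3:d} already provides the joint FCLT $(\mU_n,\mW_n,\mY_n,\mX_n)\Rightarrow(\mU,\mW,\mY,\mX)$ together with $\mV_n\dto\int_0^1\mX(r)\rd\mU(r)$ via continuous mapping for the stochastic integral; the Wishart-type partial sum driving $\mZ_n,Q_n$ converges jointly with these and is asymptotically independent of the martingale-difference score terms, as noted in section \ref{Ch:3}. Applying the continuous mapping theorem to the stacked vector then yields $\vnu_n^{-1}\mA'\vs_n(\vtheta_0)\dto\mN$ with $\mN$ as stated. \textbf{The main obstacle} is not any single sub-block but the joint (rather than marginal) weak convergence: one must check that the $o_p(1)$ replacements coming from the growing state dimension do not interact with the leading terms, and that the fractional-Brownian-motion limit $\mX$ enters only through the stochastic integral against $\mU$ while remaining independent of it — both of which are precisely what lemmas \ref{L:1c} and \ref{L3:d} secure, so once those are granted the remaining work is bookkeeping of scalings and signs.
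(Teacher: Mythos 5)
Your proposal matches the paper's proof essentially step for step: the same block decomposition of $\vnu_n^{-1}\mA'\vs_n(\vtheta_0)$ into variance and derivative parts, the same observation that the variance part of the $\mA_N$ block is $n^{-b_0+1/2}O_p(1)=o_p(1)$ while the derivative part is $-\mV_n\dto-\int_0^1\mX(r)\rd\mU(r)$ by lemma \ref{L3:d}, and the same identification of the $\mA_S$ and $\mA_D$ blocks as $\mZ_n-\mW_n$ and $Q_n-Y_n$ with convergence of $\mZ_n,Q_n$ via the Wishart argument of \citet[lemma 3.4]{ChaMilPa2009} combined with lemma \ref{L:1c}. Your additional remarks on joint convergence and asymptotic independence only make explicit what the paper leaves implicit, so the argument is correct and takes the same route.
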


\begin{proof}[Proof of Lemma \ref{ML:1}]
	Note that for the first block of $\vnu_n^{-1} \mA' \vs_n(\vtheta_0)$ one has for $b_0>1/2$
	\begin{align*}
		&n^{-b_0} \mA_N' \vs_n(\vtheta_0) = \frac{1}{2n^{b_0-1/2}}\mA_N' \frac{\partial (\vec \mF^{[n]})'}{\partial \vtheta}\Bigg\rvert_{\vtheta = \vtheta_0} \left( \mF_0^{{[n]}^{-1}} \otimes \mF_0^{{[n]}^{-1}} \right) \\
		&\times \vec\left[\frac{1}{\sqrt{n}} \sum_{t=1}^{n} \left(\vv_t(\vtheta_0)\vv_t(\vtheta_0)' - \mF_0^{{[n]}^{-1}} \right)\right]-\frac{1}{n^{b_0}}\mA_N'\sum_{t=1}^{n}\frac{\partial \vv_t(\vtheta)'}{\partial\vtheta}\Bigg\rvert_{\vtheta=\vtheta_0} \mF_0^{{[n]}^{-1}} \vv_t(\vtheta_0)=\\
		&=n^{-b_0+1/2}O_p(1) -\mV_n 
		\dto - \mV = - \int_{0}^{1} \mX(r)\rd \mU(r),
	\end{align*}
as $n \to \infty$ due to lemma \ref{L3:d} and \eqref{eq:wis}. Next, observe that for  $\mZ_n$ in \eqref{eq:Z},
additionally applying lemma \ref{L:1c}, one has
 $\mZ_n \dto \mZ \sim \mathrm{N}(0, \Var(\mZ))$, as $n \to \infty$, with $\Var(\mZ)$ given in \eqref{VarZ}, as \citet[lemma 3.4]{ChaMilPa2009} show. Since $\mQ_n$ in \eqref{eq:Q} only differs from $\mZ_n$ by its rotation matrix,  $\mQ_n \dto \mQ \sim \mathrm{N}(0, \Var(\mQ))$ follows analogously.
Using also the partial sums defined for lemma \ref{L3:d} one obtains for the second block 
$	\frac{1}{\sqrt{n}} \mA_S' \vs_n(\vtheta_0) 
	= \mZ_n - \mW_n   \dto \mZ - \mW,
$ as $n \to \infty$ 
and  analogously for the third block 
$\frac{1}{\sqrt{n}} \mA_D' \vs_n(\vtheta_0) = \mQ_n - \mY_n
\dto Q - Y$.
\end{proof}
\begin{lemma}\label{ML:2}
	The Hessian matrix satisfies 
	\begin{align*}
		-\vnu_n^{-1}\mA'\mH_n(\vtheta_0)\mA \vnu_n^{-1'} \xrightarrow{d} \mM > 0, \quad \text {a.s. as } n \to \infty,
	\end{align*}
	with $\mM$ given in \eqref{eq:M}
\end{lemma}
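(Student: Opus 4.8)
The plan is to expand $\mH_n(\vtheta_0)=\sum_{h=1}^{8}\mH_{n,h}(\vtheta_0)$ according to \eqref{eq:Hessian}, pre- and post-multiply each summand by $\vnu_n^{-1}\mA'$ and $\mA\vnu_n^{-1'}$, and evaluate the limit term by term, following \citet{ChaMilPa2009} while tracking the fractional order $b_0$ and the growing state dimension through lemma \ref{L:1c}. The ingredients I would use repeatedly are: the martingale law of large numbers $n^{-1}\sum_{t=1}^n\vv_t(\vtheta_0)\vv_t(\vtheta_0)'\pto\mF_0$ and $\mF_0^{[n]}\to\mF_0$ (lemma \ref{L:1c}); the fact that the partial derivatives of $\vv_t(\vtheta)$ at $\vtheta_0$ are $\mathcal{F}_{t-1}$-measurable, become $I(0)$ once projected by $\mA_S'$ or $\mA_D'$, and satisfy $\mA_N'\,\partial\vv_t(\vtheta)'/\partial\vtheta\big|_{\vtheta_0}=-\mGamma_0'x_t+\mGamma_0'a_\vbeta^0(\vu_t,\eta_t)\sim I(b_0)$ by lemma \ref{L3a} and \eqref{gammax}; and the fact that $\partial\vv_t(\vtheta)'/\partial\vtheta\big|_{\vtheta_0}\,\mF_0^{[n]^{-1}}\vv_t(\vtheta_0)$ and $\big(\partial\vv_t(\vtheta)/\partial\vtheta'\big|_{\vtheta_0}\otimes\vv_t(\vtheta_0)\big)$ are martingale difference sequences adapted to $\mathcal{F}_t$ (lemma \ref{L3:b}), so their partial sums are $O_p(n^{1/2})$ in the $\mA_S$- and $\mA_D$-directions and $O_p(n^{b_0})$ in the $\mA_N$-direction (lemma \ref{L3:d}).

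First I would remove the self-cancelling and negligible pieces. Using $n^{-1}\sum\vv_t(\vtheta_0)\vv_t(\vtheta_0)'\pto\mF_0$ and $\mF_0^{[n]}\to\mF_0$ one obtains $\mH_{n,1}(\vtheta_0)+\mH_{n,2}(\vtheta_0)=o_p(n)$ and $\mH_{n,3}(\vtheta_0)+\mH_{n,4}(\vtheta_0)=-\mH_{n,3}(\vtheta_0)+o_p(n)$, so the deterministic part of $-\mH_n(\vtheta_0)$ collapses to $\mH_{n,3}(\vtheta_0)+o_p(n)$. Each of $\mH_{n,6}(\vtheta_0)$, $\mH_{n,7}(\vtheta_0)$, $\mH_{n,8}(\vtheta_0)$ carries a factor that is a martingale-difference partial sum in $\vv_t(\vtheta_0)$ multiplied by an $\mathcal{F}_{t-1}$-measurable derivative, hence $O_p(n^{1/2})$ in the stationary directions and $O_p(n^{b_0})$ in the $\mA_N$-direction by lemmas \ref{L3:b} and \ref{L3:d}; since $\vnu_n$ contributes at least $n^{1/2}$ (respectively $n^{b_0}$) on each side and $b_0>1/2$, these three terms are $o_p(1)$ after normalisation, as are the $o_p(n)$ remainders above ($n^{-2b_0}o_p(n)=o_p(1)$). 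Thus $-\vnu_n^{-1}\mA'\mH_n(\vtheta_0)\mA\vnu_n^{-1'}=\vnu_n^{-1}\mA'\big[\mH_{n,3}(\vtheta_0)-\mH_{n,5}(\vtheta_0)\big]\mA\vnu_n^{-1'}+o_p(1)$, where $-\mH_{n,5}(\vtheta_0)=\sum_{t=1}^n\partial\vv_t(\vtheta)'/\partial\vtheta\big|_{\vtheta_0}\,\mF_0^{[n]^{-1}}\,\partial\vv_t(\vtheta)/\partial\vtheta'\big|_{\vtheta_0}$ is nonnegative definite.

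Next I would read off the diagonal blocks of the limit. For the $\mA_S$- and $\mA_D$-blocks: $n^{-1}\mA_S'\mH_{n,3}(\vtheta_0)\mA_S\to\Var(\mZ)$ by \eqref{VarZ} and $n^{-1}\mA_S'\big[-\mH_{n,5}(\vtheta_0)\big]\mA_S=\mA_S'\big(n^{-1}\sum_t\partial\vv_t'/\partial\vtheta\,\mF_0^{[n]^{-1}}\,\partial\vv_t/\partial\vtheta'\big)\mA_S\pto\Var(\mW)$ by \eqref{var:W} --- a law of large numbers for the asymptotically stationary $\mathcal{F}_{t-1}$-measurable array obtained once $\mA_S'$ annihilates the $-\mGamma_0'x_t$ component --- so the block converges to $\Var(\mZ)+\Var(\mW)$; the $\mA_D$-block converges to $\Var(Q)+\Var(Y)$ by the same argument with \eqref{VarQ} and \eqref{var:Y}. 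For the $\mA_N$-block the $\mH_{n,3}$ contribution is $O(n^{1-2b_0})\to0$, and the $\mH_{n,5}$ contribution equals $n^{-2b_0}\sum_t\mA_N'\partial\vv_t'/\partial\vtheta\big|_{\vtheta_0}\,\mF_0^{[n]^{-1}}\,\partial\vv_t/\partial\vtheta'\big|_{\vtheta_0}\mA_N$; here I would insert $\mA_N'\partial\vv_t'/\partial\vtheta\big|_{\vtheta_0}=-\mGamma_0'x_t+\mGamma_0'a_\vbeta^0$, note the $I(0)$ remainder scaled by $n^{-b_0}$ is $o_p(n^{-1/2})$ since $b_0>1/2$, use the telescoping relation identifying $n^{-b_0}\mA_N'\partial\vv_t'/\partial\vtheta\big|_{\vtheta_0}$ with $n^{-1/2}\mX_n(t/n)$ up to $o_p(n^{-1/2})$, and invoke the FCLT $\mX_n\Rightarrow\mX$ of lemma \ref{L3:d}, $\mF_0^{[n]}\to\mF_0$ (lemma \ref{L:1c}), and the continuous mapping theorem applied to the Riemann sum $n^{-1}\sum_t\mX_n(t/n)\mF_0^{[n]^{-1}}\mX_n(t/n)'$ to obtain $\int_0^1\mX(r)\mF_0^{-1}\mX'(r)\,\rd r$.

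Finally I would show the off-diagonal blocks vanish and conclude. The blocks pairing $\mA_N$ with $\mA_S$ or $\mA_D$ are normalised by $n^{-b_0-1/2}$; the corresponding unnormalised sums in $\mH_{n,3}$ are $O(n)$, and those in $\mH_{n,5}$ are $O_p(n^{b_0})$ by an Abel-summation bound combined with the orthogonality relations $\mGamma_0'\mF_0^{[n]^{-1}}=\mGamma_0'\mSigma_0^{-1}$ and $\mGamma_0'\mSigma_0^{-1}\vbeta_0=0$ from \eqref{gamma:prop} and lemma \ref{L:1}, so both vanish as $b_0>1/2$; the $\mA_S$-$\mA_D$ block vanishes exactly as in \citet{ChaMilPa2009}, by the asymptotic orthogonality of the innovation-covariance parameters. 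This yields $-\vnu_n^{-1}\mA'\mH_n(\vtheta_0)\mA\vnu_n^{-1'}\dto\mM$ with $\mM$ block diagonal as in \eqref{eq:M}, and $\mM>0$ almost surely because $\int_0^1\mX(r)\mF_0^{-1}\mX'(r)\,\rd r$ is positive definite a.s.\ ($\mX$ being a non-degenerate type II fractional Brownian motion, cf.\ the a.s.\ full-rank statement in lemma \ref{L3:d}) while $\Var(\mZ)+\Var(\mW)$ and $\Var(Q)+\Var(Y)$ are positive definite by identifiability of $(\vbeta,\mSigma)$ and of $b$. The main obstacle is the $\mA_N$-block, i.e.\ the weak convergence of the quadratic functional of the nonstationary fractional derivative process to $\int_0^1\mX\mF_0^{-1}\mX'\,\rd r$ --- which needs the FCLT of lemma \ref{L3:d}, the continuous mapping theorem, and careful control both of the $I(0)$ remainders and of $\mF_0^{[n]}\to\mF_0$ as the state dimension grows with $n$ --- together with verifying that the $\mA_N$-versus-stationary cross-blocks of $-\mH_{n,5}(\vtheta_0)$ really are $o_p(n^{b_0+1/2})$, which is where the $\mGamma_0$-orthogonality and Abel summation are essential.
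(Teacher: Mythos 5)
Your overall strategy coincides with the paper's: expand $\mH_n(\vtheta_0)=\sum_{h=1}^{8}\mH_{n,h}(\vtheta_0)$, show the blocks involving $\mH_{n,3}$ and $\mH_{n,5}$ deliver the limit $\mM$ via lemmas \ref{L:1c}, \ref{L3a}, \ref{L3:b} and the FCLT/continuous mapping argument of lemma \ref{L3:d}, and kill the remaining terms and off-diagonal blocks by rate counting. Your treatment of the $\mA_N$-block and of the cross-blocks matches the paper's, and the cancellations $\mH_{n,1}+\mH_{n,2}=o_p(n)$ and $\mH_{n,3}+\mH_{n,4}=-\mH_{n,3}+o_p(n)$ are correct.

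There is, however, one genuine gap. You dismiss $\mH_{n,6}(\vtheta_0)$ (the paper's $\mC_n^*$ in the $\mA_D$-block) by asserting that it is a partial sum of a martingale difference sequence, ``an $\mathcal{F}_{t-1}$-measurable derivative'' times $\vv_t(\vtheta_0)$, hence $O_p(n^{1/2})$ after projection. But for that MDS argument to deliver $O_p(n^{1/2})$ you need the \emph{second-order} partial derivatives $\bigl(\partial^2/\partial\vtheta\partial\vtheta'\otimes \vv_t(\vtheta)\bigr)\big\rvert_{\vtheta=\vtheta_0}\mA_D$ — in particular $\partial^2 z_t(\vtheta)/\partial b^2$, $\partial^2 \vv_t(\vtheta)/(\partial b\,\partial\vbeta')$ and $\partial^2 \vv_t(\vtheta)/(\partial b\,\partial(\vech\mSigma)')$ at $\vtheta_0$ — to be (asymptotically) stationary with finite second moments. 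Lemma \ref{L3a} only establishes this for the \emph{first} derivatives, and it is not automatic: differentiating $\Delta_+^{b-b_0}$ twice in $b$ produces double harmonic-number sums of the form $\sum_{j}\sum_{k\le j}k^{-1}\sum_{l\le j}l^{-1}\,\Delta_+^{b_0}\vy_{t-j}$ whose coefficients grow like $(\log j)^2$ before the $(1-L)$ telescoping, so summability must be checked. The paper devotes roughly half of its proof to exactly this point, decomposing $(\partial^2 z_t(\vtheta)/\partial b^2)\rvert_{\vtheta=\vtheta_0}$ into three pieces $Z_1,Z_2,Z_3$ and showing each is a stationary filter applied to $\Delta_+^{b_0}\vy_t$ via the digamma identity \eqref{partial:b} and explicit manipulation of the telescoped harmonic sums. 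Without this step the claim that the $\mH_{n,6}$ contribution is $o_p(1)$ after normalisation — and hence that the $\mA_D$-block limit is exactly $\Var(Q)+\Var(Y)$ — is not established.
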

\begin{proof}[Proof of Lemma \ref{ML:2}]
By \eqref{eq:Hessian} we have $\vnu_n^{-1}\mA'\mH_n(\vtheta_0)\mA \vnu_n^{-1'} = \vnu_n^{-1}\mA' \left(\sum_{h=1}^8 \mH_{n,h}(\vtheta_0)\right) \mA \vnu_n^{-1'}$. 
	Starting with the upper-left block the decisive term stems from $H_{n,5}(\vtheta_0)$ such that
	\begin{align*}
		\frac{1}{n^{2b_0} }\mA_N' \mH_n(\vtheta_0)\mA_N &= \frac{-1}{n^{2b_0}}\mA_N' \left(\sum_{t=1}^{n}\frac{\partial \vv_t(\vtheta)'}{\partial\vtheta}\Bigg\rvert_{\vtheta = \vtheta_0} \mF_0^{{[n]}^{-1}} \frac{\partial \vv_t(\vtheta)}{\partial\vtheta'}\Bigg\rvert_{\vtheta = \vtheta_0} \right) \mA_N + o_p(1) \\
		& \dto - \int_{0}^{1} \mX(r) \mF_0^{-1} \mX(r)' \rd r ,
	\end{align*}
	as $n \to \infty$, where the nonstationary term converges due to lemma \ref{L3:d} and the continuous mapping theorem and where $o_p(1)$ accounts for the components in the Hessian matrix that converge to zero in probability.\\
	The upper-middle block is 
	$
	\frac{1}{n^{b_0+0.5}} \mA_N' \mH_n(\vtheta_0) \mA_S = O_p(n^{-1/2})
	$
	since for the components including fractionally integrated processes due to $H_{n,h}(\vtheta_0)$, $h=5,6,7,8$,
	\begin{align*}
		&\mA_N' \left(\sum_{t=1}^{n} \frac{\partial \vv_t(\vtheta)'}{\partial\vtheta}\Bigg\rvert_{\vtheta = \vtheta_0} \mF_0^{{[n]}^{-1}}\frac{\partial \vv_t(\vtheta)}{\partial\vtheta'}\Bigg\rvert_{\vtheta = \vtheta_0}\right) \mA_S = \sum_{t=1}^{n} -\mGamma_0'x_t\mF_0^{{[n]}^{-1}}\vw_t + \vw_t = O_p(n^{b_0}), \\
		&\sum_{t=1}^{n}\left( \mI \otimes \vv_t(\vtheta_0)'\mF_0^{{[n]}^{-1}}\right) \left(\frac{\partial^2}{\partial \vtheta \partial \vtheta'} \otimes \vv_t(\vtheta)\right)\Bigg\rvert_{\vtheta = \vtheta_0} = O_p(n^{b_0}), \\
		&\frac{\partial (\vec \mF^{[n]})'}{\partial \vtheta}\Bigg\rvert_{\vtheta=\vtheta_0}(\mF_0^{{[n]}^{-1}}\otimes \mF_0^{{[n]}^{-1}})\sum_{t=1}^{n}\left(\frac{\partial \vv_t(\vtheta)}{\partial \vtheta'}\Bigg\rvert_{\vtheta=\vtheta_0} \otimes \vv_t(\vtheta_0)\right)= O_p(n^{b_0}),\\
		&\sum_{t=1}^{n}\left(\frac{\partial \vv_t(\vtheta)'}{\partial \vtheta}\Bigg\rvert_{\vtheta=\vtheta_0} \otimes \vv_t(\vtheta_0)'\right)(\mF_0^{{[n]}^{-1}}\otimes \mF_0^{{[n]}^{-1}})\frac{\partial \vec \mF^{[n]}}{\partial \vtheta'}\Bigg\rvert_{\vtheta=\vtheta_0}= O_p(n^{b_0}).
	\end{align*}
	The center-middle block converges to
	$
	\frac{1}{n}\mA_S' \mH_n(\vtheta_0)\mA_S \pto - \Var(\mW) - \Var(\mZ),
	$
as shown in \citet[eq. 56--64]{ChaMilPa2009}. 
	\noindent
	For the last component $\frac{1}{n} \mA_D' \mH_n(\vtheta_0) \mA_D$, due to relevant $H_{n,h}(\vtheta_0)$, $h=3,5,6,7$, we define
\begin{align*}
		\frac{1}{n} {\mA}_D' \mH_n(\vtheta_0)\mA_D &= \mA_n^* + \mB_n^* + \mC_n^* + \mD_n^* + \mD_n^{*'} + o_p(1),
	\end{align*}
	\begin{align*}
		\mA^*_n &= -\frac{1}{2}\mA_D' \left[\frac{\partial (\vec \mF)'}{\partial \vtheta}\Bigg \rvert_{\vtheta = \vtheta_0} (\mF_0^{-1}\otimes \mF_0^{-1}) \frac{\partial \vec \mF}{\partial \vtheta'}\Bigg \rvert_{\vtheta = \vtheta_0}\right] \mA_D + o_p(1)= -\Var(\mQ) + o_p(1),  \\
		\mB^*_n &= -\frac{1}{n} \sum_{t=1}^{n} \mA_D' \left( \frac{\partial \vv_t(\vtheta)'}{\partial\vtheta}\Bigg\rvert_{\vtheta = \vtheta_0} \mF_0^{-1} \frac{\partial \vv_t(\vtheta)}{\partial\vtheta'}\Bigg\rvert_{\vtheta = \vtheta_0}  \right) \mA_D= -\Var(\mY) + o_p(1),  \\
		\mC^*_n &= -\frac{1}{n}\sum_{t=1}^{n} \mA_D' \left( \mI \otimes \vv_t(\vtheta_0)' \mF_0^{-1}\right)\left(\frac{\partial^2}{\partial \vtheta \partial \vtheta'} \otimes \vv_t(\vtheta)\right)\Bigg \rvert_{\vtheta = \vtheta_0}\mA_D=O_p(n^{-1/2}), \\
		\mD^*_n &= \mA_D' \left[\frac{\partial (\vec \mF)'}{\partial \vtheta}\Bigg \rvert_{\vtheta = \vtheta_0} (\mF_0^{-1}\otimes \mF_0^{-1}) \frac{1}{n}\sum_{t=1}^{n} \left( \frac{\partial \vv_t(\vtheta)}{\partial\vtheta'}\Bigg\rvert_{\vtheta = \vtheta_0}\otimes \vv_t(\vtheta_0) \right) \right] \mA_D=O_p(n^{-1/2}).
	\end{align*}
	
	\noindent
	The results for $A_n^*$, $B_n^*$ follow directly from lemma \ref{ML:1} and \eqref{var:Y}. The result for $\mD^{*}_n$ holds since $ \frac{\partial \vv_t(\vtheta)}{\partial b}\big\rvert_{\vtheta = \vtheta_0}$ is stationary and $\mathcal{F}_{t-1}$-measurable, as shown in the proof of lemma \ref{L3a}, such that $\frac{\partial \vv_t(\vtheta)}{\partial b}\big\rvert_{\vtheta = \vtheta_0} \otimes \vv_t(\vtheta_0)$ is a stationary MDS. For $C_n^*$ to hold we require stationarity of $\partial^2 \vv_t(\vtheta)/(\partial b \partial \vbeta')$, $\partial^2 \vv_t(\vtheta)/(\partial b \partial (\vec \mSigma)')$, and $\partial^2 \vv_t(\vtheta)/\partial b^2$ at $\vtheta=\vtheta_0$. Since $\partial \vv_t(\vtheta)/\partial b = (\partial z_t(\vtheta)/\partial b) \vbeta'$ equation \eqref{partial:z2} shows directly that the former two conditions hold, as the partial derivatives w.r.t. $\vbeta'$, $(\vec \mSigma)'$ do not change the persistence of the process. 
		
		\noindent
		For $\partial^2 \vv_t(\vtheta)/\partial b^2$ we decompose $(\partial^2 z_t(\vtheta)/\partial b^2)\rvert_{\vtheta=\vtheta_0} = (\vbeta_0 \mSigma_0^{-1}\vbeta_0)^{-1}\vbeta_0' \mSigma_0^{-1}(Z_1 + Z_2 + Z_3)$, where 
		$Z_1 = (B_+(L, \vtheta_0) (1+\vbeta_0' \mSigma^{-1}_0\vbeta_0)^{-1/2}\Delta_+^{b_0} -1)B_+(L, \vtheta_0) (1-L)\sum_{j=1}^{t-1}\sum_{k=1}^j (\frac{\partial }{\partial b} (b_0-b+k)^{-1}\pi_j(b - b_0 -1) \Delta_+^{b_0}\vy_{t-j})\rvert_{\vtheta=\vtheta_0} $,
		$Z_2 = (B_+(L, \vtheta_0)(1+\vbeta_0'\mSigma_0^{-1}\vbeta_0)^{-1/2}\Delta_+^{b_0}-1)\frac{\partial B_+(L, \vtheta)}{\partial b}  \big\rvert_{\vtheta=\vtheta_0}\sum_{j=1}^{t-1}j^{-1}\Delta_+^{b_0}\vy_{t-j} $, and
		$Z_3 = B_+(L, \vtheta_0)(1+\vbeta_0'\mSigma_0^{-1}\vbeta_0)^{-1/2}  (\frac{\partial}{\partial b}B_+(L, \vtheta)\Delta_+^b)\big\rvert_{\vtheta = \vtheta_0}\sum_{j=1}^{t-1}j^{-1} \Delta_+^{b_0}\vy_{t-j}$. The three different components are obtained by applying the product rule to the partial derivative of \eqref{partial:z2}. 
		
		\noindent
		$Z_2$ is stationary, since the stationary filter $\frac{\partial B_+(L, \vtheta)}{\partial b}\big\rvert_{\vtheta = \vtheta_0}$ applied to a stationary series yields a stationary process, see \eqref{partial:B}. $Z_3$ is stationary, since we can write $(\frac{\partial}{\partial b}B_+(L, \vtheta)\Delta_+^b)\big\rvert_{\vtheta = \vtheta_0} = (\frac{\partial}{\partial b}B_+(L, \vtheta)\Delta_+^{b-b_0})\big\rvert_{\vtheta = \vtheta_0} \Delta_+^{b_0} $ and $(\frac{\partial}{\partial b}B_+(L, \vtheta)\Delta_+^{b-b_0})\big\rvert_{\vtheta = \vtheta_0}$ is a stationary filter, as shown in the proof of lemma \ref{L3a}.
		
		\noindent
		For $Z_1$ it remains to be shown that $(1-L)\sum_{j=1}^{t-1}\sum_{k=1}^j (\frac{\partial }{\partial b} (b_0-b+k)^{-1}\pi_j(b - b_0 -1) \Delta_+^{b_0}\vy_{t-j})\rvert_{\vtheta=\vtheta_0} $ is stationary. From $((\partial/\partial b) (b_0 - b +k)^{-1} \pi_j(b-b_0-1) \Delta_+^{b_0}\vy_{t-j})\rvert_{\vtheta = \vtheta_0} = k^{-2} \pi_j(-1)\Delta_+^{b_0}\vy_{t-j} - (\partial \pi_j(b - b_0 - 1)/\partial b)\rvert_{\vtheta = \vtheta_0} k^{-1} \Delta_+^{b_0}\vy_{t-j}$ together with \eqref{partial:b} it follows 
		$
		(1-L)\sum_{j=1}^{t-1}\sum_{k=1}^j (\frac{\partial }{\partial b} (b_0-b+k)^{-1}\pi_j(b - b_0 -1) \Delta_+^{b_0}\vy_{t-j})\rvert_{\vtheta=\vtheta_0} = (1-L)\sum_{j=1}^{t-1}\sum_{k=1}^{j}{k^{-2}}\Delta_+^{b_0}\vy_{t-j} - (1-L)\sum_{j=1}^{t-1}\sum_{k=1}^{j}k^{-1} \sum_{l=1}^j l^{-1} \Delta_+^{b_0}\vy_{t-j}
		$. 
		The former term is $(1-L)\sum_{j=1}^{t-1}\sum_{k=1}^{j}{k^{-2}}\Delta_+^{b_0}\vy_{t-j}  = \sum_{j=1}^{t-1}j^{-2}\Delta_+^{b_0}\vy_{t-j}$, whereas the latter term is $(1-L)\sum_{j=1}^{t-1}\sum_{k=1}^{j}k^{-1} \sum_{l=1}^j l^{-1} \Delta_+^{b_0}\vy_{t-j} = 
		\sum_{j=1}^{t-1}j^{-2}\Delta_+^{b_0}\vy_{t-j} + 2 \sum_{j=2}^{t-1} \Delta_+^{b_0}\vy_{t-j} j^{-1} \sum_{k=1}^{j-1}k^{-1}$. Hence $(1-L)\sum_{j=1}^{t-1}\sum_{k=1}^{j}{k^{-2}}\Delta_+^{b_0}\vy_{t-j} - (1-L)\sum_{j=1}^{t-1}\sum_{k=1}^{j}k^{-1} \sum_{l=1}^j l^{-1} \Delta_+^{b_0}\vy_{t-j} = - 2 \sum_{j=2}^{t-1}\Delta_+^{b_0}\vy_{t-j} j^{-1} \sum_{k=1}^{j-1}k^{-1}$
		and therefore it is stationary. Thus, $Z_1$ is stationary, such that $(\frac{\partial^2}{\partial \vtheta \partial \vtheta'} \otimes \vv_t(\vtheta))\rvert_{\vtheta = \vtheta_0}A_D$ has finite second moments and is $\mathcal{F}_{t-1}$-measurable. Therefore, it follows directly that $A_D'( \mI \otimes \vv_t(\vtheta_0)' \mF_0^{-1})(\frac{\partial^2}{\partial \vtheta \partial \vtheta'} \otimes \vv_t(\vtheta))\rvert_{\vtheta = \vtheta_0}A_D$ is a stationary MDS, such that the result for $C_n^*$ holds.
\end{proof}

\begin{lemma}\label{ML:3}
	There exists a sequence of invertible normalization matrices $\vmu_n$ such that $\vmu_n \vnu_n^{-1} \to 0$ a.s. and \begin{align*}
		\sup_{\vtheta \in \mTheta_n} \lvert \lvert \vmu_n^{-1} \mA'\left( \mH_n(\vtheta) - \mH_n(\vtheta_0) \right)\mA \vmu_n^{-1'}\rvert \rvert \xrightarrow{p}0,
	\end{align*} 
	where $\mTheta_n=\left\{\vtheta \big| ||\vmu_n' \mA^{-1} (\vtheta - \vtheta_0) || \leq 1\right\}$ is a sequence of shrinking neighborhoods of $\vtheta_0$.
\end{lemma}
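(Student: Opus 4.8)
The plan is to follow \citet[section 8]{ChaMilPa2009}, who verify the analogous condition for the $I(1)$ model within the general framework of \citet[section 11]{Woo1994}. I would first fix the normalization matrix as a block-diagonal matrix growing slightly slower than $\vnu_n$, namely $\vmu_n = \diag\!\left(n^{b_0(1-\epsilon)}\mI_{p-1},\, n^{(1-\epsilon)/2}\mI_{k-p+1}\right)$ for a small $\epsilon>0$ chosen below, so that $\vmu_n\vnu_n^{-1} = \diag(n^{-b_0\epsilon}\mI_{p-1}, n^{-\epsilon/2}\mI_{k-p+1}) \to 0$ and $\mTheta_n$ is the ellipsoid on which $\|\mGamma_0'(\vbeta-\vbeta_0)\| = O(n^{-b_0(1-\epsilon)})$ while all remaining coordinates of $\vtheta-\vtheta_0$ are $O(n^{-(1-\epsilon)/2})$. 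By theorem \ref{th:consistency}, for $n$ large every $\vtheta\in\mTheta_n$ lies in the region where $b-b_0$ is in the asymptotically stationary band, so that — exactly as in the proof of lemma \ref{ML:2}, via the $Z_1,Z_2,Z_3$ decomposition there — $z_t(\vtheta)$, $\partial\vv_t(\vtheta)/\partial b$ and $\partial^2\vv_t(\vtheta)/\partial b^2$ are asymptotically stationary uniformly on $\mTheta_n$.

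Next I would expand $\mH_n(\vtheta)-\mH_n(\vtheta_0)$ by the mean value theorem, coordinate by coordinate, so that $\mA'(\mH_n(\vtheta)-\mH_n(\vtheta_0))\mA$ is a sum of terms $\mA'\big(\partial\mH_n(\bar\vtheta)/\partial\theta_j\big)\mA\,(\theta_j-\theta_{0,j})$ with $\bar\vtheta\in\mTheta_n$. This reduces everything to bounding, after the rotation $\mA$ and the scaling $\vmu_n^{-1}$, the third derivatives of the log likelihood \eqref{ll}; note that $\mF^{[n]}$ and all its $\vtheta$-derivatives stay bounded as $n\to\infty$ by lemma \ref{L:1c}, so only the $\vv_t$-factors matter. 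The decisive structural fact, already exploited for lemma \ref{ML:2}, is that $\vv_t(\vtheta)$ is affine in the single fractional trend $x_t$ (and in the $I(0)$ innovations): hence every partial derivative of $\vv_t(\vtheta)$, of any order and at any $\bar\vtheta\in\mTheta_n$, is of the form ``asymptotically stationary filter applied to $x_t$, premultiplied by a $\mP_x(\bar\vtheta)$-type projection, plus $I(0)$'', with $x_t$ entering only linearly and the fractional trend surviving only under differentiation with respect to $\vbeta$ — differentiation with respect to $\vech\mSigma$ or $b$ returns an $I(0)$ process. Consequently each block of $\vmu_n^{-1}\mA'\big(\partial\mH_n(\bar\vtheta)/\partial\theta_j\big)\mA\vmu_n^{-1'}$ obeys the very rate count that produced $\mM$ in lemma \ref{ML:2}: the $(\mA_N,\mA_N)$ block of the unscaled third derivative is $O_p(n^{2b_0})$ and all mixed or $I(0)$ blocks are of strictly lower order (the $n^{-1/2}\sum_t(\vv_t\vv_t'-\mF_0)$-type pieces being $O_p(n^{1/2})$). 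Since $\vmu_n^{-1}$ replaces the sharp $\vnu_n^{-1}$ by a slightly larger factor and the mean value expansion supplies an extra $|\theta_j-\theta_{0,j}|$, the worst block is $O_p\!\big(n^{-2b_0(1-\epsilon)}\cdot n^{2b_0}\cdot n^{-(1-\epsilon)/2}\big) = O_p\big(n^{2b_0\epsilon-(1-\epsilon)/2}\big)$; taking $\epsilon$ small enough that $2b_0\epsilon<(1-\epsilon)/2$, i.e.\ $\epsilon(4b_0+1)<1$ (feasible since $b_0<3/2$) together with the weaker requirements from the other blocks, makes this and every other block $o_p(1)$. One further checks that on $\mTheta_n$ the nonstationary part of $\vv_t(\bar\vtheta)$ itself is damped, $\mGamma_0'\mP_x(\bar\vbeta)'\vbeta_0 = O(\|\mGamma_0'(\bar\vbeta-\vbeta_0)\|)=O(n^{-b_0(1-\epsilon)})$, so that the $\mH_{n,2}$ and $\mH_{n,4}$ contributions (which contain $\sum_t\vv_t(\bar\vtheta)\vv_t(\bar\vtheta)'$) are controlled with room to spare, and that the terms dropped as $o_p(1)$ in lemma \ref{ML:2} remain negligible under the coarser $\vmu_n^{-1}$ scaling. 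As all these are uniform polynomial-in-$n$ envelopes valid for every $\bar\vtheta\in\mTheta_n$, the supremum over $\mTheta_n$ converges to zero in probability, which is ML3.

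The conceptual content here is light; the obstacle is the bookkeeping — simultaneously tracking (i) the eight summands $\mH_{n,h}$ in \eqref{eq:Hessian}, (ii) the three rotation blocks $\mA_N,\mA_S,\mA_D$ with their different rates in $\vnu_n$, (iii) the direction-dependent size of $\bar\vtheta-\vtheta_0$ on the shrinking set $\mTheta_n$, and (iv) the distinction between $O_p(n)$ sums and $O_p(n^{1/2})$ centred sums — and then verifying that a single $\epsilon$ makes every resulting exponent negative. The one genuinely new input relative to lemma \ref{ML:2} is stationarity of the third $b$-derivatives of $\vv_t(\vtheta)$ on a whole neighbourhood of $\vtheta_0$ rather than just at $\vtheta_0$; this follows by extending the $Z_1,Z_2,Z_3$ argument of the proof of lemma \ref{ML:2}, using that by theorem \ref{th:consistency} the quantity $1+b_0-b$ stays bounded away from $0$ and $1$ on $\mTheta_n$, so that the gamma-function coefficients $\pi_j(b-b_0-1)$ and the filter $B_+(L,\vtheta)$ remain well behaved.
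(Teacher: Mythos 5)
Your overall architecture differs from the paper's in one essential respect: you interpose a second mean-value expansion and reduce ML3 to bounding third derivatives of the log likelihood on $\mTheta_n$, whereas the paper never forms third derivatives. Instead it parametrizes $\mTheta_n$ explicitly (your $\vmu_n=\vnu_n^{1-\epsilon}$ is exactly the paper's $\vnu_n^{1-\gamma}$, and the resulting expressions for $\vbeta,\vech\mSigma,b$ match), derives expansions for the \emph{differences} $\vv_t(\vtheta)-\vv_t(\vtheta_0)$, $\partial\vv_t(\vtheta)'/\partial\vtheta-\partial\vv_t(\vtheta)'/\partial\vtheta\rvert_{\vtheta=\vtheta_0}$ and the second-derivative differences in powers of $n$ times $x_t$ plus (asymptotically) stationary remainders, and then verifies a list of ten scaled convergence statements, one for each nonstationary contribution of the $\mH_{n,h}$. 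The two routes are both standard ways of verifying equicontinuity of the Hessian, and yours is conceptually cleaner; the price is that the third-derivative route needs uniform control of third $b$-derivatives of $\vv_t$ on $\mTheta_n$ (one more layer of the $Z_1,Z_2,Z_3$ computation, which you correctly identify as the new ingredient), while the paper only ever needs second derivatives.

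There is, however, one concrete soft spot in your rate count. You assert that after rotation every block of $\partial\mH_n(\bar\vtheta)/\partial\theta_j$ other than the $(\mA_N,\mA_N)$ one is "of strictly lower order" and that the worst case is $O_p(n^{2b_0\epsilon-(1-\epsilon)/2})$. For blocks built from $\sum_t\partial\vv_t(\bar\vtheta)'/\partial\vbeta\,(\cdots)\,\vv_t(\bar\vtheta)$ (the descendants of $\mH_{n,2}$, $\mH_{n,4}$, $\mH_{n,7}$), a crude envelope bound gives $\sum_t\lvert x_t\rvert=O_p(n^{b_0+1/2})$, and in the $(\mA_S,\mA_S)$ block the available scaling is only $n^{-(1-\epsilon)}$ times the mean-value increment $n^{-(1-\epsilon)/2}$; the product $n^{b_0+1/2-3(1-\epsilon)/2}$ \emph{diverges} for $b_0\geq 1$. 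The bound is rescued only by decomposing $\vv_t(\bar\vtheta)=\vv_t(\vtheta_0)+\bigl(\vv_t(\bar\vtheta)-\vv_t(\vtheta_0)\bigr)$, using the MDS property to get $\sum_t x_t\vv_t(\vtheta_0)'=O_p(n^{b_0})$ (the paper's $\mV_n$), and using the explicit smallness of the difference on $\mTheta_n$ — i.e.\ precisely the difference expansions that constitute the paper's proof. So your scheme is completable, but the step "obeys the very rate count that produced $\mM$ in lemma \ref{ML:2}" is not justified as stated: at $\bar\vtheta\neq\vtheta_0$ the martingale structure is lost and must be recovered via those expansions, which means the third-derivative route does not actually bypass the bookkeeping you were hoping to avoid. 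A minor additional point: the paper also treats the sign of $b-b_0$ asymmetrically (taking $b<b_0$ as the worse case since then $(\Delta_+^b-\Delta_+^{b_0})\vy_t\sim I(b_0-b)$), and on $\mTheta_n$ one has $1+b_0-b\to 1$ rather than being "bounded away from $1$"; neither affects the argument but the first should be acknowledged.
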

\begin{proof}[Proof of Lemma \ref{ML:3}]
First, determine all $\vtheta$'s that fulfill $\mTheta_n=\left\{\vtheta \big| ||\vmu_n' \mA^{-1} (\vtheta - \vtheta_0) || \leq 1\right\}$. 
Analogously to \citet[p.~245]{ChaMilPa2009} we let $\vmu_n = \vnu_n^{1-\gamma}$ for small $\gamma>0$. Further, 
denote the vector of rows $i$ to $j$ of a vector $\vdelta$ by $\vdelta^{(i:j)}$. All $\vtheta\in \mTheta_n$ are given by  those $\vdelta= \mu_n' \mA^{-1} (\vtheta-\vtheta_0)$ for which $|| \vdelta || \leq 1$ holds. Inverting delivers
 \begin{align}
 	\vbeta &= \vbeta_0 + n^{-b_0(1-\gamma)} \mGamma_0\, \vdelta^{(1:p-1)} +  n^{-1/2(1-\gamma)} \frac{\vbeta_0}{(\vbeta_0'\mSigma_0^{-1}\vbeta_0)^{1/2} } \vdelta^{(p)}, \label{eq:b0}\\
 	\vech\mSigma &= \vech\mSigma_0 + n^{-1/2(1-\gamma)}\vdelta^{(p+1:k-1)}, \label{eq:s0}\\
    b &= b_0 + n^{-1/2(1-\gamma)} \vdelta^{(k)}.  \label{eq:g0} 
 \end{align} 
By the properties of the projection matrix $\mP_x$, multiplication of \eqref{eq:b0} by $\mGamma_0' \mSigma_0^{-1}$ and $\frac{\vbeta_0' \mSigma_0^{-1}}{(\vbeta'_0 \mSigma_0^{-1} \vbeta_0)^{1/2}}$ delivers
 \begin{align*}
 	\mGamma_0' \mSigma_0^{-1} (\vbeta - \vbeta_0) &= O\left(n^{-b_0(1-\gamma)} \right), \qquad
 	\frac{\vbeta_0' \mSigma_0^{-1}}{(\vbeta'_0 \mSigma_0^{-1} \vbeta_0)^{1/2}} (\vbeta - \vbeta_0) = O\left(n^{-1/2(1-\gamma)}\right).
 \end{align*}
In \eqref{eq:b0} to \eqref{eq:g0} $\vbeta$, $\mSigma$, and $b$ are marginally smaller or larger than their true values depending on the sign of the elements of $\vdelta$. Note that the sign of $\vdelta^{(k)}$ matters in \eqref{eq:g0}.  Choosing $b \geq b_0$ gives $(\Delta_+^b - \Delta_+^{b_0}) \vy_t \sim I(0)$, whereas $b < b_0$ yields $(\Delta_+^b - \Delta_+^{b_0}) \vy_t \sim I(b_0 - b)$. The latter case is implied by $\vdelta^{(k)}<0$. Thus, $b = b_0 - n^{-1/2(1-\gamma)} |\vdelta^{(k)}|$ covers the more general case and is considered in the following. The results carry over to $b > b_0$ straightforwardly.

\noindent
 For lemma \ref{ML:3} to be satisfied, for the nonstationary components in \eqref{eq:Hessian} involving $H_{n,h}(\vtheta_0)$, $h=5,6,7$, we need to show that 
 \begingroup
 \allowdisplaybreaks
  	{\small
 \begin{align}
 	&\frac{1}{n^{2b_0(1-\gamma)}}\mA_N'  \sum_{t=1}^{n} \left(\frac{\partial \vv_t(\vtheta)'}{\partial \vtheta} - \frac{\partial \vv_t(\vtheta)'}{\partial \vtheta}\Bigg\rvert_{\vtheta = \vtheta_0} \right) \mF_0^{{[n]}^{-1}}\frac{\partial \vv_t(\vtheta)}{\partial \vtheta'}\Bigg\rvert_{\vtheta = \vtheta_0}   \mA_N \pto 0, \label{ML:3a} \\
 	&\frac{1}{n^{2b_0(1-\gamma)}}\mA_N'  \sum_{t=1}^{n} \left(\frac{\partial \vv_t(\vtheta)'}{\partial \vtheta} - \frac{\partial \vv_t(\vtheta)'}{\partial \vtheta}\Bigg\rvert_{\vtheta = \vtheta_0} \right) \mF_0^{{[n]}^{-1}}\left(\frac{\partial \vv_t(\vtheta)}{\partial \vtheta'} - \frac{\partial \vv_t(\vtheta)}{\partial \vtheta'}\Bigg\rvert_{\vtheta = \vtheta_0} \right)  \mA_N \pto 0, \label{ML:3b} \\
 	&\frac{1}{n^{1+\gamma}}\mA_j'  \sum_{t=1}^{n} \left(\frac{\partial \vv_t(\vtheta)'}{\partial \vtheta} - \frac{\partial \vv_t(\vtheta)'}{\partial \vtheta}\Bigg\rvert_{\vtheta = \vtheta_0} \right) \mF_0^{{[n]}^{-1}}\frac{\partial \vv_t(\vtheta)}{\partial \vtheta'}\Bigg\rvert_{\vtheta = \vtheta_0}   \mA_j \pto 0, \label{ML:3c} \\
 	&\frac{1}{n^{1+\gamma}}\sum_{t=1}^{n}\mA_j'   \left(\mI \otimes (\vv_t(\vtheta)' - \vv_t(\vtheta_0)') \mF_0^{{[n]}^{-1}}\right) \left(\frac{\partial^2 }{\partial \vtheta \partial \vtheta'} \otimes \vv_t(\vtheta)\right)\Bigg\rvert_{\vtheta = \vtheta_0}   \mA_j \pto 0, \label{ML:3d} \\
 	&\frac{1}{n^{1+\gamma}}\sum_{t=1}^{n}\mA_j'   \left(\mI \otimes \vv_t(\vtheta_0)'\mF_0^{{[n]}^{-1}}\right) \left[\left(\frac{\partial^2 }{\partial \vtheta \partial \vtheta'} \otimes \vv_t(\vtheta)\right)-\left(\frac{\partial^2 }{\partial \vtheta \partial \vtheta'} \otimes \vv_t(\vtheta)\right)\Bigg\rvert_{\vtheta = \vtheta_0} \right]  \mA_j \pto 0, \label{ML:3e} \\
 	&\mA_j'  \frac{\partial (\vec \mF^{[n]})'}{\partial \vtheta}\Bigg\rvert_{\theta = \theta_0} (\mF_0^{{[n]}^{-1}} \otimes \mF_0^{{[n]}^{-1}}) \frac{1}{n^{1+\gamma}} \sum_{t=1}^{n}\left( \frac{\partial \vv_t(\vtheta)}{\partial \vtheta'} - \frac{\partial \vv_t(\vtheta)}{\partial \vtheta'}\Bigg\rvert_{\vtheta = \vtheta_0} \right) \otimes \vv_t(\vtheta_0) \mA_j \pto 0, \label{ML:3f} \\
 	&\mA_j'  \frac{\partial (\vec \mF^{[n]})'}{\partial \vtheta}\Bigg\rvert_{\theta = \theta_0} (\mF_0^{{[n]}^{-1}} \otimes \mF_0^{{[n]}^{-1}}) \frac{1}{n^{1+\gamma}} \sum_{t=1}^{n}\frac{\partial \vv_t(\vtheta)}{\partial \vtheta'}\Bigg\rvert_{\vtheta = \vtheta_0} \otimes \left( \vv_t(\vtheta) - \vv_t(\vtheta_0)\right) \mA_j \pto 0, \label{ML:3g} \\
 	&\frac{1}{n^{1+\gamma}}\mA_j'  \sum_{t=1}^{n} \left(\frac{\partial \vv_t(\vtheta)'}{\partial \vtheta} - \frac{\partial \vv_t(\vtheta)'}{\partial \vtheta}\Bigg\rvert_{\vtheta = \vtheta_0} \right) \mF_0^{{[n]}^{-1}}\left(\frac{\partial \vv_t(\vtheta)}{\partial \vtheta'} - \frac{\partial \vv_t(\vtheta)}{\partial \vtheta'}\Bigg\rvert_{\vtheta = \vtheta_0} \right)   \mA_j \pto 0, \label{ML:3h} \\
 	&\frac{1}{n^{1+\gamma}}\sum_{t=1}^{n}\mA_j'  \left[\mI \otimes (\vv_t(\vtheta)' - \vv_t(\vtheta_0)')\mF_0^{{[n]}^{-1}}\right] \left[\left(\frac{\partial^2 }{\partial \vtheta \partial \vtheta'} \otimes \vv_t(\vtheta)\right)-\left(\frac{\partial^2 }{\partial \vtheta \partial \vtheta'} \otimes \vv_t(\vtheta)\right)\Bigg\rvert_{\vtheta = \vtheta_0} \right]   \mA_j \pto 0, \label{ML:3i} \\
 	&\mA_j'  \frac{\partial (\vec \mF^{[n]})'}{\partial \vtheta}\Bigg\rvert_{\theta = \theta_0} (\mF_0^{{[n]}^{-1}} \otimes \mF_0^{{[n]}^{-1}})\frac{1}{n^{1+\gamma}} \sum_{t=1}^{n}\left( \frac{\partial \vv_t(\vtheta)}{\partial \vtheta'} - \frac{\partial \vv_t(\vtheta)}{\partial \vtheta'}\Bigg\rvert_{\vtheta = \vtheta_0} \right) \otimes (\vv_t(\vtheta) - \vv_t(\vtheta_0)) \mA_j \pto 0, \label{ML:3j} 
 \end{align}
 }%
 \endgroup
 for $j = S, D$. Analog to \cite{ChaMilPa2009} we only prove convergence of the nonstationary components since the required conditions obviously hold for the stationary terms. Let $\Delta(n^Kx_t)$ denote terms that are of order $n^K$ times $x_t$ or of a lower order. $\tilde{w}_t$ denotes terms that converge from an $I\left(n^{-1/2(1-\gamma)} \right)$ process  to an $I(0)$ process as $n \to \infty$. For the analysis of the convergence rates of the various differences above, $\vbeta_0$ can be rewritten based on \eqref{eq:b0}  as $\vbeta_0 = \vbeta - n^{-b_0(1-\gamma)} \mGamma_0\, \vdelta^{(1:p-1)} -  n^{-1/2(1-\gamma)} \frac{\vbeta_0}{(\vbeta_0'\mSigma_0^{-1}\vbeta_0)^{1/2} } \vdelta^{(p)}$.  To obtain the required convergence rates, iterate this equation by inserting it again for the $\vbeta_0$ in the numerator in the third term.  By denoting $g=  \vdelta^{(p)} (\vbeta_0'\mSigma_0^{-1}\vbeta_0)^{-1/2} $ this leads to
 \begin{align} \label{eq:b0_higher}
 \vbeta_0 & = \vbeta \left(1- n^{-1/2(1-\gamma)} g \right)  - 
 \left(n^{-b_0(1-\gamma)} - n^{-(1/2+b_0)(1-\gamma)}  g \right) 
 \mGamma_0 \vdelta^{(1:p-1)}    + n^{-1+\gamma}g^2 \vbeta_0.  
 \end{align}
Consider the difference $\vv_t(\vtheta) - \vv_t(\vtheta_0)$ first. From theorem \ref{th:1} and \eqref{vbar} and by denoting $\mP = \mI - \frac{\mSigma^{-1} \vbeta\vbeta' }{\vbeta'\mSigma^{-1}\vbeta}$ analogously to \eqref{eq:Px}, one has $\vv_t(\vtheta) - \vv_t(\vtheta_0) = \mP' \vbeta_0 x_t +  B_+(L, \vtheta) \left( \mI - \mP'\right) \Delta_+^b \vy_t + w_t$, where $\vw_t$ denotes some $I(0)$ terms.  Note that $\Delta^b_+ y_t$ in the second term is $I(b-b_0)=I\left(n^{-1/2(1-\gamma)}\right)$ by lemma \ref{lemma:z_t_order} and \eqref{eq:g0} and therefore abbreviated by $\tilde{w}_t$. Since $\mP'\vbeta=0$ and  $b_0>1/2$, inserting \eqref{eq:b0_higher} for $\vbeta_0$ in the first term delivers 
\begin{align} \label{eq:vt0}
\vv_t(\vtheta) - \vv_t(\vtheta_0) & = \Delta\left(n^{-b_0(1-\gamma)} x_t \right) + \Delta\left( n^{-1+\gamma} x_t \right) + \tilde{w}_t + w_t.
\end{align}
For considering the differences in the partial derivatives we start from \eqref{eq:partial_v_partial_beta} derived in the proof of lemma \ref{L3a} and focus on terms driven by $x_t$ 
\begin{align} \nonumber
	\frac{\partial \vv_t(\vtheta)'}{\partial \vbeta} 
	& = - \left[  \frac{\vbeta' \mSigma^{-1} }{\vbeta' \mSigma^{-1}\vbeta } \vbeta_0 \mI + \frac{\mSigma^{-1} }{\vbeta' \mSigma^{-1}\vbeta } \left(I - 2 \frac{\vbeta \vbeta' \mSigma^{-1} }{\vbeta' \mSigma^{-1}\vbeta }  \right) \vbeta_0 \vbeta' \right] x_t + \tilde{\vw}_t + \vw_t. \nonumber 
\intertext{Next insert $\vbeta_0$ from \eqref{eq:b0_higher} and collect terms such that}
	\frac{\partial \vv_t(\vtheta)'}{\partial \vbeta}	& = -  \mP x_t  + \Delta\left( n^{-b_0(1-\gamma)} x_t \right) + \tilde{\vw}_t + \vw_t, \label{eq:partial_v_partial_beta2} \\
	\frac{\partial\vv_t(\vtheta)'}{\partial \vbeta} - \frac{\partial \vv_t(\vtheta)'}{\partial \vbeta}\Bigg \rvert_{\vtheta = \vtheta_0}&=  -\left( \mP - \mP_x \right)x_t + \Delta\left( n^{-b_0(1-\gamma)} x_t \right) +\tilde{w}_t + w_t.  \label{deriv}
\end{align}
Based on \eqref{eq:b0_higher}, one can show that $\mP - \mP_x = O\left( n^{-1/2(1-\gamma)} \right)$ so that \eqref{deriv} can be also written as $\frac{\partial\vv_t(\vtheta)'}{\partial \vbeta} - \frac{\partial \vv_t(\vtheta)'}{\partial \vbeta}\Big \rvert_{\vtheta = \vtheta_0} = \Delta\left( n^{-1/2(1-\gamma)} x_t \right)  + \tilde{\vw}_t + \vw_t$, which directly yields $\frac{\partial\vv_t(\vtheta)'}{\partial \vtheta} - \frac{\partial \vv_t(\vtheta)'}{\partial \vtheta}\Big \rvert_{\vtheta = \vtheta_0} = \Delta\left( n^{-1/2(1-\gamma)} x_t \right)  + \tilde{\vw}_t + \vw_t$. From this result, it follows for the second order partial derivatives that 
\begin{align} \label{deriv:1}
	\frac{\partial^2}{\partial \vtheta \partial \vtheta'}\otimes \vv_t(\vtheta) - \frac{\partial^2}{\partial \vtheta \partial \vtheta'}\otimes \vv_t(\vtheta)\Bigg\rvert_{\vtheta = \vtheta_0} = \Delta(n^{-1/2+\gamma}x_t) + \tilde{w}_t + w_t.
\end{align}
Now we are ready for checking  \eqref{ML:3a} to \eqref{ML:3j}. We begin with \eqref{ML:3a}. Using \eqref{deriv}, the above result on $\mP-\mP_x$, \eqref{eq:A_N_A_S_A_D_nu_n}, and lemma \ref{L3a}, the leading term in \eqref{ML:3a} can be stated 
 \begin{align*}
& n^{2\gamma b_0} \mGamma_0' O\left( n^{-1/2(1-\gamma)} \right) \left(  \frac{1}{n^{2b_0} } \sum_{t=1}^n x_t^2\right) \mF_0^{{[n]}^{-1}} \mP_x' \mGamma_0 = O\left(n^{-1/2 + \gamma(1/2 + 2b_0)} \right) O_p(1) = o_p(1)
\end{align*} 
for small $\gamma$ and where $\frac{1}{n^{2b_0}}\sum_{t=1}^{n}x_t^2 = O_p(1)$ can be shown.
Similarly, \eqref{ML:3b} can be derived.

\noindent
For the remaining equations note that $\frac{\partial \vv_t(\vtheta)'}{\partial \vec \mSigma} = \Delta(n^{-b_0+\gamma}x_t) +\tilde{w}_t+ w_t$ which can be seen directly by plugging \eqref{eq:b0} into the formula for the partial derivative as given in the proof of lemma \ref{L3a}. Furthermore $\frac{\partial \vv_t(\vtheta)'}{\partial b} = \tilde{w}_t+ w_t$. Therefore, from \eqref{eq:partial_v_partial_beta2}, \eqref{eq:A_N_A_S_A_D_nu_n} and by inserting \eqref{eq:b0_higher} for $\vbeta_0$ in the numerator and the properties of the projection matrix $\mP$ one obtains
\begin{align}\label{eq:vtbeta}
\mA_S' \frac{\partial \vv_t(\vtheta)'}{\partial \vbeta}	= \frac{\vbeta_0'}{(\vbeta_0'\mSigma_0^{-1}\vbeta_0)^{1/2}} \frac{\partial \vv_t(\vtheta)'}{\partial \vbeta}	= \Delta\left( n^{-b_0(1-\gamma)} x_t \right) + \tilde{\vw}_t + \vw_t.
\end{align}
The same holds for the second order partial derivatives.  Since $\frac{\partial \vv_t(\vtheta)}{\partial \vtheta'}\big\rvert_{\theta = \theta_0}A_S \sim I(0)$, and since $\frac{\partial \vv_t(\vtheta)}{\partial \vtheta'}\big\rvert_{\theta = \theta_0}A_D$ converges to an $I(0)$ process, as $A_D$ only picks the partial derivative w.r.t.\ $b$, equations \eqref{ML:3c}, \eqref{ML:3e}, and \eqref{ML:3h} follow directly. 
The equations  \eqref{ML:3d}, \eqref{ML:3i} can be shown by using \eqref{eq:vt0}.
 To prove \eqref{ML:3f}, \eqref{ML:3g} and \eqref{ML:3j}, note that
 \begin{align*}
 	\left(\frac{\partial \vv_t(\vtheta)}{\partial \vtheta'} \otimes \vv_t(\vtheta)\right)A_S = \left(\frac{\partial \vv_t(\vtheta)}{\partial \vtheta'} \otimes \vv_t(\vtheta)\right)(A_S \otimes 1) = \frac{\partial \vv_t(\vtheta)}{\partial \vtheta'}A_S \otimes \vv_t(\vtheta).
 \end{align*}
\eqref{ML:3g} the follows from $\frac{\partial \vv_t (\vtheta)}{\partial \vtheta'}\big\rvert_{\theta = \theta_0} A_S = w_t$ and $\frac{\partial \vv_t (\vtheta)}{\partial \vtheta'}\big\rvert_{\theta = \theta_0} A_D = \tilde{w}_t$ together with \eqref{eq:vt0}, whereas \eqref{ML:3f} follows from \eqref{eq:vtbeta} together with $\vv_t(\vtheta_0)=w_t$. Finally, \eqref{ML:3j} follows from \eqref{eq:vtbeta} together with \eqref{eq:vt0}. 
\noindent
This completes the proof for theorem \ref{ML:3}.
\end{proof}

\clearpage
\begin{spacing}{1}
\bibliographystyle{dcu}
\bibliography{literatur.bib}
\end{spacing}
\end{document}